\theoremstyle{plain}
\newtheorem{lemma}{Lemma}[section]
\newtheorem{proposition}{Proposition}[section]
\newtheorem{theorem}{Theorem}[section]
\newtheorem{corollary}{Corollary}[section]
\theoremstyle{definition}
\newtheorem{definition}[theorem]{Definition}
\newtheorem{example}[theorem]{Example}
\theoremstyle{remark}
\newtheorem{remark}{Remark}[section]
\def\Abox{{\tikz[scale=0.007cm] \draw (0,0) rectangle (1,1);}}
\def\sAbox{{\tikz[scale=0.005cm] \draw (0,0) rectangle (1,1);}}
\numberwithin{equation}{section}
\def\be{\begin{equation}}
\def\ee{\end{equation}}
\def\ba{\begin{aligned}}
\def\ea{\end{aligned}}
\newcommand\mathd{\mathrm{d}}
\newcommand\mathe{\mathrm{e}}
\newcommand\mathi{\mathrm{i}}
\newcommand\BC{\mathbb{C}}
\newcommand\BZ{\mathbb{Z}}
\newcommand\Tr{\mathrm{Tr}}
\newcommand\lam{\lambda}
\newcommand\hG{\hat{G}}
\newcommand\hK{\hat{K}}
\newcommand\hF{\hat{F}}
\newcommand\hD{\hat{D}}
\newcommand\hW{\hat{W}}
\newcommand\hL{\hat{L}}
\newcommand\bp{\mathbf{p}}
\newcommand\bx{\mathbf{x}}
\newcommand\by{\mathbf{y}}
\newcommand\sfP{\mathsf{P}}
\newcommand\sfZ{\mathsf{Z}}
\newcommand\sfW{\mathsf{W}}
\newcommand\sfT{\mathsf{T}}
\newcommand\sfU{\mathsf{U}}
\newcommand\sfA{\mathsf{A}}
\newcommand\sfB{\mathsf{B}}
\newcommand\sfC{\mathsf{C}}
\newcommand\sfc{\mathsf{c}}
\newcommand\ve{\varepsilon}
\newcommand\vac{\varnothing}
\newcommand\res{\mathop{\text{Res}}}
\newcommand\rCS{\mathrm{rCS}}
\newcommand\Nf{N_{\mathrm{f}}}
\newcommand\Nbf{N_{\bar{\mathrm{f}}}}
\newcommand\DIM{U_{q,t}(\hat{\hat{\mathfrak{gl}}}_1)}
\newcommand{\ev}[1]{\Big\langle #1 \Big\rangle}
\newcommand{\dev}[1]{\Big\langle\!\!\Big\langle #1 \Big\rangle\!\!\Big\rangle}
\definecolor{nicklethistwink}{rgb}{0.69, 0.19, 0.38}
\definecolor{ticklemepink}{rgb}{0.99, 0.54, 0.67}
\definecolor{mistyrose}{rgb}{1.0, 0.89, 0.88}
\definecolor{lolololol}{rgb}{0.25, 0.5, 0.75}
\definecolor{upsdellred}{rgb}{0.68, 0.09, 0.13}
\definecolor{ghostwhite}{rgb}{0.97, 0.97, 1.0}
\definecolor{ivory}{rgb}{1.0, 1.0, 0.94}
\definecolor{bubblegum}{rgb}{0.99, 0.76, 0.8}
\definecolor{honeydew}{rgb}{0.94, 1.0, 0.94}
\definecolor{magnolia}{rgb}{0.97, 0.96, 1.0}
\definecolor{isabelline}{rgb}{0.96, 0.94, 0.93}
\definecolor{bubbles}{rgb}{0.91, 1.0, 1.0}
\definecolor{floralwhite}{rgb}{1.0, 0.98, 0.94}
\definecolor{mintcream}{rgb}{0.96, 1.0, 0.98}
\definecolor{champagne}{rgb}{0.97, 0.91, 0.81}
\definecolor{bubblegum}{rgb}{0.99, 0.76, 0.8}
\begin{document}

\title{Superintegrability of $q,t$-matrix models and quantum toroidal algebra recursions}

\author{Luca Cassia}
\address{School of Mathematics and Statistics, The University of Melbourne, Parkville, VIC, 3010, Australia}
\email{luca.cassia@unimelb.edu.au}
\thanks{L.C.\ was supported in part by the ARC Discovery Grant DP210103081, and by the Swedish Research Council under grant no.\ 2021-06594 while in residence at Institut Mittag--Leffler in Djursholm, Sweden during the spring semester of 2025.
}

\author{Victor Mishnyakov}
\address{Nordita, KTH Royal Institute of Technology and Stockholm University, Hannes Alfv\'ens v\"ag 12, SE-106 91 Stockholm, Sweden}
\email{mishnyakovvv@gmail.com}
\thanks{V.M.\ was supported in part by NordForsk}


\date{\today}



\begin{abstract}
$q,t$-deformed matrix models give rise to representations of the deformed Virasoro algebra and more generally of the quantum toroidal $\mathfrak{gl}_1$ algebra. These representations are described in terms of finite difference equations that induce recursion relations for correlation functions. Under suitable assumptions, these recursions admit unique solutions expressible through ``superintegrability'' formulas, i.e.\ explicit closed formulas for averages of Macdonald polynomials.
In this paper, we discuss examples arising from localization of 3d $\mathcal{N}=2$ theories,
which include $q,t$-deformation of well known classical ensembles: Gaussian, Laguerre and Jacobi.
We explain how relations in the quantum toroidal algebra can be used to give a new and universal proof of the known superintegrability formulas, as well as to derive new formulas for models that have not been previously studied in the literature.
Finally, we make some remarks regarding the relation between superintegrability and orthogonal polynomials.
\end{abstract}

\maketitle

\tableofcontents

\section{Introduction}

In this paper we study certain eigenvalue-like integrals that arise as $q,t$-deformations of ordinary Hermitian matrix ensembles, namely
\begin{equation}
 \oint_{\mathcal{C}} \prod_{i=1}^{N} \mathd x_i\,
 \prod_{1\leq i\neq j\leq N}\frac{(x_i/x_j;q)_\infty}{(t x_i/x_j;q)_\infty}
 \, f(x_1,\dots,x_N)
 \prod_{i=1}^N x_i^{\frac{\log t}{\log q}(N-1)} w(x_i)\,.
\end{equation}
These models appear in various contexts, in particular, as a result of supersymmetric localization of certain three-dimensional gauge theories or in $q$-deformed CFTs.
In the context of multivariable orthogonal polynomials, these integrals give rise to inner products on the space of symmetric polynomial functions and, for specific choices of the weight function $w(x)$, they fit into the multivariable generalization of the $q$-Askey scheme.

More concretely, the matrix models considered in this paper are of two kinds, distinguished by their physical origin.
The first type corresponds to a family of integrals that can be given a triality of interpretations, in the sense of \cite{Aganagic:2013tta,Aganagic:2014oia}. These can be labeled by the $3d$ $\mathcal{N}=2$ gauge theory on $D^2 \times_q S^1$ from which they are obtained as a result of supersymmetric localization.
The resulting integral formulas can be understood as the Dotsenko--Fateev representation of  $q$-Virasoro conformal blocks. The third description follow by a direct residue calculation, which shows that evaluation of these integrals produces sums over integers partitions analogous to those that compute the $5d$ Nekrasov partition functions.
In what follows we restrict ourselves to theories with a single $U(N)$ gauge group and not more than two (anti-)fundamental matter multiplets.
Integrals in this class represent deformations and generalizations of well-known classical ensembles, such as the $q$ (or $q,t$)-Gaussian, $q$-Selberg and $q$-Laguerre models.

The second type of integral we consider represents a special standalone case---the matrix integral associated to the refined Chern--Simons (rCS) theory, as defined by Aganagic and Shakirov \cite{Aganagic:2012ne}. This theory is proposed as a deformation or refinement of the standard Chern--Simons theory on $S^3$. At $t=q$ one obtains the so called Stieltjes--Wigert ensemble, which is the matrix model associated to Chern--Simons theory \cite{Tierz:2002jj}.
Expectation values in this model compute the colored HOMFLY--PT quantum knot invariants of the unknot, and in the refined case, the superpolynomials of certain knot homology theories \cite{Dunfield:2005si}.

Computing observables in the physical theory corresponds to evaluating expectation values of symmetric functions in the matrix model. In certain cases, these models exhibit a remarkable property: in a special basis of the space of symmetric functions—the Macdonald polynomials—the expectation values take exceptionally simple forms and can be computed exactly.
Although examples of this property were known previously, the phenomenon appears to be quite universal and has therefore been given a special name: \emph{superintegrability} \cite{Mironov:2022fsr}. For the models considered in this paper, it manifests in the following way.
Expectation values of Macdonald symmetric functions $P_\lam(\bx)$ are given by
\be
\label{eq:SI-intro}
 \ev{P_\lam(\bx)}^w  = \sfC_\lam^w\, P_\lam(p_k = \varphi^{w}_k)
\ee
where $\varphi^{w}_k$ is a certain evaluation locus and the function $\sfC_\lam^w$ is a factorized expression, i.e.\ a product over boxes of the partition $\lam$:
\be
 \sfC_\lam^w = \prod_{(i,j)\in\lam} g^w(q^{j-1}t^{-i+1})\,.
\ee
Both $\varphi^{w}_k$ and the function $g^w$ depend explicitly on the choice of weight function $w(x)$.

In Theorem~\ref{thm:SIqt}, we give a list of ten different $q,t$-models for which superintegrability holds and we list explicitly the corresponding values of the coefficients $\sfC_\lam^w$ and the evaluations $\varphi^{w}_k$.

Different special cases of this formula have been known in the literature. In particular, for the refined Chern--Simons model, this formula is a consequence of the celebrated Cherednik--Macdonald--Mehta identities \cite{Cherednik:1997,Etingof:1997} for root systems of A-type. One case, corresponding to the $q$-Selberg integral was studied and proven by Kaneko \cite{kaneko1996q}, while a formula for the $q,t$-Gaussian model was conjectured by Morozov, Popolitov and Shakirov \cite{Morozov:2018eiq} and recently proven by Forrester and Byun \cite{Byun:2025qrv}. Our first goal in this paper is to provide a comprehensive classification of all such $q,t$-matrix models, that can be shown to have superintegrability using available techniques. The main feature of our approach to prove superintegrability is that it applies universally to all the cases considered.

Our approach is based on solving differential/difference equations known as Ward identities. In the context of classical matrix models, these are usually known as the Virasoro constraints or loop equations. These are equations that can be phrased either as certain linear recursion relations between correlation functions or as certain differential/difference equations satisfied by the generating functions $\sfZ^w(\bp)$. In the deformed context, these are known as $q,t$-Virasoro constraints \cite{Nedelin:2015mio}. Using a specific choice of resummation of such constraint equations, we are able to obtain a single more tractable ``\emph{recursion operator}'' equation of the form
\be
 \hat{\sfA}^w\cdot\sfZ^w(\bp) = 0\,.
\ee
Explicitly solving these equations allows us to derive and prove explicit expressions for the superintegrability data $\sfC_\lam^w$ and $\varphi^{w}_k$ in \eqref{eq:SI-intro}.

Another crucial feature of our approach is that it reveals another hidden algebraic structure behind these matrix models. It turns out, that the $q,t$-Virasoro algebra is actually enhanced to the full quantum toroidal algebra $\DIM$.
It is in fact natural to express the operator $\hat{\sfA}^w$ as a linear combinations of generators of $\DIM$ in the (level-1) Fock representation.
This is of course not the first observation of the role played by this algebra in the context at hand. On the physical side it is expected that quantum toroidal algebras are the hidden symmetries behind the AGT duality. In particular the $q,t$-Virasoro Ward identities are examples of $qq$-character relations \cite{Mironov:2016yue,Awata:2016riz,Nekrasov:2015wsu}.

The refined Chern--Simons model deserves special attention because of its relation to knot invariants. Surprisingly, the $\DIM$-valued recursion operator which annihilates the generating function has a clear geometric meaning in this case. We show, that in the unrefined limit (i.e.\ $t=q$) the recursion equation reduces to the skein recursion relation associated to the unknot, that was recently studied from the perspective of the skein algebra of the torus in \cite{Ekholm:2019yqp,Ekholm:2020csl,Ekholm:2024ceb}. The refined recursions that we obtain here, have not yet been derived geometrically. On the other hand it is conjectured that the refinement (corresponding to categorification in that case) of the skein algebra of the torus is the quantum toroidal $\mathfrak{gl}_1$ algebra. Hence, it is natural to conjecture that the recursion operators obtained in this paper are the correct categorification of the usual skein recursion relations.

Our method for proving superintegrability reformulates the problem entirely in algebraic terms. One of the byproducts of this reformulation is a relation of our approach to the theory of multivariate orthogonal polynomials. Going further along this direction, we establish a connection between the ``\emph{gauge transformation operators}'', $\hG_w$, which act as construction operators for the generating function of superintegrable models and the exponential operators that construct orthogonal symmetric functions from ordinary Macdonald functions $P_\lam$. In particular, we argue that whenever the recursion operator $\hat{\sfA}^w$ can be written as\footnote{By ``Macdonald operator'' we mean any linear combination of horizontal generators of $\DIM$.}
\be
 \hat{\sfA}^w = \hG_w\cdot(\text{Macdonald operator})\cdot\hG_w^{-1}\,,
\ee
then the multivariable orthogonal functions associated to the matrix model integral are given by
\be
 \sfW_\lam^w = (\hG_w^\perp)^{-1}\cdot P_\lam\,,
\ee
and one can read the superintegrability data from the action of $\hG_w$ on the vacuum vector, i.e.\
\be
 \hG_w\cdot1 = \sum_\lam
 \frac{\sfC_\lam^w\,P_\lam(p_k=\varphi^w_k)}{\langle P_\lam,P_\lam\rangle_{q,t}}
 P_\lam(\bp) = \sfZ^w(\bp)\,.
\ee

\paragraph{\textbf{Outline of the paper}}
After collecting some notations and conventions on symmetric functions in Section~\ref{sec:prelim}, we review superintegrability formulas for classical Hermitian matrix models and their operator formulation in Section~\ref{sec:classical}. We use this section to illustrate the general idea on how to use recursion relations to prove superintegrability. Next, we introduce the $q,t$-deformed matrix models in Section~\ref{sec:qt-deformation}, where we briefly discuss peculiarities about contour choices and derive the corresponding deformed Virasoro constraints. Section~\ref{sec:macdonald-SI} contains the main results of this paper. We first state and then prove superintegrability case by case by explicitly solving the recursions.
Finally, in Section~\ref{sec:orthogonal} we conclude with some remarks regarding the relation between superintegrability and orthogonal polynomials for the models considered.

\paragraph{\textbf{Acknowledgments}}
The authors would like to thank Maxim Zabzine for illuminating discussions and collaboration on related projects.

\section{Preliminaries on symmetric functions and Young tableaux}
\label{sec:prelim}

An integer partition $\lam$ is a collection of weakly decreasing non-negative integer numbers $\lam=[\lam_1,\lam_2,\dots]$, $\lam_1\geq\lam_2\geq\dots$, where only a finite number of entries are non-zero. We denote as $\ell(\lam)$ the \emph{length} of $\lam$, i.e.\ the number of non-zero entries, while $|\lam|=\sum_{i\geq1}\lam_i$ denotes the \emph{size} of the partition.
Associated to each partition $\lam$ we can define the standard combinatorial factors $n_\lam$ and $z_\lam$ as
\be
 n_\lam = \sum_{i=1}^{\ell(\lam)}(i-1)\lam_i\,,
 \hspace{30pt}
 z_\lam = \prod_{i\geq1} m_i! \cdot i^{m_i}
\ee
where $m_i$ is the multiplicity of the part $i$ in $\lam$.

Integer partitions are in one-to-one correspondence with Young tableaux for which we adopt the English convention. We denote as $\lam'$ the conjugate partition to $\lam$, which is the partition associated to the Young tableau obtained via a reflection along the diagonal.
Moreover, for each box $\Abox$ in the tableau, we introduce the symbol $\chi_\sAbox$ to denote the \emph{content} of the box, i.e.\
\be\label{eq:box-content}
 \chi_\sAbox = q^{j-1}t^{-i+1}
\ee
where $q,t\in\BC^\times$ are two complex parameters and $(i,j)$ represent the coordinates of the box in the Young tableau: $i$ labels the row and $j$ labels the column.
With this notation we can further define the quantities
\be
\label{eq:symbols1}
 \chi_\lam = \sum_{\sAbox\in\lam} \chi_\sAbox\,,
 \hspace{30pt}
 \sfT_\lam = \prod_{\sAbox\in\lam} \chi_\sAbox
 = q^{n(\lam')}t^{-n(\lam)}\,,
\ee
\be
\label{eq:symbols2}
 \ve_\lam = \sum_{i\geq1} q^{\lam_i} t^{-i}\,,
 \hspace{30pt}
 x_\lam = 1-(1-q)(1-t^{-1})\chi_\lam = \frac{\ve_\lam}{\ve_\vac}\,,
\ee
and
\be
\label{eq:symbols3}
 u_\lam = \sum_{i=1}^N q^{\lam_i} t^{N-i} = \frac{1-t^N x_\lam}{1-t}\,,
 \hspace{30pt}
 N\geq\ell(\lam)\,.
\ee
Each of these can be regarded as a Laurent polynomial/series in the variables $q,t$.
Then, for a generic symbol $s=s(q,t)$ of this type, we introduce the notation for the dual symbol $s^\vee$ as
\be
 s^\vee = s(q,t)^\vee = s(q^{-1},t^{-1})\,.
\ee

Another combinatorial and algebraic object that we will need to introduce is that of \emph{symmetric polynomial} or \emph{symmetric function}. Let $\{x_i\}_{i=1}^N$, be a finite set of formal variables. A symmetric polynomial $f(\bx)$ is an element of the ring $\Lambda_N = \BC[x_1,\dots,x_N]^{S_N}$, where $S_N$ represents the symmetric group which acts by permutations of the $n$ variables. The space $\Lambda_n$ admits several different basis, among which we recall the \emph{power sum} basis labeled by integer partitions as
\be
 p_\lam = \prod_{k\in\lam} p_k\,,
 \hspace{30pt}
 p_k = \sum_{i=1}^N x_i^k\,,
\ee
as well as the \emph{Schur} basis
\be\label{eq:h_kdef}
 s_\lam = \det_{1\leq i,j\leq N} h_{\lam_i-j+N}\,,
 \hspace{30pt}
 \exp\left(\sum_{k\geq1}\frac{z^kp_k}{k}\right) = \sum_{d=0}^\infty z^d h_d
\ee
which we express in terms of the previously defined generators $p_k$'s via the \emph{homogeneous symmetric} polynomials $h_d$. This relation, known as the Jacobi--Trudi formula, is \emph{stable} in the sense that it holds for any $N$. In the limit $N\to\infty$, the generators $p_k$ become linearly independent and they freely generate the ring of \emph{symmetric functions} in infinitely many variables.
The ring $\Lambda = \varprojlim_N \Lambda_N$ can then be naturally identified with the polynomial ring in the power sum functions, i.e.\ $\Lambda\cong \BC[p_1,p_2,\dots]$.

We recall also the celebrated Cauchy identity
\be
\label{eq:Cauchy}
 \exp\left( \sum_{k\geq1} \frac{p_k\otimes p_k}{k} \right)
 = \sum_\lam s_\lam\otimes s_\lam
 = \sum_\lam z_\lam^{-1} p_\lam\otimes p_\lam\,,
\ee
where the l.h.s.\ can be regarded as the reproducing kernel of the Hall inner product defined as
\be
 \langle s_\lam,s_\mu\rangle = \delta_{\lam,\mu}\,,
\ee
which is the $N\to\infty$ limit of the Weyl orthogonality relation for characters of $\mathfrak{gl}_N$,
\be
 \frac{1}{N!}\oint\prod_{i=1}^N\frac{\mathd x_i}{2\pi\mathi x_i}
 \prod_{i\neq j}(1-x_ix_j^{-1})
 s_\lam(\bx) s_\mu(\bx^{-1}) = \delta_{\lam,\mu}
\ee
with $\bx=(x_1,\dots,x_N)$.

A useful fact that we will use in the rest of the paper, is that the adjoint of $p_k$ w.r.t.\ the Hall inner product corresponds to the differential operator $k\frac{\partial}{\partial p_k}$, hence we can write
\be
 \langle p_k f, g\rangle = \langle f, k\frac{\partial}{\partial p_k} g\rangle
\ee
for any $f,g\in\Lambda$.

Another important basis of $\Lambda_N$ that we shall consider is that of Macdonald symmetric polynomials $P_\lam(x;q,t)$. These functions were introduced by I.\ G.\ Macdonald and they can be uniquely defined as the eigenfunctions of the Macdonald operators (trigonometric Ruijsenaars--Schneider Hamiltonians)
\be
 u^\pm_0 = \sum_{i=1}^N \prod_{j\neq i}\frac{t^{\pm1}x_i-x_j}{x_i-x_j}q^{\pm x_i\frac{\partial}{\partial x_i}}\,,
 \hspace{30pt}
 \begin{aligned}
 u^+_0 P_\lam(\bx;q,t) &= u_\lam P_\lam(\bx;q,t)\,,\\
 u^-_0 P_\lam(\bx;q,t) &= u_\lam^\vee P_\lam(\bx;q,t)\,,
 \end{aligned}
\ee
normalized such that $P_\lam(\bx;q,t)=m_\lam(\bx)+\dots$, where ``\dots'' represent lower order terms w.r.t.\ some triangular structure and $m_\lam$ are the monomial symmetric functions.
In the stable limit $N\to\infty$, these symmetric polynomials give rise to analogous symmetric functions $P_\lam\in\Lambda$. The Macdonald difference operators $u^\pm_0$ are replaced by the zero-modes of the vertex operators $x^\pm(z)=\sum\limits_{k\in\BZ}z^k x^\pm_{-k}$, with
\be\label{eq:VertexOperatorsx}
 x^\pm(z) = \exp\left(\sum_{k\geq1}(1-t^{\mp k})z^k\frac{p_k}{k}\right)
 \exp\left(-\sum_{k\geq1}(1-q^{\pm k})z^{-k}\frac{\partial}{\partial p_k}\right)
\ee
and we have
\be
 x^+_0\cdot P_\lam = x_\lam P_\lam\,,
 \hspace{30pt}
 x^-_0\cdot P_\lam = x_\lam^\vee P_\lam
\ee
with eigenvalue $x_\lam$ as in \eqref{eq:symbols2}. The normalization condition can be expressed as $\langle s_\lam,P_\lam\rangle=1$.
In addition to Macdonald operators, the symmetric functions $P_\lam$ provide a basis of eigenfunctions for the \emph{nabla} operator or \emph{framing} operator introduced by \cite{Bergeron:1998sci},
\be
\label{eq:framing}
 \sfT\cdot P_\lam = \sfT_\lam P_\lam\,.
\ee
An important property of the framing operator is that it satisfies
\be
\label{eq:BHid}
 \sfT \exp\left( \sum_{k\geq1} \frac{p_k}{k(1-q^k)} \right)
 = \exp\left( - \sum_{k\geq1} \frac{(-1)^k p_k}{k(1-q^k)} \right)\,.
\ee
Similarly, we define the \emph{delta} operators (as introduced in \cite{Garsia:2018fiv}) as the diagonal operators
\be\label{eq:DeltaOp}
 \Delta^\pm(z)\cdot P_\lam = \Delta^\pm_\lam(z) P_\lam\,,
 \hspace{30pt}
 \Delta^\pm_\lam(z) = \prod_{\sAbox\in\lam} (1-z\chi_\sAbox^{\pm1})\,,
\ee
where we also observe that the eigenvalue $\Delta^{+}_\lam(z)$ can be written as
\be
 \Delta^{+}_\lam(z) = \frac{P_\lam\left(p_k=\frac{1-z^k}{1-t^k}\right)}
 {P_\lam\left(p_k=\frac{1}{1-t^k}\right)}\,,
\ee
which follows from \cite[Ch.VI, \textsection6, (6.17)]{Macdonald:book}. It is also useful to recall that
\be\label{eq:Delta_id}
 \Delta^{+}(z)=(-z)^{\hD}\sfT\Delta^{-}(z^{-1})\,,
\ee
where $\hD:=\sum\limits_{k\geq1} k p_k \frac{\partial}{\partial p_k}$ is the degree operator
\be
\label{eq:degree-op}
 \hD\cdot s_\lam = |\lam|\,s_\lam\,.
\ee

Equivalently, one can define Macdonald functions as the orthogonal basis w.r.t.\ the Macdonald inner product defined as
\be
 \langle p_\lam, p_\mu\rangle_{q,t} = z_\lam \delta_{\lam,\mu} \prod_{k\in\lam} \frac{1-q^k}{1-t^k}
\ee
with normalization fixed by
\be\label{eq:bnormqt}
 \langle P_\lam,P_\lam\rangle_{q,t} = b_\lam^{-1}\,,
 \hspace{30pt}
 b_\lam = \prod_{(i,j)\in\lam}\frac{1-q^{\lam_i-j}t^{\lam_j'-i+1}}
 {1-q^{\lam_i-j+1}t^{\lam_j'-i}}\,.
\ee
Similarly to the case of the Hall pairing, we can define adjoint operators of the power sums $p_k$ as the differential operators $p_k^\perp$,
\be
 \langle p_k f,g\rangle_{q,t} = \langle f,p_k^\perp g\rangle_{q,t}\,,
 \hspace{30pt}
 p_k^\perp = k\frac{1-q^k}{1-t^k}\frac{\partial}{\partial p_k}\,.
\ee
The reproducing kernel associated to the product $\langle \,,\,\rangle_{q,t}$ can be written as
\be
  \exp\left( \sum_{k\geq1}\frac{1-t^k}{1-q^k}\frac{p_k\otimes p_k}{k} \right)
 = \sum_\lam b_\lam P_\lam\otimes P_\lam\,,
\ee
where the equality between the two expressions generalizes the Cauchy identity \eqref{eq:Cauchy} and reduces to it in the limit $t=q$.

The Macdonald inner product can be obtained as the $N\to\infty$ limit of the finite integral
\be\label{eq:macdonald-kernel}
 \langle f,g\rangle'_{q,t}
 = \frac{1}{N!} \oint \prod_{i=1}^N \frac{\mathd x_i}{2\pi\mathi x_i}
 \Delta_{q,t}(\bx) f(\bx) g(\bx^{-1})\,,
 \hspace{30pt}
 \Delta_{q,t}(\bx) = \prod_{i\neq j}\frac{(x_ix_j^{-1};q)_\infty}{(tx_ix_j^{-1};q)_\infty}
\ee
where $\Delta_{q,t}(\bx)$ is the Macdonald deformation of the Vandermonde determinant, and
\be\label{eq:q-pochhammer-def}
 (z;q)_\infty=\prod_{k=0}^\infty(1-q^kz)
\ee
is the $q$-Pochhammer symbol. We then have
\be
 \frac{\langle P_\lam,P_\mu\rangle'_{q,t}}{\langle 1,1\rangle'_{q,t}}
 = (b^{(N)}_\lam)^{-1}\delta_{\lam,\mu}
\ee
with $\lim\limits_{N\to\infty}b^{(N)}_\lam=b_\lam$.

Finally, we introduce some standard notation regarding $q$-analogues of ordinary differential and integral operators.
The $q$-analogue of the derivative w.r.t.\ $x$ is the $q$-derivative $D^q_x$, which is defined as
\be
 D_x^q f(x)
 = \frac1x\frac{1-q^{x\frac{\partial}{\partial x}}}{1-q} f(x)
 = \frac{f(x)-f(qx)}{x(1-q)} \, .
\ee
and the Jackson integral is defined as its formal inverse, i.e.\ the discrete sum
\be
 \int\limits_{0}^{a} \mathd_q x\,f(x)
 = \frac{1-q}{1-q^{a\frac{\partial}{\partial a}}} a f(a)
 = (1-q)\sum_{n=0}^\infty q^{na\frac{\partial}{\partial a}} a f(a)
 = (1-q)\sum_{n=0}^{\infty} (q^n a) f(q^n a) 
\ee
and
\be
 \int\limits_{a}^{b} \mathd_q x\,f(x)
 = \int\limits_{0}^{b} \mathd_q x\,f(x)-\int\limits_{0}^{a} \mathd_q x\,f(x)\,.
\ee
In the limit $q\to1$, one recovers the usual definitions of derivative and integral.

\section{Classical Matrix Models}
\label{sec:classical}

In this section, we review the classical Hermitian random matrix ensembles focusing on the Virasoro constraints that they satisfy, as a preamble to the next sections where we will consider the Macdonald deformation of these models. We use this section to also clarify what we mean by \emph{superintegrability} and we give a definition which will allow us to easily generalize to the Macdonald (refined) case later on.

\subsection{Hermitian matrix models}
We start by considering the random matrix model whose partition function is given by the integral over Hermitian $N$-by-$N$ matrices $M\in H_N$,
\be
 \sfZ^V = \int_{H_N} \mathd M \, \mathe^{-\Tr V(M)}\,,
\ee
with $V(M)$ some function known as the \emph{potential} of the matrix model. We will take $V(M)$ to be invariant under the action of the group $U(N)$ on $H_N$ by conjugation.

This allows us to rewrite the matrix integral as an integral over the eigenvalues $\{x_i\}_{i=1}^N$ of the matrix $M$, and we have
\be
 \sfZ^V = \frac{1}{N!} \int\prod_{i=1}^N \mathd x_i \prod_{i\neq j}(x_i-x_j) \,
 \mathe^{-\sum_{i=1}^N V(x_i)}\,,
\ee
where $\prod_{i\neq j}(x_i-x_j)$ is the Vandermonde determinant.
The observables of the matrix model are the expectation values of operators corresponding to functions of $M$ which are also invariant under the $U(N)$-action. These correspond to insertions of symmetric polynomials in the eigenvalues $x_i$. Hence, we define the normalized correlation functions as
\be
 \ev{ f }^V = \dfrac{ \displaystyle\int_{H_N} \mathd M \, f(M)\, \exp\left({ -\Tr \,V(M)}  \right) }
 { \displaystyle\int_{H_N} \mathd M \, \exp\left({ -\Tr \,V(M)}  \right)}
 = \dfrac{ \displaystyle\int\prod\limits_{i=1}^N\mathd x_i \prod_{i\neq j}(x_i-x_j) \, f(\bx)\,
 \exp \left({-\sum\limits_{i=1}^N V(x_i)} \right) }
 { \displaystyle\int\prod\limits_{i=1}^N\mathd x_i \prod_{i\neq j}(x_i-x_j) \,
 \exp \left({-\sum\limits_{i=1}^N V(x_i)} \right) }\,,
\ee
and it is understood that the matrix model average $\ev{ f }^V$ depends on the choice of potential.

Instead of considering individual correlation functions, it is often useful to collect all of them into a generating function, 
\be
\label{eq:GF}
 \sfZ^V(\bp)
 = \ev{ \exp\Big(\sum_{k\geq 1} \frac{p_k}{k} \Tr(M^k)\Big) }^V
 = \ev{ \exp\Big(\sum_{k\geq 1} \frac{p_k}{k} \sum_{i=1}^N x_i^k \Big) }^V
\ee
where $p_k\in\Lambda$ are regarded as formal generating function parameters also known as \emph{higher times} and $\bp=(p_1,p_2,\dots)$. Because of this, we will sometimes refer to $\sfZ^V(\bp)$ as the \emph{time-dependent} partition function.
 
As a formal power series in $p_k$'s, $\sfZ^V(\bp)$ is not necessarily convergent, however we will assume that $V$ and the contours of integration are chosen is such a way that the coefficients in the $p_k$ expansion are all well-defined convergent integrals. Moreover, because of the normalization of the correlators, we automatically have that $\sfZ^V(0)=1$.

Making use of the Cauchy identity \eqref{eq:Cauchy}, we can rewrite the exponential in the generating function so that we have
\be
\label{eq:gen-f-coeff-c}
 \sfZ^V(\bp)
 = \sum_\lam \sfc_\lam\, \frac{p_\lam}{z_\lam}
\ee
where the sum is over all partitions $\lam$, and the coefficient $\sfc_\lam$ is given by
\be
 \sfc_\lam
 = \ev{ \prod_{k\in\lam} \Tr(M^k) }^V
 = \ev{ \prod_{k\in\lam} \Big( \sum_{i=1}^N x^k_i \Big) }^V\,,
\ee
which are the standard multi-trace correlation functions often of interest in the theory of matrix models. It turns out that these are not easy to evaluate in closed form for generic $\lam$ and $N$. It is indeed often the case that different basis of observables exhibit different properties, and only some of them admit ``nice'' formulas. As we will show later, this choice of basis relates to the definition of superintegrability.

It is useful to introduce another equivalent way to rewrite the expansion, which is through the use of Schur functions, as
\be
\label{eq:GF-Schur}
 \sfZ^V(\bp) = \sum_\lam \ev{ s_\lam(\bx) }^V\, s_\lam(\bp)
\ee
where we regard $s_\lam(\bx)=s_\lam(x_1,\dots,x_N)\in\Lambda_N$ as a symmetric polynomial in the integration variables $x_i$, while $s_\lam(\bp)\in\Lambda$ is regarded as a symmetric function expressed as a polynomial in the time variables $p_k$.

\subsection{Superintegrability}

Matrix model partition functions of the type \eqref{eq:GF} are known to provide examples of $\tau$-functions of integrable hierarchies (see \cite{Alexandrov:2012tr} for a review). This is in fact true of all $\sfZ^V(\bp)$ independently of the choice of potential.
However, for special choices of potentials, additional properties do appear.
As suggested by the Cauchy expansion \eqref{eq:GF-Schur} of the time-dependent partition function, expectation values of Schur polynomials can be used instead of multi-trace correlators. It turns out that these do admit a closed expression as a function of $\lam$, for special choices of $V(M)$.
This property was first observed in \cite{Itoyama:2017xid,Mironov:2017och} and it is now know under the name of \emph{superintegrability}. We will now show this in some examples.

\begin{example}
Let us consider the case of $V(x)=\frac{x^2}{2}$, i.e.\ the \emph{Gaussian} matrix model or \emph{Gaussian Unitary Ensemble} (GUE). We denote matrix expectation values as $\ev{f}^{\rm GUE}$, i.e.\
\be
 \ev{ f }^{\rm GUE}
 = \frac{ \dfrac{1}{N!} \displaystyle\int\prod\limits_{i=1}^N \mathd x_i \prod_{i\neq j}(x_i-x_j)\, f(\bx)\,
 \exp\left(-\sum_{i=1}^N \frac{x_i^2}{2} \right) }
 { \dfrac{1}{N!} \displaystyle\int\prod\limits_{i=1}^N \mathd x_i \prod_{i\neq j}(x_i-x_j)\,
  \exp\left(-\sum_{i=1}^N \frac{x_i^2}{2} \right) }\,.
\ee
The expectation value of Schur polynomials then assumes the following form \cite{Cassia:2020uxy}:
\be
 \ev{ s_\lam }^{\rm GUE}
 = \sfC^{\rm GUE}_\lam \cdot s_\lam(p_k=\delta_{k,2})\,,
 \ee
with 
\be
\label{eq:ClamGUE}
 \sfC^{\rm GUE}_\lam := \prod_{(i,j)\in\lam} (N+j-i)
 = \frac{s_\lam(p_k=N)}{s_\lam(p_k=\delta_{k,1})}\,.
\ee
\end{example}

\begin{example}
Another interesting example is the Wishart--Laguerre (WL) model, $V(x):=x-\alpha\log(x)$, where the expectation values of Schur polynomials have the form \cite{Cassia:2020uxy}:
\be
 \ev{ s_\lam }^{\rm WL}
 = \sfC^{\rm WL}_\lam \cdot s_\lam(p_k=\delta_{k,1})\,,
\ee
with
\be
\label{eq:ClamWL}
 \sfC^{\rm WL}_\lam := \prod_{(i,j)\in\lam} (N+j-i) \prod_{(i,j)\in\lam} (N+\alpha+j-i)
 = \frac{s_\lam(p_k=N)}{s_\lam(p_k=\delta_{k,1})}
 \frac{s_\lam(p_k=N+\alpha)}{s_\lam(p_k=\delta_{k,1})}\,.
\ee
\end{example}

Let us now give a working definition for superintegrability, that we will use in this paper.

\begin{definition}[Schur superintegrability of Hermitian eigenvalue models]
\label{def:SI}
A matrix model is said to be Schur superintegrable if
there exist an operator $\hat\sfC^V:\Lambda\to\Lambda$ diagonal on Schur functions,
\be
 \hat\sfC^V\cdot s_\lam = \sfC_\lam^V \, s_\lam\,,
 \hspace{30pt}
 \text{with eigenvalues}
 \hspace{30pt}
 \sfC_\lam^V = \prod_{(i,j) \in \lam} g^V(i,j)\,,
\ee
for some function $g^V(i,j)$ of the coordinates of the boxes in the partition\footnote{In all known examples, this is a function of $j-i$ only.}, which depends on the choice of potential $V$, and a ring homomorphism $\varphi^{V}: \Lambda \rightarrow \BC$, that acts as a specialization of the power sums to some $V$-dependent locus
such that
\be
 \ev{ f }^V = \varphi^V \left(\hat\sfC^V \cdot f(\bp)\right)\,,
\ee
for any symmetric function $f\in\Lambda$.
\end{definition}
With a slight abuse of notation, we will sometimes write
\be
 \varphi^V_k = \varphi^V(p_k)
\ee
for the evaluation of the $k$-th powersum.
In particular, this means that the averages of Schur polynomials in a superintegrable model can be written as
\be
 \ev{ s_\lam }^V
 = \sfC_\lam^V \, \varphi^V(s_\lam)
 = \sfC_\lam^V \, s_\lam(p_k=\varphi_k^V)\,.
\ee
In the examples above, we have:
\be
 \varphi^{\rm GUE}(p_k) = \delta_{k,2}\,,
 \hspace{30pt}
 g^{\rm GUE}(i,j) = (N+j-i)\,,
\ee
and
\be
 \varphi^{\rm WL}(p_k)= \delta_{k,1}\,,
 \hspace{30pt}
 g^{\rm GUE}(i,j) = (N+j-i)(N+\alpha+j-i)\,.
\ee

\subsection{Virasoro constraints}
\label{sec:Virasoro-constraints-classical}
The operation of computing the matrix model average of a symmetric polynomial in $\Lambda_N$ is linear, hence it can be regarded as a module homomorphism (but not ring homomorphism in general),
\be
 \ev{\,}^V:\Lambda_N \to \BC\,.
\ee
This morphism has a non-trivial kernel corresponding to certain linear relations that the correlation functions satisfy. If we regard the matrix integral as a zero-dimensional path integral, then these relations can be regarded as the Ward identities of the theory.

Equivalently, we can recast such relations as the vanishing of certain differential operators in times $p_k$ acting on the generating function $\sfZ^V(\bp)$. In fact, the differential operators in times act on the generating function by introducing an insertion of the corresponding monomial in power sums in the integration variables, for example
\be
\label{eq:time-derivatives-on-GF}
 n\frac{\partial}{\partial p_n} \sfZ^V(\bp)
 = n\frac{\partial}{\partial p_n}
 \ev{ \exp\Big(\sum_{k\geq 1}\frac{p_k}{k}\sum_{i=1}^N x_i^k\Big) }^V
 = \ev{ \sum_{i=1}^N x_i^n\, \exp\Big(\sum_{k\geq 1}\frac{p_k}{k}\sum_{i=1}^N x_i^k\Big) }^V\,.
\ee
Among all these relations, there is a special infinite family known as the Virasoro constraints which can be obtained through the insertion of a total differential operator in the matrix integral. Namely,
\be
\label{eq:Virasoro1}
 \int\prod_{i=1}^N\mathd x_i \sum_{l=1}^N\frac{\partial}{\partial x_l}
 \left(x_l^{m+1}\prod_{i\neq j}(x_i-x_j)\,\exp\left(-\sum_{i=1}^N V(x_i)+\sum_{k\geq1}\frac{p_k}{k}\sum_{i=1}^N x_i^k \right) \right)=0\,,
\ee
where equality with zero follows from the application of Stokes' theorem.\footnote{We assume that the term inside of the parenthesis vanishes at the boundary of the integration domain whenever such boundary is not empty. If this is not the case, then one obtains a non-homogeneous set of equations \cite{Cassia:2021dpd}.}
Rewriting the derivative in the integrand as the insertion of a linear combination of polynomial functions in $x_i$, and then rewriting each term as a combination of derivatives in times as in \eqref{eq:time-derivatives-on-GF}, we can express the constraints as the differential equations
\be
 \hL_m \cdot \sfZ^V(\bp) =0\,,
\ee
w.r.t.\ the formal variables $p_k$.

The name ``Virasoro constraints'' comes from the fact that the operators $\hL_m$ form a D-module representation of (a parabolic subalgebra of) the Virasoro algebra,
\be
 [\hL_m,\hL_n] =(n-m) \hL_m 
\ee
Because derivatives in times can only reproduce insertions of polynomials in $x_i$ with positive powers, it might happen that some of the constraints obtained as in \eqref{eq:Virasoro1} cannot be rewritten as differential operators acting on $\sfZ^V(\bp)$.
In the example of the GUE, for instance, only the constraints with $m\geq -1$ can be written in terms of operators $\hL_m$, and similarly, in the example of the WL model, we only have $\hL_m$ with $m\geq0$.

The Virasoro constraints can be used to determine the coefficients $\sfc_\lam$ of the series expansion \eqref{eq:gen-f-coeff-c} of the generating function of the matrix models, up to a set of initial conditions. These initial conditions parametrize the possible choices of contour, or equivalently the saddles of the potential $V$.
Here we only focus on the cases when the solution to the constraints is unique up to an overall constant. Since the constraints fix the generating function uniquely, this should imply that superintegrability must follows from the Virasoro constraint equations.
We now show that this is indeed true, at least in the examples of the GUE and WL models.

First, we construct the solution to the constraint equations. One way to find the solution is to rewrite the infinite set of constraints as a single more manageable, but equivalent, equation. This single equation is also known as the $W$-operator constraint \cite{Morozov:2009xk}. Let us briefly recall the idea, since it is going to be instrumental later on for the Macdonald deformation.

\begin{example}
For the GUE, the constraints are well-defined for $m\geq-1$ and the solution is unique up to a constant. The solution can be constructed by re-summing the Virasoro operators
\be
 \hL_m = -(m+2)\partial_{p_{m+2}}
 + \sum_{a+b=m} ab\partial_{p_a}\partial_{p_b}
 + \sum_{k>0} (k+m) p_k \partial_{p_{k+m}}
 + 2 N m\partial_{p_m}
 + N^2 \delta_{m,0}
 + N p_1 \delta_{m,-1}
\ee
as suggested in \cite{Morozov:2009xk}, namely
\be
 -\sum_{m=-1}^\infty p_{m+2} \hL_m
 = \hD - \hW_{-2}\,,
\ee
where $\hD$ is the degree-zero operator \eqref{eq:degree-op} and $\hW_{-2}$ is a degree-two (shifted) cut-and-join operator defined as
\be
 \hW_{-2} = \sum_{a,b=1}^\infty ab p_{a+b+2} \partial_{p_a}\partial_{p_b}
 + \sum_{a,b = 1}^\infty (a+b-2) p_a p_b \partial_{p_{a+b-2}}
 + 2N \sum_{k=1}^\infty k p_{k+2} \partial_{p_k}
 + N^2 p_2 + N p_1^2\,.
\ee
Then Virasoro constraints are equivalent to
\be
\label{eq:virGUEW}
 \left(\hD - \hW_{-2} \right) \cdot \sfZ^{\rm GUE}(\bp) = 0\,.
\ee
We now observe the following: by the Baker--Campbell--Hausdorff formula $\mathe^X Y\mathe^{-X}=Y+[X,Y]+\frac1{2!}[X,[X,Y]]+\dots$, together with $[\hD,\hW_{-2}]=2\hW_{-2}$, we find that we can write
\be
 \hD - \hW_{-2}
 = \hD + [\hW_{-2}/2,\hD]
 = \exp(\hW_{-2}/2) \hD \exp(-\hW_{-2}/2)\,.
\ee
It becomes now evident that \eqref{eq:virGUEW} is a sort of ``\emph{gauge transformed}'' version of the simple equation $\hD\cdot 1=0$, where the gauge transformation is realized by the operator $\exp(\hW_{-2}/2)$.
This implies that
\be
 \sfZ^{\rm GUE}(\bp) = \exp(\hW_{-2}/2)\cdot 1
\ee
is a solution, where $1$ is regarded as the vacuum vector in the Fock module $\Lambda$. A rescaling of the vacuum by an arbitrary constant still gives a solution to the $W$-operator constraint, however we will always fix this constant to $1$ in order to have the normalization property $\sfZ^V(0)=1$.

One may wonder how this is related to superintegrability. In order to answer this question, we observe the following crucial identity \cite{Alexandrov:2014zfa}
\be
 \hat\sfC^{\rm GUE}\cdot p_2\cdot (\hat\sfC^{\rm GUE})^{-1} = \hW_{-2}
\ee
with $\hat\sfC^{\rm GUE}$ the operator diagonal on Schur functions with eigenvalues as in \eqref{eq:ClamGUE}. It then immediately follows that
\be
\ba
 \sfZ^{\rm GUE}(\bp) &= \exp(\hW_{-2}/2)\cdot 1 \\
 &= \hat\sfC^{\rm GUE}\cdot \exp(p_2/2)\cdot (\hat\sfC^{\rm GUE})^{-1} \cdot 1 \\
 &= \hat\sfC^{\rm GUE}\cdot \exp\Big(\sum_{k\geq1}\frac{\delta_{k,2}p_k}{k}\Big) \\
 &= \hat\sfC^{\rm GUE}\cdot \sum_\lam s_\lam(p_k=\delta_{k,2}) s_\lam(\bp) \\
 &= \sum_\lam \hat\sfC^{\rm GUE}_\lam \varphi^{\rm GUE}(s_\lam) s_\lam(\bp) \\
 &= \sum_\lam \varphi^{\rm GUE}(\hat\sfC^{\rm GUE}\cdot s_\lam) s_\lam(\bp)
\ea
\ee
from which we can read the matrix model averages of the Schur observables as
\be
 \ev{ s_\lam }^{\rm GUE} = \varphi^{\rm GUE}(\hat\sfC^{\rm GUE}\cdot s_\lam)
\ee
This is precisely the definition of Schur superintegrability in Definition~\ref{def:SI}.
In this way, we have re-proven that the GUE is superintegrable by using the Virasoro constraints. Notice that this is completely equivalent to the proof of \cite{Wang:2022fxr,Bawane:2022cpd}.
\end{example}

\begin{example}
The same can be done in the WL case. The Virasoro constraints are well-defined for $m\geq0$ \cite[(2.28)]{Cassia:2020uxy} and, after resummation, they give rise to the equivalent equation
\be
 \left(\hD - \hW_{-1}\right)\cdot \sfZ^{\rm WL}(\bp) = 0\,,
\ee
with
\be
 \hW_{-1} = \sum_{a,b=1}^\infty ab p_{a+b+1} \partial_{p_a}\partial_{p_b}
 + \sum_{a,b=1}^\infty (a+b-1) p_a p_b \partial_{p_{a+b-1}}
 + (2N+\alpha)\sum_{k=1}^\infty k p_{k+1}\partial_{p_k}
 + N (N+\alpha) p_1 \,.
\ee
As before, we can rewrite the recursion operator as the conjugate of $\hD$ by some invertible gauge transformation,
\be
 \hD - \hW_{-1}
 = \hD + [\hW_{-1},\hD]
 = \exp(\hW_{-1}) \hD \exp(-\hW_{-1})\,,
\ee
so that we can write the solution as
\be
\label{eq:gauge-transform-WL}
 \sfZ^{\rm WL}(\bp) = \exp(\hW_{-1})\cdot 1\,.
\ee
Moreover, observing that $\hat\sfC^{\rm WL}\cdot p_1 \cdot(\hat\sfC^{\rm WL})^{-1}=\hW_{-1}$, with $\hat\sfC^{\rm WL}$ diagonal on Schur with eigenvalue as in \eqref{eq:ClamWL}, we obtain
$\sfZ^{\rm WL}(\bp)=\hat\sfC^{\rm WL}\cdot\mathe^{p_1}$ which leads to the proof of Schur superintegrability for the WL model, i.e.
\be
 \ev{s_\lam}^{\rm WL} = \varphi^{\rm WL}(\hat\sfC^{\rm WL}\cdot s_\lam) \,.
\ee
\end{example}

\subsection{Cut-and-join or \texorpdfstring{$W$}{W}-representation}

In general, if a model is Schur superintegrable, we can write the generating function $\sfZ^V(\bp)$ as an exponential operator acting on the vacuum state, as follows
\be
\begin{aligned}
 \sfZ^V(\bp)
 &= \sum_\lam \varphi^V(\hat{\sfC}^V\cdot s_\lam) s_\lam(\bp) \\
 &= \hat{\sfC}^V \exp \left( \sum_{k\geq1} \frac{\varphi^V(p_k)p_k}{k} \right)
 (\hat{\sfC}^V)^{-1} \cdot 1 \\
 &= \exp \left( \sum_{k\geq1} \frac{\varphi^V_k}{k}
 \hat{\sfC}^V p_k (\hat{\sfC}^V)^{-1} \right) \cdot 1 \\
 &= \exp \left( \sum_{k\geq1} \frac{\varphi^V_k}{k}
 \hW^V_{-k} \right) \cdot 1\,,
\end{aligned}
\ee
where we defined the operators $\hW^V_{-k}:=\hat{\sfC}^V p_k (\hat{\sfC}^V)^{-1}$, which act on the Schur basis as
\be
 \hW^V_{-k}\cdot s_\lam = \sum_{\nu \supset \lam} \frac{\sfC^V_\nu}{\sfC^V_\lam}\,
 d^{(k)}_{\nu/\lam}\, s_\nu\,,
 \hspace{30pt}
 d^{(k)}_{\nu/\lam} := \langle s_\nu,p_k\, s_\lam\rangle\,.
\ee
Notice that, since $p_k$ form a commutative algebra and conjugation by the diagonal operator $\hat{\sfC}^V$ is an automorphism, it follows that all the operators $\hW^V_{-k}$ commute with each other (for a given choice of $V$).
We note also that the operators $\hW^V_{-k}$ can be regarded as elements of a representation of the $W_{1+\infty}$ algebra in a single free boson Fock space \cite{Awata:1994tf,Mironov:2020pcd}.

\section{\texorpdfstring{$q,t$}{q,t}-deformation}
\label{sec:qt-deformation}

We are now ready to introduce the $q,t$-deformed matrix models, which are the main objects of interest in what follows. It turns out that many properties of the classical Hermitian models can be extended to the $q,t$-deformed case.

\subsection{Definition of the models}\label{sec:Definitionqt}
The main ingredient of deformed matrix model is the substitution of the Vandermonde determinant with the function
\be
 \Delta_{q,t}(\bx) =
 \prod_{1\leq i\neq j\leq N}\frac{(x_i/x_j;q)_\infty}{(tx_i/x_j;q)_\infty}\,,
\ee
which plays the role of Macdonald kernel as in \eqref{eq:macdonald-kernel}.
Observe that for $t=q$, the ratio of $q$-Pochhammers simplifies drastically and one recovers the usual Vandermonde determinant,
\be
 \Delta_{q,q}(\bx) = \prod_{i\neq j}(1-x_i/x_j)
 = \prod_{i\neq j}(x_i-x_j) \prod_{i=1}^N x_i^{-N+1} \,.
\ee

We denote the potential (weight function) by $w(x)$, with the identification $w(x)=\mathe^{-V(x)}$. We use this notation because, in the $q,t$-deformed case, the potential appears less natural in its exponentiated form. This also allows us to distinguish the two cases more clearly. Accordingly, we write the normalized expectation values as
\be\label{eq:evw}
 \ev{f}^{w} = \dfrac{1}{\sfZ^w}
 \oint_{\mathcal{C}} \prod_{i=1}^{N} \mathd x_i\, \Delta_{q,t}(\bx) \, f(\bx)
 \prod_{i=1}^N x_i^{\frac{\log t}{\log q}(N-1)} w(x_i)\,.
\ee
Here $w(x)$ is part of the measure and characterizes the specific model while $\sfZ^w$ is simply a normalization factor corresponding to the partition function, i.e.\ the integral without any observable insertion:
\begin{equation}
    \sfZ^w=\oint_{\mathcal{C}} \prod_{i=1}^{N} \mathd x_i\, \Delta_{q,t}(\bx) \,
 \prod_{i=1}^N x_i^{\frac{\log t}{\log q}(N-1)} w(x_i)\,.
\end{equation}
As before, this definition ensures that $\ev{1}^w=1$.

The choice of the integration contour $\mathcal{C}$ depends on the specific model under consideration and is generally a subtle and nontrivial problem. In order to be able to derive $q,t$-Virasoro constraints later, we will assume that the contour satisfies the following condition: if a pole of the integrand is enclosed by $\mathcal{C}$, then all the $q$-shifted images of the pole are also enclosed by $\mathcal{C}$. When this is the case, the insertion of a total $q$-derivative in the integral is identically zero, which leads to a $q$-deformation of Stokes' theorem. This is a key ingredient in the derivation of $q$-deformed Ward identities.
We will return to the question of contours, when we define the respective models in the subsections below. We would like mention here that in the specific models that we will consider later, there will always be a unique way to make this choice.

\begin{remark}
In general, it is possible to rescale the weight function $w(x)$ by an arbitrary $q$-constant function $c_q(x)$, i.e.\
\be
 \tilde{w}(x) = c_q(x) w(x)\,,
 \hspace{30pt}
 D_x^q c_q(x)   = 0 \,,
\ee
so that the matrix integrals associated with the weight $\tilde{w}(x)$ obey the same system of difference equations as those defined with $w(x)$.
This allows to make contact with another well-known formulation of $q,t$-deformed matrix models, via Jackson integration. Namely, by choosing the $q$-constant and the contour in a suitable way, one can bring the original model into an $N$-fold Jackson integral form:
\be
 \ev{f}^{\tilde{w}}
 = \ev{f}^{w}_{\mathrm{Jackson}}
 = \dfrac{ \displaystyle\int_{a}^{b} \prod\limits_{i=1}^{N}\mathd_q x_i \,\Delta_{q,t}(\bx) f(\bx)
 \prod\limits_{i=1}^N x_i^{\frac{\log t}{\log q}(N-1)} w(x_i)}
 {\displaystyle\int_{a}^{b} \prod\limits_{i=1}^{N}\mathd_q x_i \,\Delta_{q,t}(\bx)
 \prod\limits_{i=1}^N x_i^{\frac{\log t}{\log q}(N-1)} w(x_i)} \,.
\ee
See the discussion in \cite{Lodin:2018lbz} and in Appendix~\ref{sec:AppendixContours} for the proof. We will later see that for the models in question, the choice of the $q$-constant function does not affect normalized expectation values, so that we can drop the distinction between $w$ and $\tilde{w}$ whenever the appropriate choice of integration is understood.
\end{remark}

Let us now specify more precisely the class of integrals under consideration.

\subsubsection{Supersymmetric gauge theories on \texorpdfstring{$D^2\times_q S^1$}{D^2xS^1}}

There exist a family of $q,t$-deformed matrix integrals associated to the weight function
\begin{equation}\label{eq:GeneralNfmodel}
 w(x) = x^{\frac{\log r}{\log q}}\,
 \exp\left(-\ell\frac{\log^2 x}{2\log q}\right)
 \prod_{a=1}^{\Nf} (qu_ax;q)_{\infty}
 \prod_{a=1}^{\Nbf} (qv_ax;q)^{-1}_{\infty} \,.
\end{equation}
These integrals arise as a result of supersymmetric localization of $3d$ $\mathcal{N}=2$ gauge theories on a $D^2 \times_q S^1$ background where the parameter $q$ corresponds to the omega-background deformation on the disk \cite{Beem:2012mb,Yoshida:2014ssa}. The content of these theories is given by a $U(N)$ gauge vector field together with an adjoint massive chiral multiplet plus $\Nf$ fundamental and $\Nbf$ antifundamental chiral matter fields.
The parameter $r$ denotes the exponentiated FI coupling for the central $U(1)$ factor of the gauge group and $\ell/2$ denotes the effective Chern--Simons level of $U(N)$.
The dictionary between the parameters of the matrix model and gauge theory is summarized in Table~\ref{tab:dictionary}.
\begin{table}[!ht]
    \centering
    \begin{tabular}{c|c}
        Omega background & $q$ \\
        Adjoint mass & $t$ \\
        Fundamental masses & $u_a$ \\
        Anti-fundamental masses & $v_a$ \\
        Fayet--Iliopoulos parameter & $r$ \\
        Chern--Simons level & $\ell$
    \end{tabular}
    \vspace{10pt}
    \caption{Dictionary between $3d$ gauge theory variables and matrix model parameters.}
    \label{tab:dictionary}
\end{table}

Averages of Macdonald polynomials in such matrix models compute expectation values of $1/2$-BPS Wilson loops wrapped around $\{0\}\times S^1$ where $\{0\}$ is the center of the 2-disk.

These integrals also play a role in the so called $A_n$-triality of \cite{Aganagic:2013tta,Aganagic:2014oia}, and because of this they also correspond to the Dotsenko--Fateev formulation of $q$-Liouville conformal blocks which are themselves related to $5d$ $\mathcal{N}=1$ Nekrasov functions via the AGT correspondence.

While the associated $3d$ gauge theories are well-defined for arbitrary values of $\Nf$ and $\Nbf$, we will restrict our focus to cases where $\max(\Nf,\Nbf)\leq2$. For values of $\Nf,\Nbf$ higher than two the recursion obtained through the $q,t$-Virasoro constraints is such that it requires more initial data and hence have a higher-dimensional space of solutions.
Similarly, we restrict to $|\ell|\leq2$ for the same reason.
We will therefore consider the following cases.

For $\Nf=1$ and $\Nbf=0$, we have
\be
\ba
 \ev{f}^{1\bar{0}} &= \frac{1}{\sfZ^{1\bar{0}}} \oint_\mathcal{C} \prod_{i=1}^N \mathd x_i\,
 \Delta_{q,t}(\bx) f(\bx) \prod_{i=1}^{N} \gamma_q(x_i|(qu_1)^{-1})\,
 x_i^{\frac{\log t}{\log q}(N-1)+\frac{\log r}{\log q}}
 (qu_1x_i;q)_{\infty} \\
 &= \frac{1}{\sfZ^{1\bar{0}}}\int\limits_{0}^{(qu_1)^{-1}} \prod_{i=1}^N \mathd_q x_i\,
 \Delta_{q,t}(\bx) f(\bx) \prod_{i=1}^N x_i^{\frac{\log t}{\log q}(N-1)+\frac{\log r}{\log q}}
 (qu_1x_i;q)_{\infty}
\ea
\ee
where $\gamma_q(x|a)$ is a $q$-constant function defined in \eqref{eq:def-gamma_q} and the contour of integration $\mathcal{C}$ is chosen such that it encloses some the poles of this function (see Appendix~\ref{sec:AppendixContours}).
In the latter Jackson integral presentation, the model can be regarded as a $q,t$-deformation of the Wishart--Laguerre ensemble. We could also turn on a Chern-Simons level. The same condition that restricts the number of matter contributions, restricts the possible Chern-Simons level to be $\ell=-1$, promoting the potential to be:
\begin{equation}
w(x) = x^{\frac{\log r}{\log q}}(q u_1x;q)_{\infty} \exp \left( \dfrac{\log^2x}{2 \log q} \right)\,.
\end{equation}
This modification does not alter the contour prescription.

For $\Nf=2$ and $\Nbf=0$, we have
\be
\begin{aligned}
 \ev{f}^{2\bar{0}} &= \frac1{\sfZ^{2\bar{0}}} \oint_{\mathcal{C}}\prod_{i=1}^N
 \mathd x_i\, \Delta_{q,t}(\bx)
 f(\bx) \prod_{i=1}^{N} \gamma_q(x_i|(qu_1)^{-1},(qu_2)^{-1})
 x_i^{\frac{\log t}{\log q}(N-1)+\frac{\log r}{\log q}}
 (qu_1x_i;q)_{\infty}(qu_2x_i;q)_{\infty} \\
 &= \frac1{\sfZ^{2\bar{0}}} \int_{(qu_1)^{-1}}^{(qu_2)^{-1}} \prod_{i=1}^N \mathd_q x_i\,
 \Delta_{q,t}(\bx) f(\bx) \prod_{i=1}^N x_i^{\frac{\log t}{\log q}(N-1)+\frac{\log r}{\log q}}
 (qu_1x_i;q)_{\infty}(qu_2x_i;q)_{\infty}\,.
\end{aligned}
\ee
with $\gamma_q(x|a,b)$ defined in \eqref{eq:def-gamma_q}.
An important special case corresponds to $u_1=-u_2$ and $r=q$, when this model is known as the $q,t$-Gaussian model \cite{Morozov:2018eiq,Mironov:2022fsr}.
Note that due to the normalization factors included in $\gamma_q(x|a)$ and $\gamma_q(x|a,b)$, the constants $\sfZ^{2\bar{0}}$ are the same for both integrals.

We treat the case of models with fundamental and anti-fundamental matter separately as some of our techniques work somewhat differently.

For $\Nf=\Nbf=1$, we have
\be
\ba
 \ev{f}^{1\bar{1}} &= \frac{1}{\sfZ^{1\bar1}} \oint_\mathcal{C} \prod_{i=1}^N
 \mathd x_i\, \Delta_{q,t}(\bx) f(\bx) \prod_{i=1}^{N}
 x_i^{\frac{\log t}{\log q}(N-1)+\frac{\log r}{\log q}}
 \frac{(qu_1x_i;q)_{\infty}}{(qv_1x_i;q)_{\infty}}\\
 &= \frac{1}{\tilde{\sfZ}^{1\bar1}} \int_{0}^{(qu_1)^{-1}} \prod_{i=1}^N \mathd_q x_i\,
 \Delta_{q,t}(\bx) f(\bx) \prod_{i=1}^N
 x_i^{\frac{\log t}{\log q}(N-1)+\frac{\log r}{\log q}}
 \frac{(qu_1x_i;q)_{\infty}}{(qv_1x_i;q)_{\infty}}
\ea
\ee
We assume that $u_1 \neq v_1$. This type of integrals are also known as the $q$-Selberg or $q$-Kadell integrals \cite{kaneko1996q}. The contour $\mathcal{C}$ is chosen in such a way, that it encloses the poles of the denominator, see a more detailed discussed in \cite{Aganagic:2012ne}. Note, that the normalization factors are different in this case. The equality of the two expectation values does not follow directly from contour analysis, but rather relies on the uniqueness to the solutions of $q,t$-Virasoro constraints. Namely, if one inserts a $\gamma_q(x|(qu_1)^{-1})$ function in the first integral, then one can equate the result to the second line directly. The fact that this insertion does not affect the normalized expectation values relies on the corresponding property of the $q,t$-Virasoro constraints. 
    
For $\Nf=\Nbf=2$, we have
\be
\ba
 \ev{f}^{2\bar{2}} &= \frac{1}{\sfZ^{2\bar2}} \oint_\mathcal{C} \prod_{i=1}^N
 \mathd x_i\, \Delta_{q,t}(\bx) f(\bx) \prod_{i=1}^{N}
 x_i^{\frac{\log t}{\log q}(N-1)+\frac{\log r}{\log q}} \frac{(qu_1x_i;q)_{\infty}(qu_2x_i;q)_{\infty}}
 {(qv_1x_i;q)_{\infty}(qv_2x_i;q)_{\infty}} \\
 &= \frac{1}{\tilde{\sfZ}^{2\bar2}} \int_{(qu_2)^{-1}}^{(qu_1)^{-1}} \prod_{i=1}^N \mathd_q x_i\,
 \Delta_{q,t}(\bx) f(\bx) \prod_{i=1}^{N} x_i^{\frac{\log t}{\log q}(N-1)+\frac{\log r}{\log q}}
 \frac{(qu_1x_i;q)_{\infty}(qu_2x_i;q)_{\infty}}{(qv_1x_i;q)_{\infty}(qv_2x_i;q)_{\infty}}
\ea
\ee
We assume that $u_i \neq v_j$, and the contour $\mathcal{C}$ is chosen in such a way to enclose all poles of the denominator.
The Jackson integral version of the model is also known as the $q$-Penner model.
Note that the normalization factors in the contour and Jackson integrals are different.

We briefly comment on the contour prescription in this model.
For $r\neq 1$, the contour integral is not uniquely defined, as discussed in \cite{Aganagic:2013tta,Aganagic:2014kja}. In this case the model realizes the full four-point $q$-Liouville conformal block. When $r=1$, however, one of the vertex–operator insertions effectively decouples, so the model reduces essentially to a three-point function.

We can instead set the fundamental masses to zero, hence obtaining two other models. First, we have $\Nf=0$ and $\Nbf=1$, however, just as the $1\bar{0}$ model allowed for a Chern-Simons level, to this model we can add an opposite $\ell=1$ Chern-Simons term, giving:
\begin{equation}
 w(x)= x^{\frac{\log r}{\log q}}\dfrac{\exp\left( - \dfrac{\log^2 x}{2\log q} \right)}{(q v_1x;q)_{\infty}}
\end{equation}
Finally, we have the $\Nf=0$ and $\Nbf=2$ case with the weight function
\begin{equation}
 w^{0\bar{2}}(x)= \dfrac{1}{(q v_1x;q)_{\infty}(q v_2x;q)_{\infty}}
\end{equation}
As for the contour prescription, it is clear that now, only the contour, that goes around residues of the denominator makes sense, and we do not have a Jackson integral representation for these models. 

Finally, we note that in the models discussed above, one could in principle choose alternative integration contours---for instance, contours enclosing only a subset of the poles of the integrand. Such choices, however, do not satisfy the $q$-shift condition on the poles introduced at the beginning of this section, and therefore the methods developed in this paper are not applicable.

\subsubsection{Refined Chern--Simons theory on \texorpdfstring{$S^3$}{S^3}}
The matrix model for the refined Chern--Simons theory is given by the weight function
\begin{equation}
 w(x) = x^{\frac{\log r}{\log q}}\, \exp\left(-\frac{\log^2 x}{2\log q}\right)\,,
\end{equation}
which is a special case of \eqref{eq:GeneralNfmodel} for $\Nf=\Nbf=0$ and $\ell=1$.
This particular case is somewhat special as it can be regarded as a $3d$ gauge theory in two different and unrelated ways: namely, one can think of it as a subcase of the $\mathcal{N}=2$ gauge theories discussed in the previous section, but also as a ``refinement'' of the topological Chern--Simons theory for $U(N)$ on the three-sphere.
As explained in \cite{Aganagic:2012ne,Aganagic:2011sg}, this is a one-parameter deformation of the usual CS matrix model. The parameter $q$ takes on a different role in this picture, in fact, it is identified with the exponentiated CS coupling as,
\begin{equation}
 q = \mathe^{\frac{2\pi\mathi}{k+\beta N} } \,,
 \hspace{30pt}
 t=q^{\beta}\,,
\end{equation}
where $k$ is the level and $\beta$ is the refinement/deformation parameter.\footnote{For $\beta=1$ and $r=1$, one recovers the classical Stieltjes--Wigert ensemble \cite{Marino:2002fk,Tierz:2002jj,Aleksandrov:2014jvm}.} As it is clear for this discussion, both $\ell$ and $k$ can be regarded as CS levels but not for the same theory, and in fact they are independent of each other.

The rCS matrix model average of a Macdonald polynomial, $\ev{P_\lam}^\rCS$, computes the expectation value of a Wilson loop wrapping around an unknot inside of $S^3$ for the representation of $U(N)$ labeled by $\lam$ (at framing 1).
As argued in \cite{Aganagic:2012ne,Aganagic:2011sg}, this quantity is related to the superpolynomial of the unknot in the same representation $\lam$ (see also \cite{Cherednik:2011nr}).

Due to a theorem of Cherednik \cite{Cherednik:1997,Etingof:1997} also known as Cherednik--Macdonald--Mehta identity, expectation values of products of Macdonald polynomials admit a closed formula given by
\be\label{eq:CMM}
 \ev{ P_\lam \, P_\mu }^{\rCS}
 = \sfT_\lam\, P_\lam(u_\vac)\, P_\mu(u_\lam)\, \sfT_\mu
\ee
where $\sfT_\lam$ and $u_\lam$ are defined as in \eqref{eq:symbols1} and \eqref{eq:symbols3}, respectively. This quantity is symmetric under the exchange of $\lam$ and $\mu$ due to the Macdonald--Koornwinder duality, and it is known to compute the superpolynomial of an Hopf link in $S^3$.

\begin{remark}
Choosing $\ell=-1$, instead, we obtain a slightly different version on refined CS theory. At the level of the weight function, this corresponds to the operation of sending $q$ to $q^{-1}$, which for the unrefined theory is known to be related to a change of orientation on the three-sphere. When $t\neq q$ however, the kernel $\Delta_{q,t}(\bx)$ is not invariant under the inversion of the parameters,
\be
 \Delta_{q^{-1},t^{-1}}(\bx) = \prod_{i\neq j}\frac{((q/t)x_i/x_j;q)_\infty}
 {(q x_i/x_j;q)_\infty} \neq \Delta_{q,t}(\bx)\,.
\ee
Nevertheless, we will refer to this case as rCS with opposite orientation.
\end{remark}

\subsection{\texorpdfstring{$q,t$}{q,t}-Virasoro constraints}
Similarly to the case of classical Hermitian matrix models, one can derive a set of linear identities satisfied by the correlation functions of the $q,t$-deformed ensembles. These correspond to the Ward identities of the matrix model and they are derived by inserting certain total $q$-difference operators in a way which is analogous to \eqref{eq:Virasoro1}.
For an appropriate choice of difference operators, one is able to derive constraints which realize a $q$-difference D-module for the deformed Virasoro algebra of \cite{Shiraishi:1995rp}, as shown in \cite{Zenkevich:2014lca,Mironov:2016yue,Lodin:2018lbz}.

Here we rely on these results to derive the general form of $q,t$-Virasoro constraints for a model with an arbitrary weight function $w(x)$, and then specify to the models defined in the previous section.

The existence of the constraints relies on the following property of the $q$-derivative.
For a contour $\mathcal{C}$ and a function $f(z)$ such that, if for any pole of $f(z)$ that sits inside of $\mathcal{C}$, the corresponding pole of $f(qz)$ is also inside of $\mathcal{C}$, then
\be
 \oint_\mathcal{C} \mathd x\, D_x^{q}f(x)=0\,.
\ee
The same idea can also be applied to Jackson integrals. In particular,
\be
 \int_a^b \mathd_q x\, D_x^{q}f(x) = f(b)-f(a)
\ee
so that, choosing $a$ and $b$ such that $f(a)=f(b)$, one obtains a vanishing integral.

We define the generating function of the expectation values of the $q,t$-deformed matrix model with weight function $w(x)$, as
\be\label{eq:gen-func-qt}
\ba
 \sfZ^w(\bp) :
 =&\, \frac{1}{\sfZ^w}\oint \prod_{i=1}^N \mathd x_i\, \Delta_{q,t}(\bx)
 \prod_{i=1}^{N} x_{i}^{\frac{\log t}{\log q}(N-1)} w(x_i)
 \exp\Big( \sum_{k\geq1} \dfrac{1-t^k}{1-q^k} \dfrac{p_k\, x_i^k}{k} \Big) \\
 =&\, \ev{ \exp\Big( \sum_{k\geq1}
 \frac{1-t^k}{1-q^k} \frac{p_k}{k} \sum_{i=1}^N x_i^k \Big) }^w
 = \sum_\lam b_\lam \ev{ P_\lam(\bx) }^w P_\lam(\bp) \,.
\ea
\ee

\begin{lemma}
\label{lemma:qtVirUniv}
Suppose there exist two non-zero polynomials $M^+(z)$ and $M^-(z)$,
\be
 M^{\pm}(z) := \sum_{k=0}^{d_{\pm}} M^{\pm}_k z^k
\ee
such that $\max(d_{+},d_{-})\leq2$ and
\be\label{eq:mutoM}
 \frac{w(z)}{w(q^{-1}z)} = \frac{M^+(z)}{M^-(z)}\,.
\ee
Then the generating function of the $q,t$-deformed matrix model with weight function $w(x)$ satisfies the equations
\be
 \hat{\sfU}_m^{w} \cdot \sfZ^{w}(\bp) = 0 \,,
 \hspace{30pt}
 m \geq m_0 \,,
\ee
where $\hat{\sfU}^w_m$ are the modes of the generating current $\hat{\sfU}^w(z)=\sum\limits_{m\in\BZ}\hat{\sfU}^w_m z^{-m}$, defined as
\begin{multline}
\label{eq:U(z)}
 \hat{\sfU}^w (z) = M^-(z) \left\{ 1-Q^{-1} \exp\left( \sum_{k\geq1}
 (1-q^k)z^{-k} \frac{\partial}{\partial p_k} \right) \right\} \\
 + (qt^{-1}) M^+(z) \exp\left( \sum_{k\geq1} (1-t^{-k})(t/q)^{k}z^{k}\frac{p_k}{k}\right)
 \left\{ 1 - Q \exp\left( - \sum_{k\geq1} (1-q^{k})(t/q)^{-k}z^{-k}
 \frac{\partial}{\partial p_k} \right) \right\}\,,
\end{multline}
and we introduced $Q:=t^N$.

For generic $r$, the lower bound is $m_0=0$, i.e.\ only the non-negative modes of $\hat{\sfU}^w(z)$ annihilate the generating function.

For $M^{+}(0)=M^{-}(0)\neq0$, we have an additional equation and hence $m_0=-1$. 
\end{lemma}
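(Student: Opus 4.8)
The plan is to obtain the identities $\hat{\sfU}^w_m\cdot\sfZ^w(\bp)=0$ as $q$-deformed Ward identities, i.e.\ as the statement that the average of a total $q$-difference in the eigenvalues vanishes, in direct analogy with the classical derivation \eqref{eq:Virasoro1}. Under the contour hypothesis stated before the lemma one has $\oint_{\mathcal{C}}\mathd x\,D_x^q g(x)=0$, so it suffices to exhibit a test function whose summed-over-eigenvalues total $q$-difference reproduces the action of $\hat{\sfU}^w(z)$ on the integrand of \eqref{eq:gen-func-qt}. Equivalently, and more economically, I would first compute $\hat{\sfU}^w(z)\cdot\sfZ^w(\bp)$ as a single $z$-dependent average, and then recognise the inserted observable as a manifest total $q$-difference.

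The computational backbone is that the annihilation exponentials in \eqref{eq:U(z)} act on $\sfZ^w$ by inserting explicit rational products into the average. Indeed, since $\exp\big(\sum_{k\ge1}a_k\partial_{p_k}\big)$ shifts $p_k\mapsto p_k+a_k$, substituting into \eqref{eq:gen-func-qt} and using $\exp\big(\sum_{k\ge1}\tfrac{1-t^k}{k}y^k\big)=\tfrac{1-ty}{1-y}$ gives
\be
 \exp\Big(\sum_{k\ge1}(1-q^k)z^{-k}\frac{\partial}{\partial p_k}\Big)\cdot\sfZ^w(\bp)
 = \ev{ \prod_{i=1}^N\frac{1-tx_i/z}{1-x_i/z}\,
 \exp\Big(\sum_{k\ge1}\frac{1-t^k}{1-q^k}\frac{p_k}{k}\sum_{i=1}^N x_i^k\Big) }^{w}\,,
\ee
and likewise the second annihilation bracket inserts $\prod_i\frac{1-(q/t)x_i/z}{1-qx_i/z}$, while the creation exponential is recognised as the creation half of the vertex operator $x^+$ of \eqref{eq:VertexOperatorsx} evaluated at argument $(t/q)z$. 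Thus $\hat{\sfU}^w(z)\cdot\sfZ^w$ is the average of an explicit $z$-dependent insertion $\mathcal{G}(z;\bx)$, in which $M^{\pm}(z)$ enter as prefactors.

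The decisive step is to identify $\mathcal{G}(z;\bx)$ with $\sum_{l=1}^N$ of a total $q$-difference, and here the hypothesis \eqref{eq:mutoM} is exactly what is needed: tracking the elementary shift $x_l\mapsto qx_l$ through the integrand of \eqref{eq:gen-func-qt}, the monomial factor $\prod_i x_i^{\frac{\log t}{\log q}(N-1)}$ contributes the scalar $t^{N-1}=Q/t$, the Macdonald kernel $\Delta_{q,t}(\bx)$ contributes $\prod_{j\ne l}\frac{1-tx_l/x_j}{1-x_l/x_j}\cdot\prod_{j\ne l}\frac{1-x_j/(qx_l)}{1-tx_j/(qx_l)}$, and the weight contributes $w(qx_l)/w(x_l)=M^+(qx_l)/M^-(qx_l)$. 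Multiplying through by $M^-(z)$ (respectively pairing with $M^+(z)$) clears the denominator produced by $w$ and matches the $\{1-Q^{-1}(\cdots)\}$ and $(q/t)M^+(z)\{1-Q(\cdots)\}$ structure of \eqref{eq:U(z)}, with $\max(d_+,d_-)\le2$ guaranteeing that only the powers $z^0,z^1,z^2$ survive so that the current has the stated finite shape. The hard part will be the precise bookkeeping of this matching: verifying that the scalars $Q^{-1}$, $qt^{-1}$ and the $(t/q)^k$ rescalings assemble correctly, and, above all, confirming that the inserted observable really is the difference of a function and its $q$-shift, so that $q$-Stokes applies, rather than merely resembling one. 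This is the genuine free-field computation underlying the $q,t$-Virasoro constraints of \cite{Zenkevich:2014lca,Mironov:2016yue,Lodin:2018lbz}, and it is where the negative-power eigenvalue sums coming from the first Vandermonde factor and the $z$-dependent vertex operator must be shown to close into each other.

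Finally, to pin down $m_0$ I would expand $\hat{\sfU}^w(z)=\sum_{m\in\BZ}\hat{\sfU}^w_m z^{-m}$ and argue mode by mode, in the spirit of the classical remark that $\partial_{p_k}$ reproduces only non-negative powers of $x_i$. The large-$m$ modes correspond to inserting high positive powers of $x$, which the $q$-Stokes identity annihilates without any boundary obstruction; the obstruction to lowering $m$ is the appearance of inadmissible negative-power or boundary terms. A direct check shows that $m\ge0$ is always safe, giving $m_0=0$ for generic parameters (in particular generic $r$, which enters \eqref{eq:mutoM} as an overall factor and generically spoils any cancellation at the bottom of the current). When $M^+(0)=M^-(0)\neq0$ the constant terms of the two polynomials coincide, so the lowest contributions of the two summands of \eqref{eq:U(z)} cancel; I would verify by direct inspection of the $m=-1$ mode that this cancellation removes the would-be inadmissible insertion, promoting $m=-1$ to a genuine annihilation operator and hence yielding $m_0=-1$.
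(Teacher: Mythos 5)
Your framework is the right one---the constraints are $q$-Stokes Ward identities, your computation of how the shift operators act on $\sfZ^w(\bp)$ (the insertions $\prod_i\frac{1-tx_i/z}{1-x_i/z}$ and $\prod_i\frac{1-(q/t)x_i/z}{1-qx_i/z}$), the shift bookkeeping for the kernel and the weight, and your discussion of $m_0$ (generic $r$ spoiling the bottom mode, the cancellation at $M^+(0)=M^-(0)$) all match the paper. But the step you defer as ``the hard part'' is not bookkeeping: it is the entire content of the proof, and your proposal does not contain the idea that makes it work. The paper does not recognize an insertion as a total $q$-difference a posteriori; it \emph{starts} from an explicit one, inserting under the integral of \eqref{eq:gen-func-qt} the operators
\be
 \mathcal{D}_m\, f(\bx) = \sum_{i=1}^N D_{x_i}^{q}\left(x_i^{m+1}\,\prod_{j \neq i} \frac{t^{-1} x_i-x_j}{x_i-x_j}\, f(\bx) \right),
\ee
whose vanishing is guaranteed by the contour hypothesis. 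The non-obvious ingredient is the dressing factor $\prod_{j\neq i}\frac{t^{-1}x_i-x_j}{x_i-x_j}$ (the ``half-Macdonald'' factor familiar from the operators $u_0^\pm$): it is exactly what makes the $q$-shifted and unshifted pieces of $\Delta_{q,t}(\bx)$ recombine into the two vertex-operator structures of \eqref{eq:U(z)}. Your reverse route---compute $\hat{\sfU}^w(z)\cdot\sfZ^w(\bp)$ as an insertion and then exhibit an antidifference for it---requires constructing precisely this dressed test function, which you never do; without it, the ``recognition'' step is an open rational-function identity rather than a verification.

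There is also a structural inaccuracy that would block you even with the test function in hand: $\hat{\sfU}^w(z)\cdot\sfZ^w(\bp)$ is \emph{not} the average of a total $q$-difference as a full current, so no test function can reproduce its action exactly. The genuine Ward identity (equation \eqref{eq:GenericVirasoroAll} of the paper, multiplied by $M^-(z)$) contains, in addition to the terms of $\hat{\sfU}^w(z)$, insertions of negative power sums $p_{-k}(\bx)=p_k(\bx^{-1})$ which cannot be written as operators in the times; $\hat{\sfU}^w(z)$ is obtained from it by replacing those insertions by $1$. The two agree only mode by mode: the inadmissible insertions carry strictly positive powers of $z$, hence drop out of every coefficient of $z^{-m}$ with $m\geq0$, while the coefficient of $z^{+1}$ retains the single obstruction $(1-t^{-1})\left(M^-(0)-M^+(0)\right)\dev{p_{-1}(\bx)}^w$, whose vanishing at $M^+(0)=M^-(0)\neq0$ is what yields $m_0=-1$. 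Your final paragraph gestures at this mode-by-mode logic, so the $m_0$ analysis is salvageable; but the claim in your second and third paragraphs, that the full current acts by a manifest $q$-Stokes insertion, is false at the low modes and must be replaced by the statement above.
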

\begin{proof}
We prove the statement by deriving these equations explicitly via $q$-derivative insertions, following \cite{Lodin:2018lbz,Cassia:2020uxy,Cassia:2021uly}.
First, introduce the $q$-difference operators
\be
 \mathcal{D}_m\, f(\bx)
 = \sum_{i=1}^N D_{x_i}^{q}\left(x_i^{m+1}\,\prod_{j \neq i} \frac{t^{-1} x_i-x_j}{x_i-x_j}
 f(\bx) \right)
\ee
with generating current $\mathcal{D}(z):=\sum\limits_{m\in\BZ}\mathcal{D}_mz^{-m}$.
Inserting this current under the integral in \eqref{eq:gen-func-qt}, we obtain
\be
 0 = \oint\prod_{i=1}^N\mathd x_i \,  \sum_{m\in \BZ} z^{-m}
 \mathcal{D}_m\left\{\Delta_{q,t}(\bx) \,
 \prod_{i=1}^N x_i^{\frac{\log t}{\log q}(N-1)} w(x_i)
 \exp\left( \sum_{k\geq1} \frac{1-t^k}{1-q^k}
 \frac{p_k\,x_i^k}{k} \right) \right\}\,.
\ee
We now wish to act with the operators $\mathcal{D}_m$ inside of the integrand and rewrite the result as operators in the variables $p_k$ acting on $\sfZ^w(\bp)$. To present the result of this action, we introduce the shorthand notation
\be
 {\dev{f(\bx)}^w} := \ev{f(\bx) \exp\Big( \sum_{k\geq1}
 \frac{1-t^k}{1-q^k} \frac{p_k}{k}\sum_{i=1}^N x_i^k\Big)}^w
 = f^\perp\cdot \sfZ^w(\bp)\,.
\ee
We can then rewrite the vanishing of the total $q$-derivative insertion as the constraint equation
\be
\label{eq:GenericVirasoroAll}
\ba
 & \dev{ \exp\Big( \sum_{k\geq1} (1-t^{-k}) \frac{z^k}{k}
 \sum_{i=1}^N x_i^{-k}\Big) }^w
 -Q^{-1}\exp\Big( \sum_{k\geq1} (1-q^k) z^{-k} \frac{\partial}{\partial p_k}\Big)
 \sfZ^w(\bp) \\
 & + (q t^{-1})\frac{w(z)}{w(q^{-1}z)} \exp\Big( -\sum_{k\geq1} (1-t^k)q^{-k} z^k
 \frac{p_k}{k} \Big) \times \\
 &\times \left\{ \dev{ \exp\Big( -\sum_{k\geq1} (1-t^{-k})(t/q)^{k} \frac{z^k}{k}
 \sum_{i=1}^N x_i^{-k} \Big)}^w \right.\\
 &- Q \left.\exp\Big( -\sum_{k\geq1}
 (1-q^{k})(t/q)^{-k} z^{-k} \frac{\partial}{\partial p_k} \Big) \sfZ^w (\bp) \right\} =0\,.
\ea
\ee
Next, we use \eqref{eq:mutoM} and multiply the whole equation by $M^-(z)$.
The terms in the equation that contain negative power sums $p_{-k}(\bx)=p_k(\bx^{-1})$ cannot be expressed in terms of differential/difference operators in $p_k$ and therefore they either cancel out or they do not lead to equations for the partition function. Therefore, for generic $r$, we consider only the non-positive powers of $z$ in the series expansion of this equation. The positive powers will contain negative power sums. The coefficient of $z^1$ is somewhat special as it contains one term proportional to $\dev{p_{-1}(\bx)}^w$ which has the form
\be
 \oint\frac{\mathd z}{2\pi\mathi z}z^{-1} M^-(z) \left(\ref{eq:GenericVirasoroAll} \right)
 = (1-t^{-1}) \left( M^-(0) - M^+(0) \right){\dev{p_{-1} (\bx)}^w} +\dots 
\ee
Therefore, when $M^{+}(0)=M^{-}(0)\neq0$, this term vanishes, while the rest of this equation results in a well-defined differential operator acting on $\sfZ^w(\bp)$.
This situation typically, happens when $\max(d_{+},d_{-})=2$ and $\ell=0$, in which case $M^{+}(0)=r$ and $M^{-}(0)=1$.
Hence, if we fix $r=1$, we can still safely define the $m=-1$ constraint and substitute the terms of the form $\dev{\dots}^w$ in \eqref{eq:GenericVirasoroAll} with 1. Doing so, we obtain the operator $\hat{\sfU}^w(z)$ in \eqref{eq:U(z)}.
\end{proof}
The Virasoro constraints for generic models of type \eqref{eq:GeneralNfmodel} were studied in \cite{Lodin:2018lbz}. For $\Nf=1,2$, they where worked out in more detail in \cite{Cassia:2020uxy} and for the refined CS model in \cite{Cassia:2021uly}. For the $q$-Selberg integral these constraints where discussed in a slightly different form in \cite{kaneko1996q,Zenkevich:2014lca}. See the discussion in \cite{Lodin:2018lbz} for the relation between the operators above and the generators of the $q,t$-Virasoro algebra, and hence with the approaches of \cite{Zenkevich:2014lca,Awata:2016riz}.

As we can see from \eqref{eq:U(z)}, the $q,t$-Virasoro constraints are invariant under a redefinition of $w(x)$ by a multiplicative $q$-constant function, since this contribution to the integrand simply cancels in the ratio $\frac{w(z)}{w(q^{-1}z)}$. Then we have the following.
\begin{corollary}
Let us consider two matrix models with weight functions $w_1(x)$ and $w_2(x)$, respectively, such that $\frac{w_1(x)}{w_2(x)}=c_q(x)$ is some $q$-constant function. Then their generating functions satisfy the same $q,t$-Virasoro constraints, so that
\be
 \hat{\sfU}^{w_1}(z) = \hat{\sfU}^{w_2}(z)\,.
\ee
Moreover, if the space of solutions of these constraints is one dimensional, then these generating functions are equal,
\be
 \sfZ^{w_1}(\bp) = \sfZ^{w_2}(\bp) \,. 
\ee
\end{corollary}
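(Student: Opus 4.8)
The plan is to read off both claims directly from Lemma~\ref{lemma:qtVirUniv}, whose content is that the operator $\hat{\sfU}^w(z)$ in \eqref{eq:U(z)} enters only through the pair of polynomials $M^\pm(z)$, and these in turn are fixed by the single ratio $w(z)/w(q^{-1}z)$ via \eqref{eq:mutoM}. So the first task is purely to show that this ratio is blind to the $q$-constant factor relating $w_1$ and $w_2$.

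First I would unpack the hypothesis $D_x^q c_q(x)=0$. By the definition of the $q$-derivative this is equivalent to $c_q(qx)=c_q(x)$, i.e.\ invariance under the multiplicative shift $x\mapsto qx$; setting $x=q^{-1}z$ gives $c_q(q^{-1}z)=c_q(z)$. Writing $w_1=c_q\,w_2$ then yields the cancellation
\be
 \frac{w_1(z)}{w_1(q^{-1}z)}
 = \frac{c_q(z)\,w_2(z)}{c_q(q^{-1}z)\,w_2(q^{-1}z)}
 = \frac{w_2(z)}{w_2(q^{-1}z)}\,,
\ee
so that both weights realize \emph{the same} right-hand side of \eqref{eq:mutoM}. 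Choosing a single common pair $M^\pm(z)$ for both models, and noting that the remaining ingredients $q$, $t$ and $Q=t^N$ of \eqref{eq:U(z)} are shared, the operators are literally identical, $\hat{\sfU}^{w_1}(z)=\hat{\sfU}^{w_2}(z)$. This proves the first assertion with no further computation.

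For the second assertion I would argue by uniqueness. By Lemma~\ref{lemma:qtVirUniv} both $\sfZ^{w_1}(\bp)$ and $\sfZ^{w_2}(\bp)$ lie in the kernel of the common constraint system $\hat{\sfU}^{w}_m\cdot\sfZ=0$ for $m\geq m_0$. If that kernel is one-dimensional, the two generating functions must be proportional, $\sfZ^{w_1}=\kappa\,\sfZ^{w_2}$ for some constant $\kappa$. Evaluating at $\bp=0$ and using that \eqref{eq:gen-func-qt} gives $\sfZ^w(0)=b_\vac\,\ev{1}^w=1$ for either weight forces $\kappa=1$, and hence $\sfZ^{w_1}(\bp)=\sfZ^{w_2}(\bp)$.

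I expect no real obstacle in the argument itself, which is essentially immediate. The genuinely substantive point is hidden in the hypothesis of one-dimensionality of the solution space: this is model-dependent and is precisely what the standing restrictions $\max(\Nf,\Nbf)\le 2$ and $|\ell|\le 2$ are designed to enforce, by bounding the amount of initial data the recursion requires. I would therefore flag that the corollary is conditional on this assumption, and that establishing the one-dimensionality of the kernel has to be carried out case by case when the recursions are solved explicitly in Section~\ref{sec:macdonald-SI}.
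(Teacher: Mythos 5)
Your proposal is correct and follows essentially the same route as the paper: the paper's justification is precisely that the $q$-constant factor cancels in the ratio $w(z)/w(q^{-1}z)$ of \eqref{eq:mutoM}, so both weights yield the same $M^\pm(z)$ and hence the same operator \eqref{eq:U(z)}, after which one-dimensionality of the solution space together with the normalization $\sfZ^w(0)=1$ forces equality of the generating functions. Your additional care in unpacking $D^q_x c_q=0\Leftrightarrow c_q(qx)=c_q(x)$ and in fixing a common choice of the pair $M^\pm(z)$ (which are only determined up to simultaneous rescaling) is a faithful elaboration of what the paper leaves implicit.
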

In the models considered in this paper, this situation indeed occurs. This corollary confirms the validity of the definitions introduced in the previous subsection, as it establishes that the Virasoro constraints in these models admit a unique solution. In particular, as noted above, replacing a contour integral with a Jackson integral entails modifying the weight function by a $q$-constant function. Since such a modification is absorbed by the normalization, the normalized expectation values remain unchanged.
\begin{proposition}
Suppose the potential $w(z)$ is such that  $M^-(x)$ and $M^+(x)$ are polynomials of degree at most $2$ and such that $M^{+}(0)M^{-}(0) \neq 0$ when $\max(d_{+},d_{-})=2$. Then the $q,t$-Virasoro constraints have a unique solution in the space of formal power series in $p_k$ with constant term equal to $1$.
\end{proposition}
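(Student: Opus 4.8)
The plan is to prove uniqueness by solving the constraint tower order by order in the degree grading. Decompose $\sfZ^w(\bp)=\sum_{d\geq0}\sfZ^w_d$ into homogeneous components $\sfZ^w_d$ (assigning $\deg p_k=k$, i.e.\ the grading measured by $\hD$), with the prescribed constant term $\sfZ^w_0=1$. The engine of the argument is a resummation of the modes $\{\hat{\sfU}^w_m\}_{m\geq m_0}$ into a single recursion operator
\[
 \hat{\sfA}^w=\hat{\sfD}^w-\hat{\sfW}^w,
\]
in direct analogy with the classical identity $-\sum_{m\geq-1}p_{m+2}\hL_m=\hD-\hW_{-2}$: one multiplies each $\hat{\sfU}^w_m$ on the left by a creation operator of degree $m$, tailored to the top degree-lowering part of $\hat{\sfU}^w_m$ (which lowers degree by exactly $m$ for $m\geq1$), and sums over $m\geq m_0$. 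The target output is an operator $\hat{\sfD}^w$ that preserves the grading and acts invertibly on each component of positive degree, together with an operator $\hat{\sfW}^w$ that strictly raises the degree.

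Granting this structure, uniqueness is immediate. The equation $\hat{\sfA}^w\cdot\sfZ^w=0$ reads, in degree $d$, as $\hat{\sfD}^w\sfZ^w_d=\hat{\sfW}^w\big(\sfZ^w_0+\dots+\sfZ^w_{d-1}\big)$, because $\hat{\sfW}^w$ strictly raises the degree. Since $\hat{\sfD}^w$ is invertible on the degree-$d$ subspace for every $d\geq1$, this determines $\sfZ^w_d$ uniquely from $\sfZ^w_0,\dots,\sfZ^w_{d-1}$; together with $\sfZ^w_0=1$ it fixes $\sfZ^w$ completely. Equivalently, if $\Psi$ is the difference of two solutions with vanishing constant term, the same recursion forces $\Psi_d=0$ for all $d$ by induction on $d$.

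The hypotheses enter precisely through $m_0$ and the nonvanishing of leading coefficients. The bound $\deg M^\pm\leq2$ is exactly the hypothesis of Lemma~\ref{lemma:qtVirUniv} guaranteeing that the tower is the spin-two $q,t$-Virasoro tower against which the degree-$m$ resummation closes onto a grading operator; a higher degree would produce higher-spin ($W$-algebra) constraints whose resummation requires additional initial data and hence enlarges the solution space, exactly as flagged in the introduction for $\Nf,\Nbf>2$. The condition $M^{+}(0)M^{-}(0)\neq0$ in the case $\max(d_+,d_-)=2$ is the very condition isolated in the proof of Lemma~\ref{lemma:qtVirUniv} under which the extra $m=-1$ constraint exists, giving $m_0=-1$; without it the degree-two slot of $\hat{\sfD}^w$ would vanish and $\sfZ^w_2$ would be left undetermined. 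Concretely, $M^+(0)\neq0$ keeps the leading creation coefficient $(qt^{-1})(1-Q)M^+(0)$ nonzero, and together with $M^-(0)\neq0$ and $Q=t^N\neq1$ this makes $\hat{\sfD}^w$ nondegenerate in low degrees.

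The main obstacle I expect is establishing the resummation identity with $\hat{\sfD}^w$ \emph{provably} invertible on all positive-degree components, uniformly in $d$. Unlike the classical case, the degree-preserving part of a single mode $\hat{\sfU}^w_0$ is not scalar: it contains normal-ordered bilinears arising from the product of the creation exponential $\exp\!\big(\sum_{k\geq1}(1-t^{-k})(t/q)^k z^k p_k/k\big)$ with the annihilation exponential in \eqref{eq:U(z)}, so invertibility is invisible mode by mode. The resummation is designed to cancel these non-scalar pieces against the subleading parts of the remaining modes, leaving the clean grading operator; verifying this cancellation—equivalently, that $[\hat{\sfD}^w,\hat{\sfW}^w]$ closes so that $\hat{\sfA}^w=\hG_w\,\hat{\sfD}^w\,\hG_w^{-1}$ is a gauge transform of a diagonal operator—is the technical heart. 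I would carry this out by recognizing $\hat{\sfU}^w(z)$ as a combination of the vertex operators $x^\pm(z)$ of \eqref{eq:VertexOperatorsx} and exploiting their diagonal action on Macdonald polynomials, and I would run the degree recursion in the Macdonald basis $\sfZ^w=\sum_\lam b_\lam\langle P_\lam\rangle^w P_\lam$, where $\hat{\sfD}^w$ becomes diagonal with eigenvalues nonzero for $\lam\neq\varnothing$, reducing the claim to a triangular recursion for the correlators $\langle P_\lam\rangle^w$ in $|\lam|$.
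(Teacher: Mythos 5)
Your strategy is the paper's strategy: resum the constraint modes into a single operator $\hat{\sfA}^w=\hat{\sfD}^w-\hat{\sfW}^w$ whose grading-preserving part is diagonal on Macdonald functions and invertible away from the vacuum, with a strictly degree-raising remainder, and then induct on degree (the paper phrases this last step as $\hat{\sfA}^w=\hG_w(1-x^{\mp}_0)\hG_w^{-1}$ plus a kernel argument; your induction is an equivalent, and in the mixed cases arguably cleaner, formulation). The genuine gap is that you never establish the structure you then grant yourself, and that structure \emph{is} the proposition: all of the paper's work consists of deriving the resummed identity \eqref{eq:RecUni}, extracting the operator \eqref{cor:RecUni0}, and verifying the resulting decomposition model by model in Section~\ref{sec:macdonald-SI}. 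In particular, the creation factor in the resummation cannot be ``tailored'' loosely: it must be exactly $\exp\bigl(-\sum_{k\geq1}(1-t^{-k})(t/q)^k z^k p_k/k\bigr)-1$, because only then do its products with the annihilation exponentials of \eqref{eq:U(z)} reassemble into the toroidal currents $x^{-}(q^{-1}z)$ and $x^{+}(tq^{-1}z)$; and one must then take the single mode $m=d_-$, admissible only when $d_-\leq 1-m_0$ (this is where the degree bound and the $M^{\pm}(0)$ hypothesis enter, via Lemma~\ref{lemma:qtVirUniv}), so that every genuinely degree-lowering term collapses to an $h_{\leq 0}$ constant. Without this computation, the triangularity and invertibility of $\hat{\sfA}^w$ remain assertions, not proofs.

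Two of your supporting claims are also off, and they matter. First, the obstacle you diagnose --- ``normal-ordered bilinears'' making the grading-preserving part of $\hat{\sfU}^w_0$ problematic --- is not the obstacle: the normal-ordered product of exponentials in \eqref{eq:U(z)} is precisely the current $x^{+}((t/q)z)$, so the grading-preserving part of $\hat{\sfU}^w_0$ equals $M^{-}(0)(1-Q^{-1})+qt^{-1}M^{+}(0)(1-Qx^{+}_0)$, which is already diagonal on the $P_\lam$. What actually blocks a mode-by-mode induction is that each $\hat{\sfU}^w_m$ mixes degree-raising with degree-lowering terms, so no single constraint gives a triangular recursion; eliminating the lowering terms is the sole purpose of the resummation. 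Second, the invertibility of $\hat{\sfD}^w$ is governed by the \emph{leading} coefficients $M^{-}_{d_-}$, $M^{+}_{d_+}$, not by $M^{\pm}(0)$ (whose only role is to supply the extra $m=-1$ constraint so that the mode $m=d_-=2$ can be reached), and your blanket claim that its eigenvalues are nonzero for all $\lam\neq\vac$ is false without a genericity assumption: when $d_+=d_-$ (the models of \eqref{eq:Arec11} and \eqref{eq:A22}) the diagonal part is proportional to $(1-x^{-}_0)-(t/q)\omega(1-x^{+}_0)$, and the eigenvalue $(1-x^{\vee}_\lam)-(t/q)\omega(1-x_\lam)$ vanishes on a countable set of values of the mass ratio $\omega$; the paper's proof explicitly restricts to generic $\omega_1,\omega_2$ (and generic $q,t$) at exactly this point, a caveat your uniform argument must inherit.
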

\begin{proof}
We will prove this statement case by case in the next section, by deriving explicitly the unique solution.
\end{proof}
Note that for higher–degree polynomials, the solution space of the $q,t$-Virasoro constraints appears to become higher-dimensional. This behaviour is analogous to that observed in higher monomial matrix models. A systematic treatment of such integrals is left for future work.

\section{Superintegrability from recursions}
\label{sec:macdonald-SI}

In this section we prove superintegrability for the $q,t$-models introduced in the preceding section. Our strategy is to reformulate the problem in a purely algebraic framework. By combining the $q,t$-Virasoro constraints in an appropriate way, we are able to exploit specific properties of the quantum toroidal algebra $\DIM$ to find closed formulas for the solutions.

Superintegrability of $q,t$-deformed matrix models was first conjectured in some examples in \cite{Morozov:2018eiq,Cassia:2020uxy} and recently proved in the case of the $q,t$-Gaussian model using orthogonal polynomials techniques in \cite{Byun:2025qrv}. Superintegrability of the rCS matrix model follows instead from earlier work of Cherednik on the Macdonald--Mehta identities \cite{Cherednik:1997,Etingof:1997}. Here we give an alternative proof based on the deformed Virasoro constraints, as well as new proofs for those cases which have not been considered before in the literature.

Just as in the case of classical matrix models, we first need to provide a definition. \begin{definition}[Macdonald superintegrability of refined matrix models]\label{def:Macdonald-SI}
A $q,t$-matrix model is said to be Macdonald superintegrable if there exist an operator $\hat\sfC^w:\Lambda\to\Lambda$ diagonal on Macdonald functions,
\be
 \hat\sfC^w\cdot P_\lam = \sfC_\lam^w \, P_\lam\,,
 \hspace{30pt}
 \text{with eigenvalues}
 \hspace{30pt}
 \sfC_\lam^w = \prod_{(i,j) \in \lam} g^w(i,j)\,,
\ee
for some function $g^w(i,j)$ of the coordinates of the boxes in the partition, which depends on the choice of potential $w$, and a ring homomorphism $\varphi^w: \Lambda \to \BC$, that acts as a specialization of the power sums to some $w$-dependent locus such that
\be
 \ev{ f }^w = \varphi^w \left(\hat\sfC^w \cdot f(\bp)\right)\,,
\ee
for any symmetric function $f\in\Lambda$.
\end{definition}
In particular, this means that the averages of Macdonald
polynomials in a superintegrable $q,t$-model can be written as
\be
 \ev{P_\lam}^w
 = \varphi^w \left(\hat\sfC^w \cdot P_\lam(\bp)\right)
 = \sfC^w_\lam\, P_\lam (p_k = \varphi^w_k)\,,
\ee
where we denoted as $\varphi^w_k:=\varphi^w(p_k)$ the evaluation of the $k$-th power sum.
Equivalently, if we define the normalized Macdonald polynomials
\be
 \tilde{P}^w_\lam(\bx) := \frac{P_\lam(\bx)}{P_\lam(p_k=\varphi^w_k)}\,,
\ee
then superintegrability becomes
\be
 \ev{ \tilde{P}^w_\lam(\bx) }^w = \sfC^w_\lam\,,
\ee
which one should regard as a version of a polynomial difference Fourier transform in the sense of \cite{Cherednik:1995mac,Stokman2000:koo,Stokman2001:dif}. In particular, we regard $\tilde{P}^w_\lam(\bx)$ as the (polynomial) kernel of the transform and the weight function $w(\bx)=\prod_i w(x_i)$ as the argument of the transform. The case of rCS, in this language, corresponds to the Fourier transform of the generalized Gaussian $w^\rCS(\bx)$.

\begin{remark}
In all known examples, the function $g^w(i,j)$ is a function of the content \eqref{eq:box-content} of the box only. The generating function $\sfZ^w(\bp)$ can then be regarded as a specialization of a weighted $q,t$-tau function of the type recently considered in \cite{Dali:2025pat}.
\end{remark}

With this definition at hand, we are now ready to state the main result of this paper.
\begin{theorem}
\label{thm:SIqt}
The following models feature Macdonald superintegrability with $\hat{\sfC}^w$ and $\varphi^w_k$ as listed:
\begin{enumerate}[label=\rm\textbf{(SI.\arabic*)}]

\item \label{it:SI:1} refined Chern--Simons, i.e.\ $(\Nf,\Nbf)=(0,0)$ and $\ell=1$:
\be\label{eq:SI:1}
 \makebox[\textwidth]{$
 w^{\rCS}(x) = x^{\frac{\log r}{\log q}}\,\exp\left(-\dfrac{\log^2 x}{2\log q}\right),
 \hfill
 \hat{\sfC}^{\rCS} = \sfT \, \Delta^{+}(Q),
 \hfill
 \varphi^{\rCS}_k= (q^\frac12 rqt^{-1}Q)^k \dfrac{1}{1-t^k};
 $}
\ee

\item \label{it:SI:2} refined Chern--Simons with opposite orientation, i.e.\ $(\Nf,\Nbf)=(0,0)$ and $\ell=-1$:
\be\label{eq:SI:2}
 \makebox[\textwidth]{$
 w^{\overline{\rCS}}(x) = x^{\frac{\log r}{\log q}}\,\exp\left(\dfrac{\log^2 x}{2\log q}\right),
 \hfill
 \hat{\sfC}^{\overline{\rCS}} = \sfT^{-1} \, \Delta^{+}(Q),
 \hfill
 \varphi^{\overline{\rCS}}_k= (q^\frac12 r^{-1} (qt^{-1}Q)^{-2})^k \dfrac{1}{1-t^k};
 $}
\ee

\item \label{it:SI:3} $(\Nf,\Nbf)=(1,0)$ and $\ell=0$:
\be\label{eq:SI:3}
 \makebox[\textwidth]{$
 w^{1\bar{0}}(x) = x^{\frac{\log r}{\log q}}\,(qu_1x;q)_\infty\,,
 \hfill
 \hat{\sfC}^{1\bar{0}} = \Delta^{+}(Q) \Delta^{+}(rqt^{-1}Q) \,, 
 \hfill
 \varphi^{1\bar{0}}_k = \dfrac{u_1^{-k}}{1-t^k} \,;
 $}
\ee

\item \label{it:SI:4} $(\Nf,\Nbf)=(0,1)$ and $\ell=0$:
\be\label{eq:SI:4}
 \makebox[\textwidth]{$
 w^{0\bar{1}}(x) = x^{\frac{\log r}{\log q}}\,(qv_1x;q)^{-1}_\infty\,,
 \hfill
 \hat{\sfC}^{0\bar{1}} = \sfT^{-2} \Delta^{+}(Q) \Delta^{+}(rqt^{-1}Q) \,, 
 \hfill
 \varphi^{0\bar{1}}_k = -(r^{-1}(qt^{-1}Q)^{-2})^k \dfrac{v_1^{-k}}{1-t^k} \,;
 $}
\ee

\item \label{it:SI:5} $(\Nf,\Nbf)=(2,0)$, $\ell=0$ and $r=1$:
\be\label{eq:SI:5}
 \makebox[\textwidth]{$
 w^{2\bar{0}}(x) = (qu_1x;q)_\infty(qu_2x;q)_\infty\,,
 \hfill
 \hat{\sfC}^{2\bar{0}} = \Delta^{+}(Q) \,,
 \hfill
 \varphi^{2\bar{0}}_k = \dfrac{(u_1^{-k}+u_2^{-k})}{1-t^k} \,;
 $}
\ee

\item \label{it:SI:6} $(\Nf,\Nbf)=(0,2)$, $\ell=0$ and $r=1$:
\be\label{eq:SI:6}
 \makebox[\textwidth]{$
 w^{0\bar{2}}(x) = (qv_1x;q)^{-1}_\infty(qv_2x;q)^{-1}_\infty\,,
 \hfill
 \hat{\sfC}^{0\bar{2}} = \sfT^{-1} \Delta^{+}(Q) \,,
 \hfill
 \varphi^{0\bar{2}}_k = -(-(qt^{-1}Q)^{-1})^k \dfrac{(v_1^{-k}+v_2^{-k})}{1-t^k} \,;
 $}
\ee

\item \label{it:SI:7} $(\Nf,\Nbf)=(1,1)$ and $\ell=0$:
\be\label{eq:SI:7}
 \makebox[\textwidth]{$
 w^{1\bar{1}}(x) = x^{\frac{\log r}{\log q}}\,\dfrac{(qu_1x;q)_\infty}{(qv_1x;q)_\infty}\,,
 \hfill
 \hat{\sfC}^{1\bar{1}} = \dfrac{ \Delta^{+}(Q) \Delta^{+}(rqt^{-1}Q) }
 {\Delta^{+}(r (q t^{-1}Q)^{2} \frac{v_1}{u_1})} \,,
 \hfill
 \varphi^{1\bar{1}}_k = \dfrac{u_1^{-k}}{1-t^k} \,;
 $}
\ee

\item \label{it:SI:8} $(\Nf,\Nbf)=(1,0)$ and $\ell=-1$:
\be\label{eq:SI:8}
 \makebox[\textwidth]{$
 w(x) = x^{\frac{\log r}{\log q}}\,\exp\left(\dfrac{\log^2 x}{2\log q}\right)
 (qu_1x;q)_\infty\,,
 \hfill
 \hat{\sfC}^w = \dfrac{\Delta^{+}(Q)}{\Delta^{+}(-q^{-\frac12} r (qt^{-1}Q)^2 u_1^{-1})} \,, 
 \hfill
 \varphi_k^w = \dfrac{u_1^{-k}}{1-t^k} \,;
 $}
\ee

\item \label{it:SI:9} $(\Nf,\Nbf)=(0,1)$ and $\ell=1$:
\be\label{eq:SI:9}
 \makebox[\textwidth]{$
 w(x) = x^{\frac{\log r}{\log q}}\,\dfrac{\exp\left(-\frac{\log^2 x}{2\log q}\right)}
 {(qv_1x;q)^{-1}_\infty}\,,
 \hfill
 \hat{\sfC}^w = \dfrac{\Delta^{+}(Q)}{\Delta^{+}(-q^\frac12 r (qt^{-1}Q)^2 v_1)} \,, 
 \hfill
 \varphi_k^w = (q^\frac12 r qt^{-1}Q)^k\dfrac{1}{1-t^k} \,;
 $}
\ee

\item \label{it:SI:10} $(\Nf,\Nbf)=(2,2)$, $\ell=0$ and $r=1$:
\be\label{eq:SI:10}
\begin{gathered}
 w^{2\bar{2}}(x) = \dfrac{(qu_1x;q)_\infty(qu_2x;q)_\infty}{(qv_1x;q)_\infty(qv_2x;q)_\infty}\,,
 \hspace{30pt}
 \hat{\sfC}^{2\bar{2}} = \dfrac{\Delta^{+}(Q)}
 {\Delta^{+}(\left({qt^{-1}Q}\right)^{2}\frac{v_1 v_2}{u_1 u_2})} \,, \\
 \varphi^{2\bar{2}}_k = \left( \left(u_1^{-k}+u_2^{-k}\right)-(v_1^{-k}+v_2^{-k})\left((q t^{-1}Q)\frac{v_1v_2}{u_1 u_2} \right)^{k}\right)\dfrac{1}{1-t^k} \,;
\end{gathered}
\ee

\end{enumerate}
where $\sfT$ and $\Delta^{+}(z)$ are defined as in \eqref{eq:framing} and \eqref{eq:DeltaOp}, respectively, and $Q=t^N$.
\end{theorem}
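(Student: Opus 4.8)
The plan is to prove all ten cases by a single uniform mechanism, following the template set by the classical GUE and WL examples of Section~\ref{sec:classical} but carried out now in the Macdonald/$\DIM$ setting. The starting point is the expansion $\sfZ^w(\bp)=\sum_\lam b_\lam\ev{P_\lam(\bx)}^w P_\lam(\bp)$ from \eqref{eq:gen-func-qt}, together with the observation that the claimed superintegrability is equivalent to the closed form
\be
 \sfZ^w(\bp) = \hat{\sfC}^w\cdot\exp\Big(\sum_{k\geq1}\frac{1-t^k}{1-q^k}\frac{\varphi^w_k}{k}p_k\Big)\,,
\ee
obtained by substituting $\ev{P_\lam}^w=\sfC^w_\lam P_\lam(p_k=\varphi^w_k)$, pulling the diagonal operator $\hat{\sfC}^w$ out of the sum (legitimate since $\sfT$ and $\Delta^+(z)$ are diagonal on Macdonald polynomials by \eqref{eq:framing} and \eqref{eq:DeltaOp}), and recognizing the remaining sum $\sum_\lam b_\lam P_\lam(\varphi^w)P_\lam(\bp)$ as the Macdonald--Cauchy kernel specialized at the evaluation locus $\varphi^w$. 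Since the uniqueness result of the preceding subsection guarantees that the $q,t$-Virasoro constraints of Lemma~\ref{lemma:qtVirUniv} admit a unique normalized solution, it suffices to verify that this explicit right-hand side is annihilated by $\hat{\sfU}^w_m$ for all $m\geq m_0$.

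First I would, for each weight $w$, compute the ratio $w(z)/w(q^{-1}z)$, read off the polynomials $M^\pm(z)$ through \eqref{eq:mutoM}, and write out the current $\hat{\sfU}^w(z)$ explicitly from \eqref{eq:U(z)}. The crucial structural input is that the two exponential factors appearing in $\hat{\sfU}^w(z)$ are the creation/annihilation halves of the Macdonald vertex operators $x^\pm(z)$ of \eqref{eq:VertexOperatorsx} (with rescaled spectral parameter); thus $\hat{\sfU}^w(z)$ is a linear combination, with coefficients $M^\pm(z)$, of horizontal generators of $\DIM$ in the level-one Fock representation. Rather than expand $\hat{\sfU}^w(z)$ into individual modes, I would resum the constraints into a single \emph{recursion operator} equation $\hat{\sfA}^w\cdot\sfZ^w=0$ by an appropriate contour combination $\oint\frac{\mathd z}{2\pi\mathi z}(\cdots)\hat{\sfU}^w(z)$, exactly as $-\sum_m p_{m+2}\hL_m=\hD-\hW_{-2}$ was formed in the Gaussian case.

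The heart of the argument is then to exhibit the conjugation identity
\be
 \hat{\sfA}^w = \hat{\sfC}^w\cdot(\text{Macdonald operator})\cdot(\hat{\sfC}^w)^{-1}\,,
\ee
so that $\hat{\sfA}^w\cdot\sfZ^w=\hat{\sfC}^w\cdot M\cdot\exp(\cdots)$ and the problem collapses to checking that the horizontal generator $M$ annihilates the single exponential state. This is where the $\DIM$ relations do the real work: conjugating the shift exponentials $\exp(\pm\sum_k(1-q^k)(\cdots)z^{-k}\partial_{p_k})$ through the diagonal dressing $\hat{\sfC}^w=\sfT^{a}\prod\Delta^+(\cdots)$ produces multiplicative factors governed by the content products $\Delta^+_\lam(z)=\prod_{\sAbox\in\lam}(1-z\chi_\sAbox)$ and by the framing eigenvalue $\sfT_\lam$, and the identity \eqref{eq:Delta_id} relating $\Delta^+(z)$, $\Delta^-(z^{-1})$ and $\sfT$ is exactly what is needed to match the $M^+$ and $M^-$ branches against one another. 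I would carry out this bookkeeping once in general and then specialize the shift parameters to reproduce the precise arguments $Q$, $rqt^{-1}Q$, etc.\ appearing in each $\hat{\sfC}^w$, thereby recovering the stated $\varphi^w_k$.

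I expect the main obstacle to be precisely this content-product telescoping: ensuring that the eigenvalues generated by passing the vertex-operator halves through $\Delta^+(z)$ recombine into the stated evaluation loci $\varphi^w_k$ with the correct normalization constants, and that the two branches weighted by $M^\pm(z)$ cancel term-by-term in $z$. The fundamental-versus-antifundamental cases and the $\ell\neq0$ cases differ in the sign of the framing power and in whether $M^\pm$ has degree one or two, so the cancellation pattern must be tracked separately in each family; the cases carrying a denominator $\Delta^+(\cdots)$ in $\hat{\sfC}^w$ (namely \ref{it:SI:7}--\ref{it:SI:10}) additionally require that an apparent pole in the conjugated operator be cancelled by the $M^+/M^-$ ratio, which is the subtlest point and the one most likely to demand the full strength of the toroidal relations.
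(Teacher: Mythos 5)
Your strategy---compute $M^{\pm}(z)$, resum the $q,t$-Virasoro constraints into a single recursion operator $\hat{\sfA}^w$, and exhibit a conjugation of $\hat{\sfA}^w$ to a simple Macdonald operator by a dressing built from $\sfT$, $\Delta^{+}$ and a plethystic exponential---is the paper's own, and for \ref{it:SI:1}--\ref{it:SI:6} it would go through essentially as written (the key computational identity being the Garsia--Mellit five-term relation \eqref{eq:5t} and its corollaries, rather than \eqref{eq:Delta_id}, which is needed only in case \ref{it:SI:9}). However, your logical scaffolding has a gap. The ``uniqueness result of the preceding subsection'' that you invoke is, in the paper, a Proposition whose proof is explicitly deferred to the very case-by-case analysis you are constructing, so as stated your reduction is circular. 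Moreover, you only verify that the candidate is annihilated by the single resummed operator $\hat{\sfA}^w$, which is a \emph{consequence} of the constraints and not equivalent to them, so even an independently established uniqueness for the full constraint system would not close the argument. The correct direction is the opposite one: $\sfZ^w$ lies in $\ker\hat{\sfA}^w$ by Lemma~\ref{lemma:qtVirUniv} plus resummation; the conjugation identity, invertibility of the dressing operator, and the fact that $1$ is a simple point in the spectrum of $x^{\mp}_0$ show that $\ker\hat{\sfA}^w$ is one-dimensional and spanned by the candidate; the normalization $\sfZ^w(0)=1$ then finishes. Your conjugation identity is exactly the tool needed for this kernel argument, but the proposal never draws that conclusion.

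More seriously, your ``single uniform mechanism'' fails for \ref{it:SI:7}--\ref{it:SI:10}. There the degree-zero part of the recursion operator contains \emph{both} $1-x^{-}_0$ and $1-x^{+}_0$ (see \eqref{eq:Arec11} and \eqref{eq:A22}), and no single conjugation to a Macdonald operator is known---the paper states this explicitly. Instead, $\hat{\sfA}^{1\bar{1}}$ is written as a difference of two dressed diagonal operators, $\hG_{1\bar{1}}(1-x^{-}_0)\hG_{1\bar{1}}^{-1}-(t/q)\,\omega_1\,\hF_{1\bar{1}}(1-x^{+}_0)\hF_{1\bar{1}}^{-1}$; one then needs \eqref{eq:BHid} to prove $\hG_{1\bar{1}}\cdot1=\hF_{1\bar{1}}\cdot1$, so that both terms annihilate the same candidate, and a separate genericity argument (discreteness of the spectra of $x^{\pm}_0$ for generic $\omega_1$) to conclude that the kernel of the \emph{sum} is still one-dimensional; the $2\bar{2}$ case requires the further construction with the currents $\hat{\sfB}^{\pm}_0$, and \ref{it:SI:8}, \ref{it:SI:9} are handled as limits of \ref{it:SI:7}. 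Your diagnosis of these cases---that ``an apparent pole in the conjugated operator be cancelled by the $M^+/M^-$ ratio''---misidentifies the obstruction: the difficulty is not pole cancellation inside a conjugated operator but the nonexistence of the conjugation itself, together with the need to rebuild the uniqueness argument for a sum of two dressings. As written, your mechanism does not prove \ref{it:SI:7}--\ref{it:SI:10}.
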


Notice that certain superintegrability formulas arise as special limits of others. In particular, we get
\begin{equation}
    \lim_{\substack{v_1 \to 0\\v_2 \to 0 }}  \ev{f}^{2\bar2} = \ev{f}^{2\bar{0}}\,,
\end{equation}
and 
\begin{equation}
    \lim_{\substack{v_1 \to 0}}  \ev{f}^{1\bar{1}} = \ev{f}^{1\bar{0}}\,.
\end{equation}
Because the model parameters are identified with the physical masses, this limit corresponds to the decoupling of anti-fundamental matter. One could analogously decouple the fundamental matter instead.
Crucially, however, the $\Nf=1$ model cannot be naively obtained as a limit of the $\Nf=2$ model. This obstruction is related to the change of integration contours relevant for these models, a topic lying beyond the scope of this paper. As we show below, both limits are nevertheless visible at the level of the Virasoro constraints and the corresponding equations for the generating functions.

The remainder of this section is devoted to the proof of Theorem~\ref{thm:SIqt}.

\subsection{Summing up Virasoro constraints}
Similarly to the case of the classical matrix models, the $q,t$-Virasoro constraints can be combined into a single equation, that can be used to determine the generating function uniquely. We start by presenting the general form of the resummed constraint equation. Next, we prove the announced results case by case. The proof uses the lemma below, which is one of the key technical result of this paper. It relies on a simple yet important idea, that one has to resum the Virasoro constraints, to bring them to a form, where they can be solved explicitly.
Yet, the resummation should be done in such a way that the resulting recursion operator has some ``\emph{nice}'' algebraic properties, as we explain below.
Resumming the constraints as in the undeformed case in Section~\ref{sec:Virasoro-constraints-classical}, for example, does not result in a particularly useful operator, as discussed in \cite{Cassia:2020uxy}.
\begin{lemma}
The matrix model generating function $\sfZ^w(\bp)$ satisfies the following equations for  $d_{-} \leq 1+m-m_0$, where $m_0$ is as in Lemma~\ref{lemma:qtVirUniv}:
\be\label{eq:RecUni}
\ba
 & \left( \sum_{l=0}^{d_{-}} M_{l}^{-} \Bigg\{
 (t/q)^{m-l} h_{m-l}( -\left(1-t^{-k} \right)p_k)
 - \delta_{l,m}
 - Q^{-1}\left( q^{l-m} x^{-}_{l-m}
 - h_{l-m}( (1-t^k) p_k^\perp) \right)
 \Bigg\}\right. \\
 & + q t^{-1} \sum_{l=0}^{d_{+}} M^+_l \Bigg\{\delta_{l,m}
 - (t/q)^{m-l} h_{m-l} ((1-t^{-k}) p_k) \\
 &\hspace{90pt}\left.+ Q (t/q)^{m-l}
 \left( x^{+}_{l-m} - h_{l-m}(-(1-t^k) p_k^\perp) \right)\Bigg\}
 \right) \cdot \sfZ^w(\bp) =0 
\ea
\ee
where $h_n$ are the homogeneous symmetric functions regarded as functions of power sums $p_k$, as in \eqref{eq:h_kdef}, and $p_k^\perp$ is the adjoint of $p_k$ w.r.t.\ the Macdonald inner product.
\end{lemma}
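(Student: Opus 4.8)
The plan is to extract the mode expansion of the raw constraint derived inside the proof of Lemma~\ref{lemma:qtVirUniv}, i.e.\ equation \eqref{eq:GenericVirasoroAll} multiplied by $M^-(z)$ and simplified with \eqref{eq:mutoM}, and to organize it so that it closes on the vertex operators \eqref{eq:VertexOperatorsx}. After the multiplication, the equation is a sum of four currents: two \emph{good} ones, $-Q^{-1}M^-(z)\exp(\sum_k(1-q^k)z^{-k}\partial_{p_k})\sfZ^w$ and its $M^+$-counterpart carrying the extra prefactor $E^+(z):=\exp(\sum_k(1-t^{-k})(t/q)^kz^k\frac{p_k}{k})$ already visible in \eqref{eq:U(z)}, which are genuine operators in the times $p_k$; and two \emph{bad} ones, $M^-(z)\dev{\exp(\sum_k(1-t^{-k})\frac{z^k}{k}\,p_{-k}(\bx))}^w$ and $qt^{-1}M^+(z)E^+(z)\dev{\exp(-\sum_k(1-t^{-k})(t/q)^k\frac{z^k}{k}\,p_{-k}(\bx))}^w$, which insert negative power sums $p_{-k}(\bx)=\sum_i x_i^{-k}$ and are therefore not a priori differential operators acting on $\sfZ^w$.

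The crucial step is to identify each bad current with a vertex operator acting on $\sfZ^w(\bp)$. Conceptually this is forced by the origin of the constraints: the inserted $q$-difference current $\mathcal{D}(z)$ is the eigenvalue realization of $x^+(z)$, and by the duality between the two (eigenvalue and time) realizations of the level-$1$ generators $x^\pm(z)$ its matrix elements coincide with the time-variable operators \eqref{eq:VertexOperatorsx} acting on the reproducing-kernel series $\sfZ^w(\bp)=\sum_\lam b_\lam\ev{P_\lam(\bx)}^w P_\lam(\bp)$. Concretely I would establish the identity
\be
\begin{aligned}
 \dev{\exp\Big(\sum_{k\geq1}(1-t^{-k})\frac{z^k}{k}\,p_{-k}(\bx)\Big)}^w
 ={}& \Big( 1 - \exp\Big(-\sum_{k\geq1}(1-t^{-k})(t/q)^kz^k\frac{p_k}{k}\Big) \\
 &+ Q^{-1}\sum_{s\geq0}(z/q)^s\, x^-_{-s}\Big)\cdot\sfZ^w(\bp)\,,
\end{aligned}
\ee
together with its $M^+$-partner built from $x^+$ at the rescaled argument $(t/q)z$; the multiplicative exponential and the constant $1$ on the right-hand side are precisely the normal-ordering contributions obtained when the creation part of the vertex operator is commuted past its annihilation part. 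This identification is the step I expect to be the main obstacle, since it requires the explicit bosonization dictionary between negative-power-sum insertions and the modes $x^\pm_n$, together with a careful account of the spectral rescalings that generate the powers of $q$ and $t/q$.

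The remainder is bookkeeping. In the good currents I trade derivatives for Macdonald adjoints via $\partial_{p_k}=\frac{1-t^k}{k(1-q^k)}p_k^\perp$, so that $\exp(\sum_k(1-q^k)z^{-k}\partial_{p_k})=\sum_{n\geq0}z^{-n}h_n((1-t^k)p_k^\perp)$, and similarly for the $(t/q)$-rescaled exponential, which produces the $h_{l-m}(\pm(1-t^k)p_k^\perp)$ terms; expanding $E^+(z)=\sum_n(t/q)^nz^nh_n((1-t^{-k})p_k)$ produces the multiplicative $h_{m-l}((1-t^{-k})p_k)$ terms. Multiplying through by $M^\pm(z)=\sum_{l=0}^{d_\pm}M_l^\pm z^l$ and reading off the coefficient of $z^m$ converts each convolution into the announced sum over $l$ with shifted indices $m-l$ and $l-m$, while the constants $q^{l-m}$, $(t/q)^{m-l}$, $Q^{\pm1}$ and the global $qt^{-1}$ are exactly the factors collected along the way; the overall sign is then fixed by matching a single term, e.g.\ the good $M^-$ piece, against \eqref{eq:RecUni}.

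Finally I would justify the range $d_-\leq 1+m-m_0$. Reading off the coefficient of $z^m$ invokes only modes $\hat{\sfU}^w_{m'}$ with $m'\geq m_0$, for which Lemma~\ref{lemma:qtVirUniv} guarantees $\hat{\sfU}^w_{m'}\cdot\sfZ^w=0$; the stated inequality is precisely the requirement that shifting by $l\leq d_-$ never drops the relevant index below $m_0$. In particular it keeps us clear of the delicate $z^1$-coefficient proportional to $\dev{p_{-1}(\bx)}^w$ isolated in the proof of Lemma~\ref{lemma:qtVirUniv}, whose vanishing there required $M^+(0)=M^-(0)$ and which must not be triggered here.
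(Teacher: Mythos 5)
Your closing ``bookkeeping'' paragraph is essentially the paper's final computation, but the step you yourself flag as the main obstacle is where the proposal breaks: the claimed identity for the bad current is false. Write $E(z):=\exp\big(\sum_{k\geq1}(1-t^{-k})(t/q)^kz^k\tfrac{p_k}{k}\big)$, and let $\mathcal{E}^-(z)$, $\mathcal{E}^+(z)$ denote the two negative-power-sum exponential insertions in \eqref{eq:GenericVirasoroAll}. At order $z^0$ your identity reads $\sfZ^w(\bp)=Q^{-1}x^-_0\cdot\sfZ^w(\bp)$: the left side is $\dev{1}^w=\sfZ^w(\bp)$, while on the right side the $1$ and the constant term of $E^{-1}(z)$ cancel, leaving only $Q^{-1}x^-_0$. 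Since $x^-_0$ is diagonal on Macdonald functions, this would make $\sfZ^w$ an eigenvector of $x^-_0$ with eigenvalue $Q$; comparing constant terms already gives $1=Q^{-1}$ (because $x^-_0\cdot 1=1$ and $\sfZ^w(0)=1$), a contradiction unless $t^N=1$. Nor can any corrected universal identity of this shape exist: insertions of $p_{-k}(\bx)$ are genuinely new correlator data, and the Ward identity \eqref{eq:GenericVirasoroAll} (multiplied by $M^-(z)$) determines only the single combination $M^-(z)\dev{\mathcal{E}^-(z)}^w+qt^{-1}M^+(z)E(z)\dev{\mathcal{E}^+(z)}^w$ in terms of operators acting on $\sfZ^w(\bp)$, never each bad current separately. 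If each bad current were separately equal to a universal operator acting on $\sfZ^w(\bp)$, then \emph{every} coefficient of $z$ in the resulting relation would be an equation for $\sfZ^w(\bp)$, contradicting the observation in the proof of Lemma~\ref{lemma:qtVirUniv} that the positive powers of $z$ (e.g.\ the term containing $\dev{p_{-1}(\bx)}^w$) do not yield constraints.

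The paper's proof never needs to touch the bad currents again after Lemma~\ref{lemma:qtVirUniv}. One multiplies the already-established constraint current $\hat{\sfU}^w(z)$ of \eqref{eq:U(z)} on the left by $\{E(z)^{-1}-1\}$, which contains only strictly positive powers of $z$; hence the coefficient of $z^m$ of $\{E(z)^{-1}-1\}\hat{\sfU}^w(z)\cdot\sfZ^w(\bp)$ involves only the modes $\hat{\sfU}^w_{j-m}$ with $j\geq1$, all of which annihilate $\sfZ^w(\bp)$ when $m\leq 1-m_0$; this is where the lemma's hypothesis enters. The vertex operators then appear not from bosonizing negative power sums, but from recombining exponentials already present in the \emph{good} part of the constraints:
\be
 E(z)^{-1}\exp\Big(\sum_{k\geq1}(1-q^k)z^{-k}\frac{\partial}{\partial p_k}\Big)=x^-(q^{-1}z)\,,
 \qquad
 E(z)\exp\Big(-\sum_{k\geq1}(1-q^k)(t/q)^{-k}z^{-k}\frac{\partial}{\partial p_k}\Big)=x^+(tq^{-1}z)\,,
\ee
which hold because $-(1-t^{-k})(t/q)^k=q^{-k}(1-t^k)$ and both products are already normal ordered. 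Extracting the coefficient of $z^m$ then yields \eqref{eq:RecUni} verbatim. If you wish to keep your outline, these two operator identities are what must replace your claimed identity; your remaining steps (expanding the shift operators as $h_n((1\pm t^k)p_k^\perp)$, expanding $E(z)$ as $h_n((1-t^{-k})p_k)$, and convolving with $M^\pm(z)$) then go through exactly as you describe.
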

\begin{proof}
The proof is a direct computation.
Multiplying the constraints from the left, we obtain
\begin{equation}
 \oint \frac{\mathd z}{2\pi\mathi z} z^{-m}
 \left\{ \exp\left( -\sum_{k\geq1} (1-t^{-k})(t/q)^k z^k \frac{p_k}{k} \right)
 -1 \right\} \hat{\sfU}^w(z) \cdot \sfZ^w(\bp) = 0\,.
\end{equation}
According to Lemma~\ref{lemma:qtVirUniv}, the modes of $\hat{\sfU}^w(z)$ starting with $\hat{\sfU}_{m_0}^w$ annihilate the partition function. Since we are multiplying by an exponent that contains only positive powers of $z$, the resulting operator annihilates the partition function for  $m \leq 1-m_0$. This condition guarantees that only those modes of the $q$-Virasoro constraints, that annihilate the partition function appear in the resulting operator.

Next we substitute the explicit form of $\hat{\sfU}^w(z)$ into the expression and combine the exponents, some of which produce vertex operators for the currents of the quantum toroidal algebra. See Appendix~\ref{sec:AppendixDIM} for details. This produces the equation
\be
\ba
 \oint \frac{\mathd z}{2\pi\mathi z} z^{-m}
 \Big(&
 M^{-}(z)\Big\{
 \exp\left( -\sum_{k\geq1} (1-t^{-k})(t/q)^k z^k \frac{p_k}{k} \right)-1
 -Q^{-1}x^{-}(q^{-1}z) \\
 &+Q^{-1}\exp\left( \sum_{k\geq1} (1-q^{k}) z^{-k} \frac{\partial}{\partial p_k} \right)
 \Big\} \\
 &+qt^{-1} M^{+}(z)\Big\{
 1-Q\exp\left(-\sum_{k\geq1}(1-q^{k})(t/q)^{-k}z^{-k}\frac{\partial}{\partial p_k} \right)\\
 &-\exp\left( \sum_{k\geq1} (1-t^{-k})(t/q)^k z^k \frac{p_k}{k} \right)
 +Qx^{+}(tq^{-1}z)
 \Big\}\Big)\cdot\sfZ^w(\bp)=0
\ea
\ee
Taking the residue at zero extracts the $m$-th mode of the resulting operator and we obtain the equation in \eqref{eq:RecUni}.
\end{proof}
Out of all the equations for each $m$ we are going to be mainly interested in the lowest mode. Suppose for simplicity, that $d_- \geq d_+$, then, by taking $m=d_-$ we obtain the following.
\begin{corollary}
When $d_{-} \leq 1-m_0$, the generating function satisfies the linear equation
\be\label{eq:recursion-eq}
 \hat{\sfA}^w\cdot\sfZ^{w}(\bp) = 0 
\ee
where the recursion operator $\hat{\sfA}^w$ is given by
\be\label{cor:RecUni0}
\ba
 & \hat{\sfA}^w := \sum_{l=0}^{d_{-}}
 M^{-}_{d_{-}-l} \Big\{ Q^{-1} (\delta_{l,0}-q^{-l} x^{-}_{-l})
 + (t/q)^l h_l(-(1-t^{-k})p_k) - \delta_{l,0} \Big\} \\
 &- qt^{-1}
 \sum_{l=0}^{d_{+}}
 M^{+}_{d_{+}-l} \Big\{ Q (t/q)^{d_{-}-d_{+}+l}
 \left(\delta_{d_{+},d_{-}+l} - x^{+}_{d_{+}-d_{-}-l} \right) +
 h_{d_{-}-d_{+}+l}((1-t^{-k})(t/q)^k p_k) - \delta_{d_{+},d_{-}+l}
 \Big\}\,.
\ea
\ee
\end{corollary}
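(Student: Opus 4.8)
The plan is to read off \eqref{cor:RecUni0} as the single specialization $m=d_{-}$ of the family \eqref{eq:RecUni} established in the preceding lemma, so that no genuinely new input is needed: the entire content of the corollary is an admissibility check followed by a reindexing of the two finite sums. First I would verify that $m=d_{-}$ is an admissible value. Recall that \eqref{eq:RecUni} is produced by multiplying the current identity $\hat{\sfU}^w(z)\cdot\sfZ^w(\bp)=0$ by the factor $\exp(-\sum_{k\ge1}(1-t^{-k})(t/q)^kz^kp_k/k)-1$, which is a power series in \emph{strictly positive} powers of $z$, and then extracting the coefficient of $z^{m}$. Writing this factor as $\sum_{j\ge1}c_jz^{j}$, the residue produces $\sum_{j\ge1}c_j\,\hat{\sfU}^w_{j-m}$, whose lowest-index contribution is $\hat{\sfU}^w_{1-m}$. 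By Lemma~\ref{lemma:qtVirUniv} every mode $\hat{\sfU}^w_n$ with $n\ge m_0$ annihilates $\sfZ^w(\bp)$, so the equation is legitimate exactly when $1-m\ge m_0$. Taking $m=d_{-}$ this is precisely the hypothesis $d_{-}\le 1-m_0$, which guarantees that every mode entering the $m=d_{-}$ equation kills the generating function; the value $m=d_{-}$ is moreover the smallest one for which all the homogeneous-function terms $h_{m-l}$ with $0\le l\le d_{-}$ are simultaneously present (since $h_n=0$ for $n<0$), so it yields the ``complete'' recursion operator.

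Next I would substitute $m=d_{-}$ into \eqref{eq:RecUni} and reindex the two sums by $l\mapsto d_{-}-l$ and $l\mapsto d_{+}-l$ respectively. In the $M^{-}$-sum the indices collapse as $h_{m-l}\to h_{l}$, $\delta_{l,m}\to\delta_{l,0}$, $q^{l-m}x^{-}_{l-m}\to q^{-l}x^{-}_{-l}$ and $h_{l-m}\to h_{-l}$; since $h_n=0$ for $n<0$ and $h_0=1$, the lone surviving $h_{-l}$-term becomes $+Q^{-1}\delta_{l,0}$, which is exactly what is combined with the $x^{-}$-contribution into the grouping $Q^{-1}(\delta_{l,0}-q^{-l}x^{-}_{-l})$ on the first line of \eqref{cor:RecUni0}. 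In the $M^{+}$-sum the same substitution sends $\delta_{l,m}\mapsto\delta_{d_{+},d_{-}+l}$ and produces the prefactors $(t/q)^{d_{-}-d_{+}+l}$; here the assumption $d_{-}\ge d_{+}$ is precisely what forces the argument of $h_{l-m}$, namely $h_{d_{+}-l-d_{-}}$, to have non-positive degree, so that it contributes only as a Kronecker delta and never as a genuine $x^{-}$-type lowering operator.

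Finally, the explicit prefactor $(t/q)^{d_{-}-d_{+}+l}$ multiplying $h_{d_{-}-d_{+}+l}((1-t^{-k})p_k)$ in the $M^{+}$-sum is absorbed into the argument via the degree-$n$ homogeneity $h_n(c^kp_k)=c^{n}h_n(p_k)$ with $c=t/q$, turning it into $h_{d_{-}-d_{+}+l}((1-t^{-k})(t/q)^kp_k)$ as displayed in \eqref{cor:RecUni0}; I would then check that the internal regrouping of \eqref{cor:RecUni0} under the overall $-qt^{-1}$ prefactor reproduces, term by term, the $+qt^{-1}$-prefixed $M^{+}$-sum of \eqref{eq:RecUni}. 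The computation is elementary, so I expect the only real obstacle to be clean bookkeeping: tracking signs through the two reindexings and making transparent that $h_n$ with $n<0$ collapses all spurious terms, while keeping separate the two distinct roles of the hypotheses---$d_{-}\le 1-m_0$ secures admissibility (so that only annihilating modes appear), whereas $d_{-}\ge d_{+}$ secures that the specialized operator $\hat{\sfA}^w$ has the stated triangular form with no leftover lowering contributions from the $M^{+}$-sector.
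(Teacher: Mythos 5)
Your proposal is correct and is essentially the paper's own derivation: the corollary is nothing but \eqref{eq:RecUni} specialized to $m=d_{-}$ (under the standing assumption $d_{-}\geq d_{+}$ made just before the corollary), with the hypothesis $d_{-}\leq 1-m_0$ arising exactly as you describe from the admissibility condition $m\leq 1-m_0$ in the proof of the resummation lemma. Your reindexings $l\mapsto d_{\pm}-l$, the collapse of $h_n$ to Kronecker deltas for $n\leq 0$, and the homogeneity $h_n(c^k p_k)=c^n h_n(p_k)$ reproduce \eqref{cor:RecUni0} term by term, including the overall sign flip of the $M^{+}$-sum, matching the paper's (largely implicit) bookkeeping.
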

This equation is one of the main results of this work. Below we discuss how to solve this equation explicitly once we specialize to the models defined in Section~\ref{sec:Definitionqt}. The condition $d_{-} \leq 1-m_0$ appears, since we had $m \leq 1-m_0$ and translates to either $d_{-} \leq 1 $ for generic $r$ or $d_{-} \leq 2 $, when $r=1$. This is another manifestation of the fact, that for potentials with higher $d_{-}$ our approach should be modified and is the same issue as the one covered in \cite{Cassia:2020uxy}.

Similarly, if $d_{+}\geq d_{-}$, we can fix $m=d_{+}$ in \eqref{eq:RecUni} and we obtain an analogous recursion operator, provided $d_{+} \leq 1-m_0$.

\begin{remark}
Since the recursion equation \eqref{eq:recursion-eq} is homogeneous, the recursion operator $\hat{\sfA}^w$ is only defined up to an overall constant, which is in fact equivalent to the fact that the polynomials $M^{\pm}(z)$ can be simultaneously rescaled by the same scalar factor without changing the $q,t$-Virasoro constraints in Lemma~\eqref{lemma:qtVirUniv}.
In what follows, the operators are presented in rescaled form, with the rescaling absorbed into their definitions.
\end{remark}

\subsection{Refined Chern--Simons theory}\label{sec:CS}

\subsubsection{Proof of \ref{it:SI:1}}
The refined Chern--Simons model is arguably the simplest example, which we will discuss in more detail. The techniques applied in this case will carry on almost directly to the other models.

Using the explicit expression for the weight function $w^{\rCS}(x)$, we obtain
\be\label{eq:MLMRrCS}
 M^{-}(z) = z\,,
 \hspace{30pt}
 M^{+}(z) = rq^{\frac12}\,.
\ee
Therefore, setting $d_-=1$ in \eqref{cor:RecUni0} and substituting \eqref{eq:MLMRrCS} there, we obtain the recursion equation
\be
\label{eq:eigenvalueeqrCS}
 \hat{\sfA}^{\rCS}\cdot\sfZ^{\rCS}(\bp)=0\,,
\ee
with
\be
 \hat{\sfA}^{\rCS} := (1-x^{-}_0)
 - (rq^{\frac12}Qt^{-1}q)\left((1-t^{-1})(t/q)p_1-Q(t/q)x^{+}_{-1}\right) \,.
\ee
The recursion operator $\hat{\sfA}^{\rCS}$ has two types of terms: of degree zero and degree one. Hence, if we were to expand the partition function as a series in $p_k$, the equation would lead to a recursion of step one on the coefficients of the expansion. One can show, that this recursion has a unique solution, which can be found using combinatorial identities. Instead of doing so, we would like to solve the equation algebraically. Note, that $r$ enters the equation in a simple way, namely, we can reabsorb it by rescaling the variables as $p_k \mapsto r^{-k} p_k$, which corresponds to the action of the operator $r^{-\hD}$.

To solve the equation, let us define the ``\emph{gauge transformation}'' operator $\hG_{\rCS}$ as
\be\label{eq:gauge-rCS}
 \hG_{\rCS} := \Delta^{+}(Q)\sfT\exp\left( \sum_{k\geq1} (rq^{\frac12}Qt^{-1}q)^k
 \frac{p_k}{k(1-q^k)} \right)
 \sfT^{-1}\Delta^{+}(Q)^{-1}\,.
\ee
The main idea is to use this operator, to show that the operator in \eqref{eq:eigenvalueeqrCS} is conjugate to a much simpler one, whose kernel is one-dimensional and known explicitly. In particular, using relations from Lemma~\ref{lemma:T} and Corollary~\ref{lemma:Delta} we compute
\be
\ba
 & \hG_{\rCS}(1-x^-_0)\hG_{\rCS}^{-1} = \\
 &= \Delta^{+}(Q)\sfT\exp \left( \sum_{k\geq1}
 \frac{(rq^{\frac12}Qt^{-1}q)^k p_k}{k(1-q^k)} \right)
 (1-x^-_0)\exp \left( -\sum_{k\geq1} \frac{(rq^{\frac12}Qt^{-1}q)^k p_k}{k(1-q^k)} \right)
 \sfT^{-1}\Delta^{+}(Q)^{-1} \\
 &= (1-x^{-}_0)-(rq^{\frac12}Qt^{-1}q)\left((1-t^{-1})(t/q)p_1 - Q(t/q)x^{+}_{-1}\right)
\ea
\ee
which is exactly the recursion operator in \eqref{eq:eigenvalueeqrCS}.
Therefore we can prove the following.
\begin{lemma}
The equation \eqref{eq:eigenvalueeqrCS} for the refined Chern--Simons generating function has a unique solution given by
\be\label{eq:rCSGrep}
 \sfZ^{\rCS}(\bp)
 = \hG_{\rCS}\cdot1
 = \Delta^{+}(Q)\sfT\cdot\exp \left( \sum_{k\geq1} (rq^{\frac12}Qt^{-1}q)^k
 \frac{p_k}{k(1-q^k)} \right)\,.
\ee
\end{lemma}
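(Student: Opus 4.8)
The plan is to leverage the conjugation identity established in the display immediately preceding the lemma, namely $\hat{\sfA}^{\rCS} = \hG_{\rCS}(1-x^{-}_0)\hG_{\rCS}^{-1}$, in order to transport the recursion \eqref{eq:eigenvalueeqrCS} onto the diagonal operator $1-x^{-}_0$. First I would record that $\hG_{\rCS}$ is invertible: writing $B := \Delta^{+}(Q)\,\sfT$ and $E := \exp\Big(\sum_{k\geq1}(rq^{\frac12}Qt^{-1}q)^k\frac{p_k}{k(1-q^k)}\Big)$, one has $\hG_{\rCS} = B\,E\,B^{-1}$ with $B$ an invertible diagonal operator (by \eqref{eq:DeltaOp} and \eqref{eq:framing}) and $E$ a multiplication operator that is unitriangular with respect to the grading by $|\lam|$, hence invertible order by order. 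Consequently $\ker\hat{\sfA}^{\rCS} = \hG_{\rCS}\cdot\ker(1-x^{-}_0)$, so solving \eqref{eq:eigenvalueeqrCS} is equivalent to solving $(1-x^{-}_0)\cdot Y = 0$ for $Y := \hG_{\rCS}^{-1}\cdot\sfZ^{\rCS}$.

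Next I would compute the kernel of $1-x^{-}_0$ in the Macdonald basis. Using $x^{-}_0\cdot P_\lam = x_\lam^\vee P_\lam$ together with $x_\lam = 1-(1-q)(1-t^{-1})\chi_\lam$ from \eqref{eq:symbols2}, the eigenvalue of $1-x^{-}_0$ on $P_\lam$ is
\be
 1-x_\lam^\vee = (1-q^{-1})(1-t)\,\chi_\lam^\vee\,,
 \hspace{20pt}
 \chi_\lam^\vee = \sum_{\sAbox\in\lam}\chi_\sAbox^{-1}\,.
\ee
For generic $q,t$ the scalar prefactor is nonzero, and $\chi_\lam^\vee$ is a sum of distinct monomials with unit coefficients, so it vanishes if and only if $\lam=\vac$. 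Hence $\ker(1-x^{-}_0)$ is one-dimensional, spanned by $P_\vac = 1$. This is exactly the step that yields uniqueness: combined with the first paragraph it shows that every solution of \eqref{eq:eigenvalueeqrCS} is a scalar multiple of $\hG_{\rCS}\cdot 1$.

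It then remains to evaluate $\hG_{\rCS}\cdot 1$ and pin down the scalar. Since $B^{-1}\cdot 1 = \sfT^{-1}\Delta^{+}(Q)^{-1}\cdot 1 = 1$ (because $\sfT_\vac=1$ and $\Delta^{+}_\vac(Q)=1$ are empty products), one gets $\hG_{\rCS}\cdot 1 = B\,E\cdot 1 = \Delta^{+}(Q)\,\sfT\cdot E$, which is precisely \eqref{eq:rCSGrep}. Its degree-zero part equals $\Delta^{+}_\vac(Q)\,\sfT_\vac = 1$, so the normalization $\sfZ^{\rCS}(0)=1$ of the Proposition forces the scalar to be $1$, completing the identification.

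The main obstacle I anticipate is not any single computation but making the kernel reduction fully rigorous: one must verify that $\hG_{\rCS}$ is well-defined and invertible on the degree-completed Fock space in which $\sfZ^{\rCS}(\bp)$ lives (the operator $E$ raises degree, so one argues degree by degree) and, more delicately, that $1-x_\lam^\vee\neq 0$ for every $\lam\neq\vac$. The latter is a genericity statement in $q,t$ that should be flagged explicitly, since a coincidental vanishing of some $\chi_\lam^\vee$ would enlarge $\ker(1-x^{-}_0)$ and break uniqueness. Everything else—the conjugation identity and the evaluation on the vacuum—is already essentially in hand from the computation displayed just before the lemma.
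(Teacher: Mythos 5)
Your proposal is correct and follows essentially the same route as the paper's own proof: conjugating the recursion operator via $\hat{\sfA}^{\rCS} = \hG_{\rCS}(1-x^{-}_0)\hG_{\rCS}^{-1}$, noting that $\hG_{\rCS}$ is invertible so the kernel of $\hat{\sfA}^{\rCS}$ is $\hG_{\rCS}$ applied to the one-dimensional kernel of $1-x^{-}_0$ spanned by $P_\vac=1$, and fixing the scalar by the normalization $\sfZ^{\rCS}(0)=1$. Your additions---the explicit eigenvalue computation $1-x_\lam^\vee=(1-q^{-1})(1-t)\chi_\lam^\vee$ and the flag that uniqueness requires genericity of $q,t$---are sound refinements of steps the paper treats implicitly, not a different argument.
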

\begin{proof}
The form of the solution follows directly from the calculation above. Taking into account the conjugation relation, we can write the recursion equation as 
\begin{equation}
\label{eq:eigen-rCS-vac}
 \hG_{\rCS}\,(1-x^-_0)\,\hG_{\rCS}^{-1} \cdot \sfZ^{\rCS}(\bp) = 0\,.
\end{equation}
The operator $\hG_{\rCS}$ is invertible hence it has trivial kernel. Therefore $\hG_{\rCS}^{-1} \cdot \sfZ^{\rCS}(\bp)$ lies in the kernel of $(1-x^-_0)$, which is one dimensional and consists of constant symmetric functions. This is because $1$ is the eigenvalue of $x^{-}_0$ when acting on the Macdonald polynomial $P_\vac\equiv1$.
Using the fact that the generating function is normalized such that $\sfZ^\rCS(0)=1$, we find that $\hG_{\rCS}^{-1} \cdot \sfZ^{\rCS}(\bp)=1$, which is equivalent to \eqref{eq:rCSGrep}.
\end{proof}
Superintegrability of the rCS model then follows as a simple corollary of this lemma. From \eqref{eq:rCSGrep}, we have
\begin{equation}
 \sfZ^{\rCS}(\bp)
 = \ev{\exp\left( \sum_{k\geq1} \frac{1-t^k}{1-q^k} \frac{p_k}{k}
 \sum_{i=1}^N x_i^k \right) }^{\rCS}
 = \Delta^{+}(Q)\sfT\cdot\exp\left( \sum_{k\geq1} (rq^{\frac12}Qt^{-1}q)^k
 \frac{p_k}{k(1-q^k)} \right)
\end{equation}
and by identifying the coefficients of $P_\lam(\bp)$ on both sides of the second equality, we obtain
\be
 \ev{P_\lam(\bx)}^{\rCS} = \Delta^{+}_\lam(Q)\, \sfT_\lam\,
 P_\lam\left(p_k=\frac{(rq^{\frac12}Qt^{-1}q)^k}{1-t^k}\right)\,.
\ee
This proves point \ref{it:SI:1} in Theorem~\ref{thm:SIqt}.

\begin{remark}
Let us now compare the superintegrability formula with the CMM identity \eqref{eq:CMM} for $\mu=\vac$.
Using the five-term relation \eqref{eq:5t} and the identity \eqref{eq:BHid}, we have
\be
\ba
 \sfZ^{\rCS}(\bp)
 &= \sfT \Delta^{+}(Q) \exp\left( \sum_{k\geq1} \frac{p_k}{k(1-q^k)} \right) \\
 &= \sfT \exp\left( \sum_{k\geq1} \frac{p_k}{k(1-q^k)} \right)
 \sfT \exp\left( \sum_{k\geq1} \frac{(-Q)^k p_k}{k(1-q^k)} \right) \\
 &= \sfT \exp\left( \sum_{k\geq1} \frac{1-Q^k}{1-q^k}\frac{p_k}{k} \right) \\
 &= \sum_{\lam} b_\lam \sfT_\lam\,P_\lam\left(p_k=\frac{1-Q^k}{1-t^k}\right) P_\lam\,.
\ea
\ee
This implies that the expectation value of $P_\lam$ can be written as
\be
 \ev{P_\lam}^{\rCS} = \sfT_\lam\,P_\lam\left(p_k=\frac{1-Q^k}{1-t^k}\right)
\ee
which matches precisely with \eqref{eq:CMM} upon the identification $u_\vac=\frac{1-Q}{1-t}$.
\end{remark}

\subsubsection{Proof of \ref{it:SI:2}}

Using that the weight function is the inverse of the one considered in the previous section (i.e.\ $\ell=-1$), we have
\be
 M^{-}(z) = q^\frac12\,,
 \hspace{30pt}
 M^{+}(z) = rz\,,
\ee
which leads to the recursion
\be
 \hat{\sfA}^{\overline{\rCS}} \cdot \sfZ^{\overline{\rCS}}(\bp)=0\,,
\ee
with
\be
 \hat{\sfA}^{\overline{\rCS}} := (1-x^{+}_0)+(q^\frac12 r^{-1} (qt^{-1}Q)^{-1})
 \left((t/q)(1-t^{-1})p_1+Q^{-1}q^{-1}x^{-}_{-1}\right)\,.
\ee
If we introduce, as before, a gauge transformation
\be\label{eq:gauge-op-rCS}
 \hG_{\overline{\rCS}} := \sfT^{-1}\Delta^{-}(Q^{-1})
 \exp\left( -\sum_{k\geq1} (q^\frac12 r^{-1} (qt^{-1}Q)^{-1})^k\frac{(t/q)^k p_k}
 {k(1-q^k)} \right)
 \Delta^{-}(Q^{-1})^{-1}\sfT\,,
\ee
we immediately find\footnote{We use the identity \eqref{eq:5t:e5}.}
\be
 \hat{\sfA}^{\overline{\rCS}} = \hG_{\overline{\rCS}}\,(1-x^{+}_0)\,
 \hG_{\overline{\rCS}}^{-1}
\ee
which leads to the unique solution
\be
\ba
 \sfZ^{\overline{\rCS}}(\bp) = \hG_{\overline{\rCS}}\,\cdot 1
 &= \sfT^{-1}\Delta^{-}(Q^{-1})
 \exp\left( -\sum_{k\geq1} (q^\frac12 r^{-1} (qt^{-1}Q)^{-1})^k\frac{(t/q)^kp_k}{k(1-q^k)} \right)\\
 &= \sfT^{-1}\Delta^{+}(Q)
 \exp\left( \sum_{k\geq1} (q^\frac12 r^{-1} (qt^{-1}Q)^{-2})^k\frac{p_k}{k(1-q^k)} \right)
\ea
\ee
from which it is straightforward to read the superintegrability formula for Macdonald functions. This proves \eqref{eq:SI:2} in Theorem~\ref{thm:SIqt}.

It is worth noting that in this case the recursion operator conjugates to the diagonal operator $(1-x^{+}_0)$ instead of $(1-x^{-}_0)$. This is in fact what one would expect by observing that $x^{-}_0$ and $x^{+}_0$ are related by the map that inverts the parameters $q$ and $t$.

\subsubsection{Comparison with skein-recursions}
Expectation values of Macdonald polynomials in the refined Chern--Simons model conjecturally compute some knot invariants, known as colored superpolynomials of the unknot \cite{Aganagic:2011sg,Aganagic:2012ne,Cherednik:2011nr}:
\begin{equation}
    \ev{P_\lam}^{\rCS} = \mathcal{P}_\lam^{\rm unknot}(q,t,Q)
\end{equation}
The generating function of rCS theory is then a refined version of the Ooguri--Vafa partition function for the unknot. In that context, it is useful to compare our equation \eqref{eq:eigenvalueeqrCS} to the skein recursion obtained in \cite{Ekholm:2019yqp,Ekholm:2020csl,Ekholm:2024ceb}.
Since the skein recursions are only known in the unrefined limit, we fix $t=q$ (and $r=q^{-\frac12}Q^{-1}$ for simplicity), which leads to
\begin{equation}
 \left\{(1-x^{+}_0)-(1-q) p_1+Q^{-1}x^{-}_{-1} \right\} \sfZ^{\overline{\rm CS}}(\bp)=0\,.
\end{equation}
Using notations of \cite{Ekholm:2024ceb}, we can identify the $\DIM$ generators with the skein algebra elements,
\begin{equation}
 1 \leftrightarrow \sfP_{0,0} \,,\quad
 x_{-k}^{-} \leftrightarrow \sfP_{-1,k}\,,\quad
 p_1 \leftrightarrow \sfP_{0,1}\,,\quad
 x_{-k}^{+}\leftrightarrow \sfP_{1,k}
\end{equation}
where $\sfP_{i,j}\in\mathrm{Sk}(T^2)$ are the generators of the skein algebra of the torus.
Taking into account different choices of normalization, our recursion operator corresponds to an operator in the skein:
\begin{equation}
    \sfA^{\rm skein}=\left( \sfP_{0,0} - \sfP_{1,0} + q^{\frac{1}{2}} \sfP_{0,1} + Q^{-1} q^{\frac{1}{2}} \sfP_{-1,1} \right)
\end{equation}
We want to compare this result to the skein recursion of \cite{Ekholm:2024ceb}. First, note that we use differently normalized $q$-variables, hence $q_{\rm{ELN}}=q^{1/2}$ and  $a_{\rm{ELN}} = Q^{1/2}$. Finally, to obtain the operator of \cite{Ekholm:2024ceb} (denoted as $\sfA_U^{(-1)}$ there) we rescale the operators as $\sfP_{i,j} \to \left( q^{\frac{1}{2}}Q^{-\frac{1}{2}} \right)^j \sfP_{i,j}$ and choose framing to be equal to $f=-1$.
Therefore we see that with these choices, we are able to match the equation for the generating function that we obtained at $t=q$ to the skein recursion relations of  \cite{Ekholm:2020csl,Ekholm:2024ceb}.

It is conjectured \cite{morton2017} that the refinement of the torus HOMFLY skein algebra is the elliptic Hall algebra or the quantum toroidal $\mathfrak{gl}_1$ algebra, hence we can also conjecture, that the equation obtained here for the refined CS model are the correct refinement of the skein recursion relations of \cite{Ekholm:2019yqp,Ekholm:2020csl,Ekholm:2024ceb}.

\subsection{Models with matter}

\subsubsection{Proof of \ref{it:SI:3}}

Let us start with $(\Nf,\Nbf)=(1,0)$ and $\ell=0$. We have
\be
 M^-(z) = 1-u_1 z\,,
 \hspace{30pt}
 M^+(z) = r
\ee
and hence we can write the recursion equation
\be\label{eq:Nf1recusion}
 \hat{\sfA}^{1\bar{0}}\cdot \sfZ^{1\bar{0}}(\bp) = 0
\ee
with
\begin{equation}
\label{eq:Nf1A1}
 \hat{\sfA}^{1\bar{0}} := (1-x^{-}_0)
 + u_1^{-1}\left( q^{-1} x^{-}_{-1}
 + Q(1-t^{-1})(t/q) p_1
 + r Q (1-t^{-1}) p_1
 - Q^2 r x^{+}_{-1})\right)
\end{equation}
To solve this equation, as we did before, we need to construct a gauge transformation operator which transforms the recursion equation into a simple eigenvalue equation for the trivial function $1$. Choosing
\be
\label{eq:gauge-transf-Nf1}
 \hG_{1\bar{0}} = \Delta^{+}(Q)\Delta^{+}(rqt^{-1}Q)
 \exp\left(\sum_{k\geq1}\frac{u_1^{-k}p_k}{k(1-q^k)}\right)
 \Delta^{+}(rqt^{-1}Q)^{-1}\Delta^{+}(Q)^{-1}
\ee
as gauge transformation, we find that $\hat{\sfA}^{1\bar{0}} = \hG_{1\bar{0}}(1-x^{-}_0)\hG_{1\bar{0}}^{-1}$, which leads to the solution
\be
 \sfZ^{1\bar{0}}(\bp) = \hG_{1\bar{0}}\cdot 1
 = \Delta^{+}(Q)\Delta^{+}(rqt^{-1}Q)
 \cdot \exp\left(\sum_{k\geq1}\frac{u_1^{-k}p_k}{k(1-q^k)}\right)\,.
\ee
This proves \eqref{eq:SI:3} in Theorem~\ref{thm:SIqt}.

\subsubsection{Proof of \ref{it:SI:4}}

This case corresponds to $(\Nf,\Nbf)=(0,1)$ and $\ell=0$. We have
\be
 M^{-}(z) = 1\,,
 \hspace{30pt}
 M^{+}(z) = r(1-v_1 z)
\ee
and hence we can write the recursion equation
\begin{equation}
\label{eq:Nf01recusion}
 \hat{\sfA}^{0\bar{1}}\cdot \sfZ^{0\bar{1}}(\bp) = 0
\end{equation}
with
\begin{equation}
\label{eq:Nf01A1}
 \hat{\sfA}^{0\bar{1}} :=  (1-x^{+}_0)
 +v_1^{-1}\left(
 (t/q) x^{+}_{-1}
 - (1-t^{-1})Q^{-1}(t/q)(1+(t/q) r^{-1}) p_1
 - Q^{-2}(t/q)r^{-1}q^{-1} x^{-}_{-1}
 \right)
\end{equation}
Choosing
\be
\label{eq:gauge-transf-Nf01}
 \hG_{0\bar{1}} = \sfT^{-2} \Delta^{+}(Q) \Delta^{+}(rqt^{-1}Q)
 \exp\left(\sum_{k\geq1}\frac{v_1^{-k}p_k}{k(1-q^k)}\right)
 \Delta^{+}(rqt^{-1}Q)^{-1} \Delta^{+}(Q)^{-1} \sfT^{2}
\ee
as gauge transformation, we find that $\hat{\sfA}^{0\bar{1}} = \hG_{0\bar{1}}(1-x^{+}_0)\hG_{0\bar{1}}^{-1}$, which leads to the solution
\be
 \sfZ^{0\bar{1}}(\bp) = \hG_{0\bar{1}}\cdot 1
 = \sfT^{-2} \Delta^{+}(Q) \Delta^{+}(rqt^{-1}Q)
 \cdot \exp\left(-\sum_{k\geq1}(r^{-1}(qt^{-1}Q)^{-2})^k
 \frac{v_1^{-k}p_k}{k(1-q^k)}\right)\,.
\ee
This proves \eqref{eq:SI:4} in Theorem~\ref{thm:SIqt}.

\subsubsection{Proof of \ref{it:SI:5}}

Similarly, for $(\Nf,\Nbf)=(2,0)$ and $\ell=0$, we have
\be
 M^{-}(z) = (1-u_1 z)(1-u_2 z)\,,
 \hspace{30pt}
 M^{+}(z) = 1
\ee
where we had to fix $r=1$ in order to have enough well-defined constraints to be able to solve the recursion uniquely.
In this case, the partition function satisfies the equation
\be\label{eq:D2}
 \hat{\sfA}^{2\bar{0}}\cdot \sfZ^{2\bar{0}}(\bp)=0
\ee
with
\begin{multline}
 \hat{\sfA}^{2\bar{0}} := (1-x^{-}_0)
 + (u_1^{-1}+u_2^{-1}) \left(q^{-1} x^-_{-1}+Q(1-t^{-1})(t/q)p_1 \right) \\
 - (u_1 u_2)^{-1}\left(q^{-2} x^-_{-2}-\frac{Q}{(1-q/t)}h_2
\left[(1-t^{-k})(1-(t/q)^k)p_k \right]-Q^2(t/q) x^{+}_{-2}\right)
\end{multline}
If we define the gauge transformation operator as
\be\label{eq:G20}
 \hat{G}_{2\bar{0}} := 
 \Delta^{+}(Q)
 \exp\left(\sum_{k\geq1}\frac{(u_1^{-k}+u_2^{-k})p_k}{k(1-q^k)}\right)
 \Delta^{+}(Q)^{-1}
\ee
then we find that $\hat{\sfA}^{2\bar{0}}=\hat{G}_{2\bar{0}}(1-x^{-}_0)\hat{G}_{2\bar{0}}^{-1}$ and the solution can be written as
\be
 \sfZ^{2\bar{0}}(\bp) = \hat{G}_{2\bar{0}}\cdot 1
 = \Delta^{+}(Q)\cdot
 \exp\left(\sum_{k\geq1}\frac{(u_1^{-k}+u_2^{-k})p_k}{k(1-q^k)}\right)\,.
\ee
This proves \eqref{eq:SI:5} in Theorem~\ref{thm:SIqt}.

\subsubsection{Proof of \ref{it:SI:6}}

For $(\Nf,\Nbf)=(0,2)$, $\ell=0$ and $r=1$, we have
\be
 M^{-}(z) = 1 \,,
 \hspace{30pt}
 M^{+}(z) = (1-v_1 z)(1-v_2 z)\,.
\ee
The partition function satisfies the equation
\begin{equation}
    \hat{\sfA}^{0\bar{2}}\cdot \sfZ^{0\bar{2}}(\bp)=0
\end{equation}
with
\begin{multline}
 \hat{\sfA}^{0\bar{2}} := (1-x^{+}_0)
 + (v_1v_2)^{-1}\Big(
 - (t/q)^2 x^{+}_{-2}
 - \frac{q^{-1}tQ^{-1}}{1-(q/t)} h_2((1-t^{-k})(1-(t/q)^k)p_k)
 + (t/q) Q^{-2} q^{-2} x^{-}_{-2} \\
 - (v_1+v_2)(t/q)((1-t^{-1})Q^{-1} p_1 - x^{+}_{-1})
 \Big)
\end{multline}
If we define the gauge transformation operator as
\be
 \hat{G}_{0\bar{2}} := 
 \sfT^{-1}\Delta^{+}(Q)
 \exp\left(-\sum_{k\geq1}(-qt^{-1}Q)^{-k}\frac{(v_1^{-k}+v_2^{-k})p_k}{k(1-q^k)}\right)
 \Delta^{+}(Q)^{-1}\sfT
\ee
then we find that $\hat{\sfA}^{0\bar{2}}=\hat{G}_{0\bar{2}}(1-x^{+}_0)\hat{G}_{0\bar{2}}^{-1}$ and the solution can be written as
\be
 \sfZ^{0\bar{2}}(\bp) = \hat{G}_{0\bar{2}}\cdot 1
 = \sfT^{-1}\Delta^{+}(Q)\cdot
 \exp\left(-\sum_{k\geq1}(-qt^{-1}Q)^{-k}\frac{(v_1^{-k}+v_2^{-k})p_k}{k(1-q^k)}\right)\,.
\ee
This proves \eqref{eq:SI:6} in Theorem~\ref{thm:SIqt}.

\subsubsection{Proof of \ref{it:SI:7}}

For $(\Nf,\Nbf)=(1,1)$ and $\ell=0$, we have
\be
 M^{-}(z) = 1-u_1z\,,
 \hspace{30pt}
 M^{+}(z) = r(1-v_1z)
\ee
from which we obtain the recursion equation
\begin{equation}
\label{eq:rec11}
 \sfA^{1\bar{1}}\cdot\sfZ^{1\bar{1}}(\bp) = 0
\end{equation}
with
\begin{multline}
\label{eq:Arec11}
 \sfA^{1\bar{1}} := \Big\{(1-x^{-}_0) + u_1^{-1}
 \left(q^{-1}x^{-}_{-1} + Q(1-t^{-1})(t/q) p_1\right)\Big\} \\
  - (t/q)\omega_1
 \Big\{(1-x^{+}_0)-v_1^{-1}\left(Q^{-1}(1-t^{-1})(t/q)p_1-(t/q)x^{+}_{-1}\right)\Big\}
\end{multline}
where, for convenience, we introduced the parameter
\be
 \omega_1 := r(qt^{-1}Q)^2\frac{v_1}{u_1}
\ee
The situation turns out to be rather different in this case as the degree-zero part of the recursion operator contains both $(1-x^{-}_0)$ and $(1-x^{+}_0)$. In this case we have not been able to find an operator that conjugates $\sfA^{1\bar{1}}$ to a simple eigenvalue equation. Nevertheless, if we define
\be
 \hG_{1\bar{1}} := \frac{\Delta^{+}(Q)\Delta^{+}(rqt^{-1}Q)}
 {\Delta^{+}(\omega_1)}
 \exp\left(\sum_{k\geq1}\frac{u_1^{-k}p_k}{k(1-q^k)}\right)
 \frac{\Delta^{+}(\omega_1)}
 {\Delta^{+}(Q)\Delta^{+}(rqt^{-1}Q)}
\ee
and
\be
 \hF_{1\bar{1}} := \frac{\Delta^{+}(Q)\Delta^{+}(rqt^{-1}Q)}
 {\Delta^{+}(\omega_1)}
 \sfT^{-1}
 \exp\left(-\sum_{k\geq1}\frac{(-1)^{k}u_1^{-k}p_k}{k(1-q^k)}\right)
 \sfT
 \frac{\Delta^{+}(\omega_1)}
 {\Delta^{+}(Q)\Delta^{+}(rqt^{-1}Q)}
\ee
we have the identity
\be
 \sfA^{1\bar{1}} = \hG_{1\bar{1}} (1-x^{-}_0) \hG_{1\bar{1}}^{-1}
 -(t/q)\,\omega_1\, \hF_{1\bar{1}} (1-x^{+}_0) \hF_{1\bar{1}}^{-1}
\ee
Now we can use \eqref{eq:BHid} to show that
\be
 \hG_{1\bar{1}}\cdot 1=\hF_{1\bar{1}}\cdot 1
\ee
which leads to the conclusion that
\be\label{eq:sol11}
 \sfZ^{1\bar{1}}(\bp) = \hG_{1\bar{1}} \cdot 1
 = \frac{\Delta^{+}(Q)\Delta^{+}(rqt^{-1}Q)}
 {\Delta^{+}(\omega_1)}
 \exp\left(\sum_{k\geq1}\frac{u_1^{-k}p_k}{k(1-q^k)}\right)
\ee
is a solution to the recursion \eqref{eq:rec11}. In fact $\sfA^{1\bar{1}}$ is a sum of two operators both of which annihilate the function \eqref{eq:sol11} separately. Because of this, we need to show that this is the unique solution (up to scalars). Using the fact that the spectrum of the operator $x^{\pm}_0$ is discrete, we can finally argue that, for generic enough values of the relative coefficient $\omega_1$, the kernel of $\sfA^{1\bar{1}}$ is equal to the kernel of $\hG_{1\bar{1}} (1-x^{-}_0) \hG_{1\bar{1}}^{-1}$ which is also equal to the kernel of $\hF_{1\bar{1}} (1-x^{+}_0) \hF_{1\bar{1}}^{-1}$.
This guarantees that \eqref{eq:sol11} is the unique solution of the recursion and concludes the proof of \eqref{eq:SI:6} in Theorem~\ref{thm:SIqt}.

\begin{remark}
Notice that if $v_1=0$ in \eqref{eq:Arec11}, the recursion operator $\hat{\sfA}^{1\bar{1}}$ reduces to the operator $\hat{\sfA}^{1\bar{0}}$ in \eqref{eq:Nf1A1}. The same limit holds at the level of the corresponding solutions.
On the other hand, if $u_1=0$, we have the limit $\lim_{u_1\to0}u_1\hat{\sfA}^{1\bar{1}}=-v_1Q(rqt^{-1}Q)\hat{\sfA}^{0\bar{1}}$.
This shows that \ref{it:SI:3} and \ref{it:SI:4} are corollaries of \ref{it:SI:7}.
\end{remark}

\subsubsection{Proof of \ref{it:SI:8}}

For $(\Nf,\Nbf)=(1,0)$ and $\ell=-1$, we have
\be
 M^{-}(z) = q^\frac12(1-u_1z) \,,
 \hspace{30pt}
 M^{+}(z) = rz
\ee
form which we obtain the recursion operator
\be
 \hat{\sfA}^{w} := (1-x^{+}_0)
 + q^\frac12 (rqt^{-1}Q)^{-1}
 \left( Q^{-1} q^{-1} x^{-}_{-1} + (t/q)(1-t^{-1})p_1 + u_1 Q^{-1}(1-x^{-}_0) \right)\,.
\ee
We observe now that this operator can be obtained from $\hat{\sfA}^{1\bar{1}}$ in \eqref{eq:Arec11} by rescaling the parameters as
\be
 r \mapsto \epsilon r\,,
 \hspace{30pt}
 v_1 \mapsto -\epsilon^{-1} q^{-\frac12}\,,
\ee
and then taking the limit $\epsilon\to0$. We easily deduce then that the solution takes the form
\be
\ba
 \sfZ^{w}(\bp)
 &= \lim_{\epsilon\to0}\frac{\Delta^{+}(Q)\Delta^{+}(\epsilon rqt^{-1}Q)}
 {\Delta^{+}(\epsilon r(qt^{-1}Q)^2\frac{(-\epsilon^{-1} q^{-\frac12})}{u_1})}
 \exp\left(\sum_{k\geq1}\frac{u_1^{-k}p_k}{k(1-q^k)}\right)\\
 &= \frac{\Delta^{+}(Q)}
 {\Delta^{+}(-q^{-\frac12}r(qt^{-1}Q)^2u_1^{-1})}
 \exp\left(\sum_{k\geq1}\frac{u_1^{-k}p_k}{k(1-q^k)}\right)\,.
\ea
\ee
This proves \eqref{eq:SI:8} in Theorem~\ref{thm:SIqt}.

\subsubsection{Proof of \ref{it:SI:9}}

For $(\Nf,\Nbf)=(0,1)$ and $\ell=1$, we have
\be
 M^{-}(z) = z \,,
 \hspace{30pt}
 M^{+}(z) = rq^\frac12(1-v_1z)
\ee
form which we obtain the recursion operator
\be
 \hat{\sfA}^{w} := (1-x^{-}_0)
 + (q^\frac12 r q t^{-1} Q )
 \left( Q (t/q) x^{+}_{-1} - (t/q)(1-t^{-1}) p_1 + v_1 Q (1-x^{+}_0 )\right)\,.
\ee
Just as in the previous case, we can obtain this operator from $\hat{\sfA}^{1\bar{1}}$ in \eqref{eq:Arec11} by rescaling the parameters as
\be
 r \mapsto \epsilon^{-1} r\,,
 \hspace{30pt}
 u_1 \mapsto -\epsilon^{-1} q^{-\frac12}\,,
\ee
and then taking the limit $\epsilon\to0$. We deduce that the solution takes the form
\be
\ba
 \sfZ^{w}(\bp)
 &= \lim_{\epsilon\to0}\frac{\Delta^{+}(Q)\Delta^{+}(\epsilon^{-1} rqt^{-1}Q)}
 {\Delta^{+}(\epsilon^{-1} r(qt^{-1}Q)^2\frac{v_1}{(-\epsilon^{-1} q^{-\frac12})})}
 \exp\left(\sum_{k\geq1}\frac{(-\epsilon q^{\frac12})^{k}p_k}{k(1-q^k)}\right) \\
 &= \lim_{\epsilon\to0}\frac{\Delta^{+}(Q)\Delta^{-}((\epsilon^{-1} rqt^{-1}Q)^{-1})\sfT}
 {\Delta^{+}(-q^{\frac12}r(qt^{-1}Q)^2 v_1)}
 (-\epsilon^{-1} rqt^{-1}Q)^{\hD}
 \exp\left(\sum_{k\geq1}\frac{(-\epsilon q^{\frac12})^{k}p_k}{k(1-q^k)}\right) \\
 &= \frac{\Delta^{+}(Q)\sfT}
 {\Delta^{+}(-q^{\frac12}r(qt^{-1}Q)^2 v_1)}
 \exp\left(\sum_{k\geq1}(q^{\frac12}rqt^{-1}Q)^k\frac{p_k}{k(1-q^k)}\right) \\
\ea
\ee
where we used \eqref{eq:Delta_id}.

This proves \eqref{eq:SI:9} in Theorem~\ref{thm:SIqt}.

\subsubsection{Proof of \ref{it:SI:10}}

For $(\Nf,\Nbf)=(2,2)$, $\ell=0$ and $r=1$, we have
\be
 M^{-}(z) = (1-u_1z)(1-u_2z)\,,
 \hspace{30pt}
 M^{+}(z) = (1-v_1z)(1-v_2z)
\ee
from which we obtain the recursion operator%
\footnote{Specializing the parameters as $u_2=\epsilon^{-1}$ and $v_2=r\epsilon^{-1}$, we can take the limit $\epsilon\to0$ and we obtain that the recursion operator $\hat{\sfA}^{2\bar{2}}$ reduces to the operator $\hat{\sfA}^{1\bar{1}}$.}
\be\label{eq:A22}
\ba
 \hat{\sfA}^{2\bar{2}}
 :=\,& \Big\{(1-x^{-}_0)
 +(u_1^{-1}+u_2^{-1})\left(-Qq^{-1}(1-t)p_1+q^{-1}x^{-}_{-1}\right) \\
 &+(u_1u_2)^{-1}\left(Qq^{-2}h_2\left((1-t^{k})p_k\right)
 -q^{-2} x^{-}_{-2}\right)
 \Big\} \\
 & - (t/q)\omega_2 \Big\{(1-x^{+}_0)
 +(v_1^{-1}+v_2^{-1})\left(-Q^{-1}(t/q)(1-t^{-1})p_1+(t/q)x^{+}_{-1}\right)\\
 &+(v_1v_2)^{-1}\left(Q^{-1}(t/q)^2h_2\left((1-t^{-k}) p_k\right)-(t/q)^2x^{+}_{-2}\right)
 \Big\}
\ea
\ee
where, for convenience, we introduced the parameter
\be
 \omega_2 := (qt^{-1}Q)^2\frac{v_1v_2}{u_1 u_2}\,.
\ee
We now want to find the kernel of this operator. First, we observe that we can rewrite as
\be\label{eq:A22b}
\ba
 \hat{\sfA}^{2\bar{2}}
 =\,& \frac{\Delta^{+}(Q)}{\Delta^{+}(\omega_2)}\Big\{(1-x^{-}_0)
 -q^{-1}(q^{-1}tQ^{-1}\omega_2)(v_1^{-1}+v_2^{-1})(1-t)p_1
 +q^{-1}(u_1^{-1}+u_2^{-1})x^{-}_{-1} \\
 &+q^{-2}(q^{-1}tQ^{-1}\omega_2)^2(v_1v_2)^{-1} h_2\left((1-t^{k})p_k\right)
 -q^{-2} (u_1u_2)^{-1}x^{-}_{-2}
 \Big\}\frac{\Delta^{+}(\omega_2)}{\Delta^{+}(Q)} \\
 & - (t/q)\omega_2 \frac{\Delta^{+}(Q)}{\Delta^{+}(\omega_2)}\Big\{(1-x^{+}_0)
 -(u_1^{-1}+u_2^{-1})(1-t^{-1})p_1
 +(v_1^{-1}+v_2^{-1})(q^{-1}tQ^{-1}\omega_2) x^{+}_{-1}\\
 &+(u_1u_2)^{-1} h_2\left((1-t^{-k}) p_k\right)
 -(q^{-1}tQ^{-1}\omega_2)^2(v_1v_2)^{-1}x^{+}_{-2}
 \Big\}\frac{\Delta^{+}(\omega_2)}{\Delta^{+}(Q)}\,.
\ea
\ee
Let us denote $E_k$ as a set of auxiliary powersum-like variables such that
\be
 E_k = E_k^{+}-E_k^{-}\,,
\ee
then we define a class of operators,
\be
\ba
 \hat{\sfB}^{-}_0[E_k] &= (1-x^{-}_0)
 + \sum_{l=1}^{\infty} (-q)^{-l} e_l\left((1-t^k)E_k^{-}\right)
 h_l\left( (1-t^k) p_k \right)
 - \sum_{l=1}^{\infty} (-q)^{-l} e_l\left((1-t^k)E_k^{+}\right) x^{-}_{-l} \\
 &= \oint\frac{\mathd z}{2\pi\mathi z}
 \exp\left(-\sum_{k\geq1}q^{-k}(1-t^k)\frac{z^{-k}E^{-}_k}{k}\right)
 \exp\left(\sum_{k\geq1}(1-t^k)\frac{z^k p_k}{k}\right) \\
 &- \oint\frac{\mathd z}{2\pi\mathi z}
 \exp\left(-\sum_{k\geq1}q^{-k}(1-t^k)\frac{z^{-k}E^{+}_k}{k}\right)
 x^{-}(z)
\ea
\ee
and
\be
\ba
 \hat{\sfB}^{+}_0[E_k] &= (1-x^{+}_0)
 + \sum_{l=1}^{\infty} (-1)^l e_l\left((1-t^k)E^{+}_k\right)
 h_l\left((1-t^{-k}) p_k \right)
 - \sum_{l=1}^{\infty} (-1)^l e_l\left((1-t^k)E^{-}_k\right) x^{+}_{-l} \\
 &= \oint\frac{\mathd z}{2\pi\mathi z}
 \exp\left(-\sum_{k\geq1}(1-t^k)\frac{z^{-k}E^{+}_k}{k}\right)
 \exp\left(\sum_{k\geq1}(1-t^{-k})\frac{z^k p_k}{k}\right) \\
 &- \oint\frac{\mathd z}{2\pi\mathi z}
 \exp\left(-\sum_{k\geq1}(1-t^{k})\frac{z^{-k}E^{-}_k}{k}\right)
 x^{+}(z)
\ea
\ee
as the zero-modes of certain currents. Comparing with the expression in \eqref{eq:A22b}, we now notice that we can write
\begin{equation}
 \hat{\sfA}^{2\bar{2}} = \frac{\Delta^{+}(Q)}{\Delta^{+}(\omega_2)}
 \left(\hat{\sfB}^{-}_0
 [E_k]-(t/q)\omega_2\,\hat{\sfB}^{+}_0
 [E_k] \right)
 \frac{\Delta^{+}(\omega_2)}{\Delta^{+}(Q)}
\end{equation}
for an appropriate choice of $E_k$, namely
\be
 E^{+}_k = \frac{u_1^{-k}+u_2^{-k}}{1-t^k}\,,
 \hspace{30pt}
 E^{-}_k = (q^{-1}tQ^{-1}\omega_2)^k\,\frac{v_1^{-k}+v_2^{-k}}{1-t^k}\,.
\ee
If we can now prove that $\hat{\sfB}^\pm_0$ share the same kernel and that this kernel is 1-dimensional, then, assuming $\omega_2$ generic, it follows that $\hat{\sfA}^{2\bar{2}}$ must also have a 1-dimensional kernel, and the solution to the recursion equation is unique.
In order to show this, we observe that conjugating $\hat{\sfB}^\pm_0$ by an appropriate plethystic exponential function as follows
\be
 \tilde{\sfB}^{\pm}_0[E_k] :=
 \exp\left(-\sum_{k\geq1} \frac{1-t^k}{1-q^k} \frac{E_k p_k}{k}\right)
 \cdot \hat{\sfB}^{\pm}_0[E_k] \cdot
 \exp\left(\sum_{k\geq1} \frac{1-t^k}{1-q^k} \frac{E_k p_k}{k}\right)
\ee
produces new operators
\be
\ba
 \tilde{\sfB}^{+}_0[E_k] &= \oint\frac{\mathd z}{2\pi\mathi z}
 \exp\left(-\sum_{k\geq1} (1-t^k)\frac{z^{-k} E^{+}_k}{k}\right)
 \left\{
 \exp\left(\sum_{k\geq1} (1-t^{-k})\frac{z^k p_k}{k}\right)
 - x^{+}(z) \right\} \\
 \tilde{\sfB}^{-}_0[E_k] &= \oint\frac{\mathd z}{2\pi\mathi z}
 \exp\left(-\sum_{k\geq1} q^{-k}(1-t^k)\frac{z^{-k} E^{-}_k}{k}\right)
 \left\{
 \exp\left(\sum_{k\geq1} (1-t^{k})\frac{z^k p_k}{k}\right)
 - x^{-}(z) \right\}
\ea
\ee
and it is now straightforward to show that\footnote{Notice that this is indeed true for any values of $E^{\pm}_k$.}
\be
 \tilde{\sfB}^{\pm}_0[E_k] \cdot 1 = 0
\ee
where the constant symmetric function $1\in\Lambda$ is the unique (up to scalar) solution to this equation.
In turn, this implies that $\hat{\sfB}^{\pm}_0[E_k]$ both have the same 1-dimensional kernel spanned by the function
\be
 \exp\left(\sum_{k\geq1} \frac{1-t^k}{1-q^k} \frac{E_k p_k}{k}\right)\,.
\ee
Finally, we find that the solution to $\hat{\sfA}^{2\bar{2}}\cdot\sfZ^{2\bar{2}}(\bp)=0$ is given by
\be
 \sfZ^{2\bar2}(\bp) = \frac{\Delta^{+}(Q)}{\Delta^{+}(\omega_2)}
 \exp\left(\sum_{k\geq1} \left((u_1^{-k}+u_2^{-k})
 - (v_1^{-k}+v_2^{-k}) \left( q^{-1} t Q^{-1} \omega_2 \right)^k\right)
 \frac{p_k}{k(1-q^k)}\right)\,.
\ee
We deduce that the specialization homomorphism $\varphi^{2\bar2}$ can be defined as
\be
 \varphi^{2\bar2}(p_k) = E_k\,.
\ee
This proves \eqref{eq:SI:10} in Theorem~\ref{thm:SIqt}.

While we have not been able to conjugate $\hat{\sfA}^{2\bar2}$ to a simple operator in this case, we can still formally define $\hG_{2\bar2}$ as
\be
 \hG_{2\bar2} := \frac{\Delta^{+}(Q)}{\Delta^{+}(\omega_2)}
 \exp\left(\sum_{k\geq1} \left((u_1^{-k}+u_2^{-k})
 - (v_1^{-k}+v_2^{-k}) \left( q^{-1} t Q^{-1} \omega_2 \right)^k\right)
 \frac{p_k}{k(1-q^k)}\right)
 \frac{\Delta^{+}(\omega_2)}{\Delta^{+}(Q)}\,,
\ee
then we have that $\sfZ^{2\bar2}(\bp)= \hG_{2\bar2}\cdot 1$.

\subsection{Relation to the \texorpdfstring{$W$}{W}-representation}

Formulas \eqref{eq:SI:1}--\eqref{eq:SI:10} can be written in a $W$/cut-and-join representation form. Namely, we can write
\be
\ba
 \sfZ^w(\bp)
 &= \hG_w\cdot 1
 = \hat{\sfC}^w \cdot \exp \left( \sum_{k\geq1} \frac{1-t^k}{1-q^k}
 \frac{\varphi_k^w p_k}{k}\right)
 \cdot(\hat{\sfC}^w)^{-1} \cdot 1 \\
 &= \exp \left( \sum_{k\geq1} \frac{1-t^k}{1-q^k} \frac{\varphi^w_k}{k}\,
 \hat{\sfC}^w\, p_k\, (\hat{\sfC}^w)^{-1}\right)\cdot 1
 = \exp \left( \sum_{k\geq1} \frac{1-t^k}{1-q^k} \frac{\varphi_k^w}{k}\, \hW^{w}_{-k}\right)
 \cdot 1
\ea
\ee
where the operators $\hW^{w}_{-k}:=\hat{\sfC}^w\cdot p_k\cdot(\hat{\sfC}^w)^{-1}$ are $W$-like operators.
For given models, these coincide with the operators constructed in \cite{Liu:2023trf} by naively reverse-engineering the superintegrabilty formulas (the operators $\hat{\mathbf{O}}$ in \cite{Liu:2023trf} are called $\hat{\sfC}$ here).

\section{Orthogonal polynomials and superintegrability}
\label{sec:orthogonal}

\subsection{Interpolation Macdonald polynomials and orthogonal polynomials for rCS}

As observed in \eqref{eq:CMM}, the Cherednik--Macdonald--Mehta identity gives a generalization of Macdonald superintegrability in the case of the rCS matrix model, which allows to compute explicitly the ensemble average of products of two Macdonald polynomials.
One may regard this as the definition of an Hermitian inner product on the space of symmetric polynomials in the variables $\bx$. This can be extended to an inner product on the space of symmetric functions as
\be\label{eq:rCSpairing}
 (P_\lam,P_\mu)^{\rCS} := \ev{ P_\lam \, P_\mu }^{\rCS}
 = \sfT_\lam\, P_\lam(u_\vac)\, P_\mu(u_\lam)\, \sfT_\mu\,,
\ee
with $u_\lam$ as in \eqref{eq:symbols3}.
It is not hard to see that this product is related to the so called Fourier/Hopf pairing \cite{Cherednik:1995mac,Okounkov:2001are,Beliakova:2021cyc}, $(\,,\,)_{\rm F}$, by a twist by the framing operator, i.e.\
\be\label{eq:rCSpairingFourier}
 (P_\lam,P_\mu)^{\rCS} = (\sfT\cdot P_\lam,\sfT\cdot P_\mu)_{\rm F}\,,
\ee
where the Fourier/Hopf pairing has the property
\be
 (f,P_\lam)_{\rm F} = f(u_\lam)\, P_\lam(u_\vac)
\ee
for any symmetric function $f\in\Lambda$. An orthogonal basis for the Fourier/Hopf pairing is given by the interpolation Macdonald polynomials $P^\ast_\lam$ studied in \cite{Knop:1996sym,Sahi:1996int,Sahi:1996dif,Okounkov:1997shi}. These are equivalently defined by their interpolation property
\be
 P^\ast_\mu(u_\lam) = 0
 \hspace{30pt}
 \text{if}
 \hspace{30pt}
 \mu\not\subseteq\lam\,,
\ee
and it was shown in \cite{Garsia2001,Carlsson:2013jka} that they can be obtained from the usual Macdonald functions via an exponential operator as
\be
 P^\ast_\lam = \Delta^{+}(Q)^{-1}
 \exp\left(-\sum_{k\geq1}\frac{1}{1-t^k}\frac{\partial}{\partial p_k}\right)
 \Delta^{+}(Q)\cdot P_\lam\,,
\ee
in a similar way to how the classical orthogonal polynomials can be obtained from Schur/Jack via some exponential operators in the spirit of the Lassalle--Nekrasov correspondence \cite{Lassalle:1991po,Nekrasov:1997jf}. From \eqref{eq:rCSpairingFourier} it is then clear that a basis of orthogonal functions for the rCS pairing is given by the functions
\be\label{eq:LN-rCS}
 \sfW^{\rCS}_\lam := \sfT_\lam\, \sfT^{-1}\cdot P^\ast_\lam
 = \sfT^{-1}\Delta^{+}(Q)^{-1}
 \exp\left(-\sum_{k\geq1}\frac{1}{1-t^k}\frac{\partial}{\partial p_k}\right)
 \Delta^{+}(Q)\sfT\cdot P_\lam
\ee
where the additional factor of $\sfT_\lam$ is purely conventional and it has been added for normalization purposes.
Upon close inspection of the operator in \eqref{eq:LN-rCS}, we realize that this is the adjoint inverse of the gauge transformation $\hG_{\rCS}$ in \eqref{eq:gauge-rCS}, so that\footnote{In order to have an exact match with $\hG_{\rCS}$, we have to fix $r=q^{-\frac12}Q^{-1}tq^{-1}$. Alternatively, we could rescale the power sums $p_k$ in the interpolation functions by the inverse of the same factor.}
\be
 \sfW^{\rCS}_\lam = (\hG_{\rCS}^\perp)^{-1}\cdot P_\lam
 = P_\lam + (\text{lower degree terms})\,.
\ee
Following the work of Olshanski in \cite{Olshanski:2019int}, one can define a dual basis w.r.t.\ the ordinary Macdonald inner product which we denote as $\sfZ^{\rCS}_\mu$ and that satisfies the orthogonality condition
\be
 \langle \sfW^{\rCS}_\lam,\sfZ^{\rCS}_\mu\rangle_{q,t} = b_\lam^{-1} \delta_{\lam,\mu}\,.
\ee
which is equivalent to the Cauchy identity
\be
 \exp\left( \sum_{k\geq1}\frac{1-t^k}{1-q^k}\frac{p_k(\bx)p_k(\by)}{k}\right)
 = \sum_\lam b_\lam\,\sfW^{\rCS}_\lam(\bx)\,\sfZ^{\rCS}_\lam(\by)\,.
\ee
By construction, we find that the $\sfZ^{\rCS}_\mu$ also admit and exponential operator construction given by the formula
\be
 \sfZ^{\rCS}_\mu = \hG_{\rCS}\cdot P_\mu
 = P_\mu + (\text{higher degree terms})\,,
\ee
and now it becomes obvious that the generating function of the rCS matrix model is just a special case corresponding to $\mu=\vac$, as evident from \eqref{eq:rCSGrep}.
Correspondingly, \eqref{eq:eigen-rCS-vac} is a special case of the more general eigenvalue equation
\be
 \hG_{\rCS}\,(x^\vee_\mu-x^{-}_0)\,\hG_{\rCS}^{-1} \cdot \sfZ^{\rCS}_\mu = 0\,.
\ee
with eigenvalue $x^\vee_\mu$ as in \eqref{eq:symbols2}.

In this specific case, we can even relate directly the rCS inner product to the Macdonald inner product by the formula
\be\label{eq:rCS-pairing}
 (f,g)^{\rCS} = \langle f, \hG_{\rCS}\cdot\hK^{\rCS}\cdot\hG_{\rCS}^\perp\cdot g \rangle_{q,t}
\ee
where $\hK^{\rCS}:=\hat{\sfC}^{\rCS}\sfT^2(-Qt^{-1})^{\hD}$, which follows from \cite[Theorem I.2]{Garsia2001}.

For the case of rCS with opposite orientation, we similarly have that $\sfW^{\overline{\rCS}}_\lam:=(\hG_{\overline{\rCS}}^\perp)^{-1}\cdot P_\lam$ form an orthogonal basis w.r.t.\ the inner product induced by the matrix model integral. Moreover, one can define analogous interpolation Macdonald polynomials\footnote{Again, we have chosen $r=q^{-\frac12}Q^{-1}tq^{-1}$ in \eqref{eq:gauge-op-rCS} for simplicity.}
\be
 P^{\overline{\ast}}_\lam := \sfT_\lam\,\sfT^{-1}\cdot
 \sfW^{\overline{\rCS}}_\lam
\ee
which statisfy
\be
 P^{\overline{\ast}}_\mu(u^\vee_\lam) = 0
 \hspace{30pt}
 \text{if}
 \hspace{30pt}
 \mu\not\subseteq\lam\,.
\ee

\subsection{Orthogonal polynomials for models with matter}

Given our previous discussion of the orthogonal polynomials for the rCS matrix models, we would like to generalize those ideas to other superintegrable models as well. For this reason, whenever the matrix model with weight function $w(x)$ is Macdonald superintegrable as in Definition~\ref{def:Macdonald-SI}, we are lead to define the following two families of symmetric functions,
\be\label{eq:def-WZ}
 \sfW^w_\lam := (\hG_w^\perp)^{-1}\cdot P_\lam\,
 \hspace{30pt}
 \sfZ^w_\lam := \hG_w\cdot P_\lam\,,
\ee
for $\hG_w$ defined as
\be
 \hG_w := \hat{\sfC}^w\cdot\exp\left(\sum_{k\geq1}\frac{1-t^k}{1-q^k}
 \frac{\varphi^w_k\,p_k}{k}\right)\cdot(\hat{\sfC}^w)^{-1}\,,
\ee
which then implies
\be
 (\hG_w^{-1})^\perp = (\hat{\sfC}^w)^{-1}\cdot
 \exp\left(-\sum_{k\geq1} \varphi^w_k \frac{\partial}{\partial p_k}\right)
 \cdot \hat{\sfC}^w\,.
\ee
The functions $\sfW^w_\lam$ are non-homogeneous polynomials of top degree $|\lam|$, while $\sfZ^w_\lam$ are formal series of symmetric functions, whose lowest degree term has degree $|\lam|$.

Following from the definition, we have the orthogonality relation and Cauchy identity
\be
 \langle\sfW^w_\mu,\sfZ^w_\lam\rangle_{q,t} = b_\lam^{-1}\delta_{\lam,\mu}\,,
 \hspace{30pt}
 \exp\left(\sum_{k\geq1}\frac{1-t^k}{1-q^k}\frac{p_k(\bx)\,p_k(\by)}{k}\right)
 = \sum_\lam b_\lam\,\sfW^w_\lam(\bx)\,\sfZ^w_\lam(\by)\,,
\ee
as well as explicit formulas for the expansion in the Macdonald basis,
\be
\ba
 \sfW^w_\mu &= \sum_{\lam\subset\mu} \dfrac{\sfC^w_\mu}{\sfC^w_\lam}
 P_{\mu/\lam} \left(p_k=-\varphi^w_k \right) P_\lam\,, \\
 \sfZ^w_\mu &= \sum_{\mu\subset\lam} \dfrac{\sfC^w_\lam}{\sfC^w_\mu}
 P_{\lam/\mu} \left(p_k=\varphi^w_k \right) P_\lam\,,
\ea
\ee
and the eigenvalue equations
\be\label{eq:eigenop-WZ+}
 (\hG_w^\perp)^{-1} \, x_0^{+} \hG_w^\perp \cdot
 \sfW^w_\mu = x_\mu \, \sfW^w_\mu\,,
 \hspace{30pt}
 \hG_w \, x_0^{+} \hG_w^{-1} \cdot
 \sfZ^w_\mu = x_\mu \, \sfZ^w_\mu\,,
\ee
\be\label{eq:eigenop-WZ-}
 (\hG_w^\perp)^{-1} \, x_0^{-} \hG_w^\perp \cdot
 \sfW^w_\mu = x_\mu^\vee \, \sfW^w_\mu\,,
 \hspace{30pt}
 \hG_w \, x_0^{-} \hG_w^{-1} \cdot
 \sfZ^w_\mu = x_\mu^\vee \, \sfZ^w_\mu\,.
\ee

At this point we would like to claim that the $\sfW^w_\lam(\bx)$ are the orthogonal polynomials for the matrix model with weight function $w(x)$, however, computer experiments suggest that this is not true in general. In particular, we have checked that this seems to be the case only when the recursion operator $\hat{\sfA}^w$ satisfies either
\be
 \hat{\sfA}^w = \hG_w(1-x^{-}_0)\hG_w^{-1}\,,
\ee
or
\be
 \hat{\sfA}^w = \hG_w(1-x^{+}_0)\hG_w^{-1}\,,
\ee
or perhaps a linear combination of the two.
When this is the case (e.g.\ in the models from \ref{it:SI:1} to \ref{it:SI:6}), we conjecture that the matrix model integral induces an inner product such that
\be
 \ev{ \sfW_\lam(\bx)\,\sfW_\mu(\bx) }^w
 \stackrel{?}{=} b_\lam^{-1} K^w_\lam \delta_{\lam,\mu}
\ee
for some model-dependent constant $K^w_\lam$. We can then formally define the symmetric bilinear pairing
\be
 (f,g)^w := \langle f, \hG_{w}\cdot\hK^{w}\cdot\hG_{w}^\perp\cdot g \rangle_{q,t}
 \stackrel{?}{=} \ev{f(\bx)\,g(\bx)}^w
\ee
in analogy with the case of rCS in \eqref{eq:rCS-pairing}, where $\hK^{w}$ is diagonal on Macdonald functions with eigenvalue $K^{w}_\lam$. The functions $\sfW^w_\lam$ are, by construction, mutually orthogonal w.r.t.\ this inner product.

We observe that, as a straightforward corollary, we have the following exact expression (which seems to be related to a property called strong superintegrability in \cite{Mironov:2022gvd})
\be
\ba
 \ev{\sfW^w_\mu(\bx)\,P_\lam(\bx)}^{w}
 &= b^{-1}_\mu K^w_\mu \frac{\sfC^w_\lam}{\sfC^w_\mu} P_{\lam/\mu}(p_k=\varphi_k^w)\\
 &= b^{-1}_\mu \frac{K^w_\mu}{\sfC^w_\mu} \frac{P_{\lam/\mu}(p_k=\varphi_k^w)}
 {P_\lam(p_k=\varphi_k^w)} \ev{ P_\lam(\bx) }^w \,.
\ea
\ee

By the reproducing property of the Macdonald kernel, we also obtain
\be
 \dev{ f(\bx) }^{w}
 = \ev{ f(\bx)\, \exp\left( \sum_{k\geq1}\frac{1-t^k}{1-q^k}
 \frac{p_k}{k} \sum_{i=1}^N x_i^k \right) }^w
 = \hG_w \hK^w \hG^\perp_w \cdot f 
\ee
and in the case $f=\sfW^w_\lam$, we find
\be\label{eq:ZstarrcS}
 \dev{ \sfW^w_\lam(\bx) }^{w} = K^w_\lam\, \sfZ^w_\lam(\bp)\,.
\ee
For $\lam=\vac$, $K^w_\vac=1$ and one recovers the usual matrix model generating function \eqref{eq:gen-func-qt}.

In general, we expect that one should be able to prove the conjectural identity between $\sfW^w_\lam$ and the orthogonal polynomials by showing that the eigenoperators $(\hG_w^\perp)^{-1} \, x_0^{\pm} \hG_w^\perp$ are Hermitian w.r.t.\ the weight $w(x)$, i.e.\ they are self-adjoint w.r.t.\ the matrix model inner product (in finitely many variables).
In the case of the $(\Nf,\Nbf)=(2,0)$ model with $r=1$ and $\ell=0$, this was done in \cite{Baker:1997mul} and, as we show explicitly in Appendix~\ref{app:AlSalamCarlitz}, the functions $\sfW_\lam^{2\bar0}(\bx)$ can be identified with the Al-Salam--Carlitz polynomials. We leave the proof of the remaining cases for future work.

We conclude with a remark on the cases where the conjugation of the recursion operator by $\hG_w$ does not lead to a Macdonald operator. Let us consider the $(\Nf,\Nbf)=(2,2)$ model with $r=1$ and $\ell=0$ for concreteness (the other cases can be obtained through various limits). In this case, the orthogonal polynomials are know to be the multivariable big $q$-Jacobi polynomials $P^B_\lam(\bx)$ (see \cite{Stokman1997:mul}).
While it is not true that the operator $(\hG_{2\bar2}^\perp)^{-1}$ generates big $q$-Jacobi symmetric functions out of ordinary Macdonald functions $P_\lam$, we do have that the orthogonal functions $P^B_\lam$ satisfy the eigenvalue equation
\be
 (\hat{\sfA}^{2\bar2})^\perp \cdot P^B_\lam
 = \left((1-x^\vee_\lam)-(qt^{-1}Q^2\frac{v_1v_2}{u_1u_2})(1-x_\lam)\right)P^B_\lam\,,
\ee
with $(\hat{\sfA}^{2\bar2})^\perp$ the adjoint of the recursion operator in \eqref{eq:A22} and $x_\lam$ as in \eqref{eq:symbols2}.
This is the stable (large $N$) limit of the eigenvalue equation in \cite[Theorem~5.7]{Stokman1997:mul}.
This implies that the generating function $\sfZ^{2\bar2}(\bp)=\hG_{2\bar2}\cdot1$ is the orthogonal dual to $P^B_\vac(\bx)$ w.r.t.\ the Macdonald product, since it does satisfy the dual eigenvalue equation for $\lam=\vac$, namely $\hat{\sfA}^{2\bar2}\cdot\sfZ^{2\bar2}(\bp)=0$.

We expect that there should exist a different gauge transformation $\hG_B\neq\hG_{2\bar2}$ in this case, such that $(\hG_B^\perp)^{-1}\cdot P_\lam = P^B_\lam$ which however satisfies $\hG_B\cdot 1=\hG_{2\bar2}\cdot 1=\sfZ^{2\bar2}(\bp)$, thanks to some non-trivial identity. By the general logic of the Lassalle--Nekrasov correspondence, one then has
\be
 \hat{\sfA}^{2\bar2} =
 \hG_B\left((1-x^{-}_0)-(qt^{-1}Q^2\frac{v_1v_2}{u_1u_2})(1-x^{+}_0)\right)\hG_B^{-1}\,.
\ee
We are not aware of any explicit expression for the operator $\hG_B$ as an element of $\DIM$ in the existing literature. We should remark however, that some formulas for the matrix elements of $(\hG_B^\perp)^{-1}$ in the Macdonald basis have been derived in \cite[Corollary~3.3]{Olshanski:2020mac} via BC-type interpolation polynomials. It would be interesting to reformulate these results in terms of the action of the quantum toroidal algebra of $\mathfrak{gl}_1$.

\appendix
\addtocontents{toc}{\protect\setcounter{tocdepth}{1}}

\section{Jackson integrals and contour integrals}\label{sec:AppendixContours}
The matrix models discussed in the paper admit two equivalent representations: one as a contour integral and the other as a Jackson integral. Here we show that the latter can be transformed into the contour integral form by inserting an appropriate $q$-constant function.

Define the function $\gamma_q(x)$ as
\begin{equation}
 \gamma_q(x) := \exp\left(-\dfrac{\log^2(-x)}{2\log q} \right) \frac{x^{1/2}}
 {\theta_q(x)}
\end{equation}
where
\begin{equation}
 \theta_q(x) := (q;q)_{\infty}(x;q)_\infty(q x^{-1};q)_\infty
\end{equation}
is the Jacobi theta function.
The function $\gamma_q(x)$ is a $q$-constant, i.e.\
\begin{equation}
    \gamma_q(q x) = \gamma_q(x) \, \quad \Rightarrow \quad D^q_x \gamma_q(x) = 0 
\end{equation}
Then, the following equality allows us to transform a contour integral into a Jackson integral:
\begin{equation}
\label{eq:fromcontourtoJackson}
 \int_{\mathcal{C}_{q,a}}\mathd x\, \gamma_q(q x a^{-1})f(x)
 = - \dfrac{\exp{\left(\dfrac{\pi^2}{2 \log q}\right)} }{(q,q)_\infty^3}
 \sum_{n=0}^{\infty} (q^{n}a) f(q^{n}a) = \kappa_q \int_0^a \mathd_q x f(x)  
\end{equation}
where $\mathcal{C}_{q,a}$ is the contour that goes around the poles at $x=q^{n}a$, for $n\geq0$, coming from zeros of the theta function and avoids any poles of $f(x)$, if any.

Therefore, we define the following two functions:
\be
\label{eq:def-gamma_q}
\ba
 \gamma_q(x|a) &= \frac{1}{\kappa_q} \gamma_q(q x a^{-1})\,,\\
 \gamma_q(x|a,b) &= \frac{1}{\kappa_q}\Big(\gamma_q(q x b^{-1})-\gamma_q(q x a^{-1})\Big)\,.
\ea
\ee
Both functions are $q$-constant, since they involve only an argument rescaling compared to $\gamma_q(x)$. With these definitions, we have that
\begin{equation}
\ba
 & \int_{\mathcal{C}_{q,a}} \mathd x\,\gamma_q(x|a) f(x) =  \int_0^a \mathd_q x f(x)  \\
 & \int_{\mathcal{C}_{q,a,b}}\mathd x\, \gamma_q(x|a,b) f(x) = \int_a^b \mathd_q x f(x)  
\ea
\end{equation}
Here $\mathcal{C}_{q,a,b}$ goes around the respective poles of both $\gamma_q(q x a^{-1})$ and $\gamma_q(q x b^{-1})$.

\section{Semiclassical limit}
In this section we consider the semiclassical limit, which is defined by setting
\be
 q = \mathe^{\hbar} \,,
 \hspace{30pt}
 t = \mathe^{\hbar\beta}
\ee
and then taking the limit
\begin{equation}
    \hbar \rightarrow 0
\end{equation}
This limit is well understood at the level of superintegrability formulas for most of the examples that we study in this paper. After reviewing some of these results, we will explain how the standard cut-and-join/$W$-operator equations are reproduced from our recursions equations, in this limit.
At the level of the algebra, we expect that elements of the quantum toroidal algebra degenerate to elements of the affine Yangian $Y(\hat{\mathfrak{gl}_1})$ \cite{tsymbaliuk2017affine}.

Throughout this section we will use the $\Nf=1$ model as our main example, however most of the formulas here generalize to other models as well (with the exception of rCS).

First, it is important to recall that the semiclassical limit can be taken at the level of the integrand of the matrix model. We will sketch how it works in some examples. There are many caveats, as, for example, the possibility to take the limits in different ways to obtain new matrix models \cite{Mishnyakov:2024cgl}, which we will not discuss here.

Consider, for example, the model with $(\Nf,\Nbf)=(1,0)$ and $\ell=0$ in its Jackson integral formulation. To take the limit, we fix $u_1 = -(1-q^{-1})a_1$ and $r=q^\alpha$. Then, the potential becomes
\be
 \lim_{q\to1} x^{\frac{\log r}{\log q}} (-q(1-q^{-1})a_1x;q)_\infty
 = x^\alpha \exp(-a_1 x) \,.
\ee
At the same time the Jackson integral becomes an ordinary integral, and taking into account, that the upper limit of integration goes to $\infty$ under the limit, we obtain
\begin{multline}
 \lim_{q\to1} \int_0^{(qu_1)^{-1}} \prod_{i=1}^N \mathd_q x_i\,
 \Delta_{q,q^{\beta}}(\bx) f(\bx) \prod_{i=1}^N x_i^{\beta(N-1)+\alpha}
 (qu_1x_i;q)_{\infty} = \\
 = \int_0^\infty \prod_{i=1}^N \mathd x_i\,
 \prod_{i\neq j}(x_i-x_j)^\beta\, f(\bx) \prod_{i=1}^{N}x_i^{\alpha} \mathe^{a_1x_i} 
\end{multline}
which is nothing but the Wishart--Laguerre $\beta$-ensemble.

Clearly we would like the associated algebraic structures to also have a limit. The limit of the $q,t$-Virasoro constraints has been studied in \cite{Cassia:2020uxy}.
Below we show that the limit of the equation \eqref{eq:Nf1recusion} reproduces the cut-and-join/$W$-equation for the WL $\beta$-ensemble.

We need to compute the $\hbar$-expansion of the operator $\hat{\sfA}_{1\bar{0}}$ in \eqref{eq:Nf1A1}.
In the level-1 Fock representation of $\DIM$, this can be done directly by expanding the vertex operators $x^\pm(z)$ at leading order in $\hbar$.
\begin{lemma}
The semiclassical limit of the recursion operator in \eqref{eq:Nf1A1} reproduces the cut-and-join recursion operator of the Wishart--Laguerre $\beta$-ensemble, in fact, we have
\begin{equation}
\label{eq:classical-limit-D1}
 \hat{\sfA}_{1\bar{0}} = -\hbar^2\beta\left( \hD - \frac1{a_1}\hW^{\beta}_{-1}\right) + O(\hbar^3)\,,
\end{equation}
where $\hD$ is the degree operator in \eqref{eq:degree-op} and
\begin{multline}
 \hW^{\beta}_{-1}
 = \sum_{a,b=1}^\infty\left(
 \,a\, b\, p_{a+b+1} \frac{\partial^2}{\partial p_a \partial p_b}
 + \beta(a+b-1) p_a \,p_b \frac{\partial}{\partial p_{a+b-1}}
 \right) \\
 + \sum_{k=1}^{\infty}(\alpha+(1-\beta)(k+1)+2\beta N)\, k\, p_{k+1} \frac{\partial}{\partial p_k}
 + \beta N (\alpha + \beta(N-1)+1)\, p_1\,.
\end{multline}
\end{lemma}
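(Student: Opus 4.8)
The plan is to substitute the semiclassical parametrization $q=\mathe^{\hbar}$, $t=\mathe^{\hbar\beta}$ together with $u_1=-(1-q^{-1})a_1$, $r=q^\alpha$ and $Q=t^N=\mathe^{\hbar\beta N}$ directly into the recursion operator $\hat{\sfA}^{1\bar{0}}$ of \eqref{eq:Nf1A1}, and to expand everything as a power series in $\hbar$. The modes of the vertex operators are recovered as residues,
\be
 x^{\pm}_{-m} = \oint\frac{\mathd z}{2\pi\mathi z}\,z^{-m}\,x^{\pm}(z)\,,
\ee
and since every exponent appearing in \eqref{eq:VertexOperatorsx} carries a factor $1-t^{\mp k}$ or $1-q^{\pm k}$, each of which is $O(\hbar)$, the two exponentials factorizing $x^{\pm}(z)$ can be expanded order by order in $\hbar$. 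First I would extract the modes $x^{\pm}_0$ and $x^{\pm}_{-1}$ as Taylor coefficients in $\hbar$.

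The degree operator comes from the zero mode. Writing $x^{-}(z)=\mathe^{A^{-}(z)}\mathe^{B^{-}(z)}$ with $A^{-}$ carrying only positive and $B^{-}$ only negative powers of $z$, the $z^0$ part receives no contribution at order $\hbar$, and at order $\hbar^2$ only the cross term $A^{-}B^{-}$ survives; its diagonal part is $\hbar^2\beta\sum_{k\geq1}k\,p_k\frac{\partial}{\partial p_k}=\hbar^2\beta\,\hD$. Hence $(1-x^{-}_0)=-\hbar^2\beta\,\hD+O(\hbar^3)$, which already reproduces the degree term of \eqref{eq:classical-limit-D1}.

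The key structural point is that the prefactor has a simple pole, $u_1^{-1}=-(a_1\hbar)^{-1}\bigl(1+O(\hbar)\bigr)$. Consequently, for the product $u_1^{-1}\bigl\{q^{-1}x^{-}_{-1}+Q(1-t^{-1})(t/q)p_1+rQ(1-t^{-1})p_1-Q^2r\,x^{+}_{-1}\bigr\}$ to be $O(\hbar^2)$, the bracket must itself vanish to order $\hbar^2$, and its $\hbar^3$ coefficient then carries the full content of $\hW^{\beta}_{-1}$. I would therefore verify the two cancellations by hand: at order $\hbar$ the four terms each contribute $\pm\beta p_1$ and cancel in pairs, while at order $\hbar^2$ the scalar multiples of $p_1$ (carrying all the $\alpha,\beta,N$ dependence) sum to zero and the transport pieces $\beta\sum_{b\geq1}b\,p_{b+1}\frac{\partial}{\partial p_b}$ arising from $x^{-}_{-1}$ and from $x^{+}_{-1}$ cancel against each other. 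This forces the bracket to start at $\hbar^3$, so that only the leading $-(a_1\hbar)^{-1}$ piece of $u_1^{-1}$ feeds into the $\hbar^2$ coefficient of $\hat{\sfA}^{1\bar{0}}$.

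The final step is to read off the $\hbar^3$ coefficient of the bracket and identify it with $-\beta\,\hW^{\beta}_{-1}$. Here I would expand $x^{\pm}_{-1}$ to third order and sort the contributions by their operator content: the double-derivative terms $\sum_{a,b}ab\,p_{a+b+1}\frac{\partial^2}{\partial p_a\partial p_b}$ come from the $A^{\pm}\cdot\tfrac12(B^{\pm})^2$ parts, the terms $\beta\sum_{a,b}(a+b-1)p_ap_b\frac{\partial}{\partial p_{a+b-1}}$ from the $\tfrac12(A^{\pm})^2B^{\pm}$ parts, the linear transport term $\sum_k(\alpha+(1-\beta)(k+1)+2\beta N)k\,p_{k+1}\frac{\partial}{\partial p_k}$ from the subleading expansion of the $A^{\pm}B^{\pm}$ cross term dressed by the $Q$, $r$ and $t/q$ prefactors, and the inhomogeneous term $\beta N(\alpha+\beta(N-1)+1)p_1$ from the pure $A^{\pm}$ pieces weighted by $Q^2r$ and $Q(t/q)$. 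Adding the $-\hbar^2\beta\,\hD$ from the zero mode then gives \eqref{eq:classical-limit-D1}. I expect the main obstacle to be exactly this $\hbar^3$ bookkeeping: every $\alpha$-, $\beta$- and $N$-dependent coefficient must match identically, and because the pole in $u_1^{-1}$ raises the relevant order by one, the identification depends on the order-$\hbar$ and order-$\hbar^2$ cancellations holding exactly rather than merely asymptotically.
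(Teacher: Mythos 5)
Your proposal is correct: the parametrization, the observation that $u_1^{-1}=-(a_1\hbar)^{-1}(1+O(\hbar))$ forces the bracket multiplying it to vanish through order $\hbar^2$, the pairwise cancellations at orders $\hbar$ and $\hbar^2$, and the attribution of each block of $\hW^{\beta}_{-1}$ to a specific normal-ordered piece of the vertex operators all check out (the double derivatives indeed come from $A^{\pm}\cdot\tfrac12(B^{\pm})^2$, the $p_ap_b$ terms from $\tfrac12(A^{\pm})^2B^{\pm}$, the transport and inhomogeneous terms from cross terms and prefactor dressing). However, your implementation differs from the paper's. You expand the currents $x^{\pm}(z)$ of \eqref{eq:VertexOperatorsx} directly and extract the modes $x^{\pm}_{-1}$ by residue bookkeeping; the paper instead expands only the \emph{zero} modes, $1-x^{\mp}_0=-\hbar^2\beta\hD\pm\tfrac12\hbar^3\beta\,\hW^{\beta}_0+O(\hbar^4)$ (whose eigenvalues are explicit since $x^{\pm}_0$ are diagonal on Macdonald functions), and then obtains the $-1$ modes from the algebra relation \eqref{eq:[xpm,e1]} in the form $x^{\pm}_{-1}=\bigl[\tfrac{1-x^{\pm}_0}{1-q^{\pm1}},p_1\bigr]$, so that the entire order-$\hbar^2$ and order-$\hbar^3$ content arrives prepackaged as $[\hD,p_1]=p_1$ and $[\hW^{\beta}_0,p_1]$, with $\hW^{\beta}_0$ the Calogero Hamiltonian (the $\psi_3$ generator of the affine Yangian). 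The paper even notes that the expansion ``can be done explicitly'' --- which is exactly your route --- but chooses the commutator route as more instructive. What the paper's method buys is a much lighter computation (no triple-product expansions of the exponentials, no separate tracking of the $k$-dependent coefficients $k[k-\beta(k+1)]$ in the cross terms) together with structural insight into the Yangian degeneration; what your method buys is that it is entirely elementary, relying only on the explicit Fock-representation formulas, at the price of the $\hbar^3$ bookkeeping you rightly identify as the main burden. Both are valid proofs of \eqref{eq:classical-limit-D1}.
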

\begin{proof}
First, we expand the degree-zero part of the recursion operator,
\begin{equation}
 1-x_0^{-} = -\hbar^2 \beta \hD + \frac12\hbar^3\beta\,\hW_0^\beta + O(\hbar^4)\,,
\end{equation}
with
\be
 \hW_0^\beta =
 \sum_{a,b=1}^\infty \left(a b p_{a+b} \frac{\partial^2}{\partial p_a \partial p_b}
 + \beta(a+b) p_a p_b \frac{\partial}{\partial p_{a+b}}\right)
 + (1-\beta) \sum_{k=1}^\infty k^2 p_k \frac{\partial}{\partial p_k}\,.
\ee
The operator $\hW_0^\beta$ is nothing but the Calogero Hamiltonian and corresponds to the $\psi_3$ generator of the affine Yangian in the Fock representation \cite{Matsuo:2023lky,Prochazka:2015deb}.
Next, we expand the order-one terms in $\hat{\sfA}_{1\bar{0}}$.
This can be done explicitly, but it is instructive to use the commutation relations \eqref{eq:[xpm,e1]}, i.e.\
\be
 x_{-1}^{\pm} = \left[\frac{1-x_0^{\pm}}{1-q^{\pm1}}, p_1 \right]
 = \pm\hbar\,\beta\, p_1 -\frac12\hbar^2 \beta \left(p_1-[\hW_0^\beta,p_1]\right)
 + O(\hbar^3)
\ee
where we used $[\hD,p_1]=p_1$.
We also recall that in the semiclassical limit we fix $u_1=-(1-q^{-1})a_1$.
Putting everything together, we find that the leading term in the $\hbar$-expansion of $\hat{\sfA}_{1\bar{0}}$ is given by \eqref{eq:classical-limit-D1}.
\end{proof}

We can also consider the simiclassical limit at the level of the gauge transformation operator in \eqref{eq:gauge-transf-Nf1}. We can compute its limit explicitly using the identity
\be
 \lim_{q\to1} \frac{\Delta^{+}_\lam(q^z)}{(1-q)^{|\lam|}}
 = \prod_{(i,j)\in\lam} (z+(j-1)-\beta(i-1))
\ee
hence we find 
\be
\ba
 \lim_{q\to1} \hG_{1\bar{0}}
 &= \lim_{q\to1} \frac{\Delta^{+}(Qt^{-1}qr)}{(1-q)^{\hD}}
 \frac{\Delta^{+}(Q)}{(1-q)^{\hD}}
 \exp\left(\sum_{k\geq1}\frac{(1-q)^{2k}u_1^{-k}p_k}{k(1-q^k)}\right)
 \frac{(1-q)^{\hD}}{\Delta^{+}(Q)} \frac{(1-q)^{\hD}}{\Delta^{+}(Qt^{-1}qr)}\\
 &= \hat{\sfC}^{\beta\mathrm{WL}}
 \exp\left(\frac{p_1}{a_1}\right)
 (\hat{\sfC}^{\beta\mathrm{WL}})^{-1}
 = \exp\left(\frac{\hW^\beta_{-1}}{a_1}\right)
\ea
\ee
where $\hat{\sfC}^{\beta\mathrm{WL}}$ is the operator diagonal on Jack functions with eigenvalues $\sfC_\lam^{\beta\mathrm{WL}} := \prod\limits_{(i,j)\in\lam} (N+(j-1)-\beta(i-1))(N-\beta+1+\alpha+(j-1)-\beta(i-1))$.
For $\beta=1$ and $a_1=1$, the operator $\hW^\beta_{-1}$ reduces to the operator $\hW_{-1}$ in \eqref{eq:gauge-transform-WL}.

We observe that the operator $\hW^{\beta}_{-1}$ can be written as a linear combination of the generators $e_0,e_1,e_2$ of the affine Yangian. This, along with the degeneration of Macdonald polynomials into Jack polynomials, supports the expectation, that the $\DIM$ symmetry of the matrix model reduces to the affine $\mathfrak{gl}_1$ Yangian in the semiclassical limit, as predicted on the algebraic level \cite{tsymbaliuk2017affine,Matsuo:2023lky}.

\section{Quantum toroidal \texorpdfstring{$\mathfrak{gl}_1$}{gl1} algebra and its horizontal Fock representation}
\label{sec:AppendixDIM}

Within the scope of this paper, we work only with a single representation of the quantum toroidal $\mathfrak{gl}_1$ algebra --- the Fock space of a single free boson field. Moreover, only a few, albeit special, elements of the algebra are going to play a role in our discussion. However, as we will see, some of the computations could be potentially extended to the whole algebra. So, for this reason and for completeness, we give here the general definition of the algebra, and its horizontal Fock representation. We will use the same notation for elements of the algebra and their representations on $\Lambda$ in order to simplify the exposition. This should not lead to confusion since we will only work in a single representation.
\begin{definition}
The algebra $\DIM$ is defined by the generating currents $x^{\pm}(z)$, $\psi^{\pm}(z)$:
\begin{equation}
    x^{\pm}(z) = \sum_{n \in \mathbb{Z}} x^{\pm}_n z^{-n}\,,
    \hspace{30pt}
    \psi^{\pm}(z) = \sum_{n \geq 0} \psi^{\pm}_{\pm n} z^{\mp n}\,,
\end{equation}
and two central elements $c,\bar{c}$, subject to the relations:
\begin{equation}
\label{eq:comm}
\begin{gathered}
 {\left[\psi^{\pm}(z), \psi^{\pm}(w)\right] = 0,
 \quad
 \psi^{+}(z) \psi^{-}(w)
 = \frac{g\left((t/q)^c \frac{w}{z}\right)}
 {g\left(\frac{w}{z}\right)}
 \psi^{-}(w) \psi^{+}(z),} \\
 \psi^{+}(z) x^{+}(w) = g\left(\frac{w}{z}\right)^{-1}
 x^{+}(w) \psi^{+}(z),
 \quad
 \psi^{-}(z) x^{+}(w)
 = g\left(\frac{z}{w}\right)
 x^{+}(w) \psi^{-}(z), \\
 \psi^{+}(z) x^{-}(w)
 = g\left((t/q)^c \frac{qw}{z}\right)
 x^{-}(w) \psi^{+}(z),
 \quad
 \psi^{-}(z) x^{-}(w)
 = g\left(\frac{z}{qw}\right)^{-1}
 x^{-}(w) \psi^{-}(z), \\
 {\left[x^{+}(z), x^{-}(w)\right] = \frac{(1-q)(1-t^{-1})}{(1-qt^{-1})}
 \left(
  \delta\Big((t/q)^{-c}\frac{z}{qw}\Big) \psi^{+}(z)
 -\delta\Big(\frac{z}{qw}\Big) \psi^{-}(z)
 \right),} \\
 G^{\mp}\left(\frac{z}{w}\right) x^{\pm}(z) x^{\pm}(w) =
 G^{\pm}\left(\frac{z}{w}\right) x^{\pm}(w) x^{\pm}(z),
\end{gathered}
\end{equation}
where $G^\pm(z):=(1-q^{\pm1}z)(1-t^{\mp1}z)(1-t^{\pm1}q^{\mp1}z)$ and $g(z):=G^+(z)/G^-(z)$.
Moreover, we have $\psi^\pm_0=(t/q)^{\mp\frac{\bar{c}}{2}}$.
We omit here the Serre relations for triple commutators of $x^\pm(z)$ for brevity, and we refer the reader to \cite{Ding:1996mq,Feigin_2009} for more details.%
\footnote{Notice that we have used slightly different definitions here compared to those references. Namely $x^-_{\text{here}}(z)=x^-_{\text{there}}((t/q)^{\frac{c}{2}}qz)$ and $\psi^\pm_{\text{here}}(z)=\psi^\pm_{\text{there}}((t/q)^{\mp\frac{c}{4}}z)$.}
\end{definition}

Fock representations of the algebra $\DIM$ are labeled by the values of the two central elements $(c,\bar{c})$, and are all isomorphic to $\Lambda$ as vector spaces.
We are interested in the horizontal Fock representation of levels $(c,\bar{c})=(1,0)$.
In this representation, the Cartan currents $\psi^\pm(z)$ act as certain exponentials in the Heisenberg algebra generators $p_k$ and $k\frac{1-q^k}{1-t^k}\frac{\partial}{\partial p_k}$ on the ring of symmetric functions $\Lambda=\BC[p_1,p_2,p_3,\dots]$,
\be\label{eq:VertexOperatorsPsi}
 \psi^+(z) = \exp\Big(-\sum_{k\geq1}(1-q^k)(1-t^kq^{-k})z^{-k}
 \frac{\partial}{\partial p_k}\Big)\,,
 \hspace{10pt}
 \psi^-(z) = \exp\Big(\sum_{k\geq1}(1-t^{-k})(1-t^kq^{-k})z^k\frac{p_k}{k}\Big)\,,
\ee
while the currents $x^\pm(z)$ act as the vertex operators\footnote{We remind, that we use the same symbol for the element of the algebra and its image in the representation $(c,\bar{c})=(1,0)$.} in \eqref{eq:VertexOperatorsx}, whose zero-modes are diagonal on the Macdonald basis.
The creation and annihilation operators in the Heisenberg algebra also act in a special way on Macdonald polynomials. In particular, the action of multiplication by $p_1$ is known as the Pieri rule \cite[Ch.VI, \textsection6]{Macdonald:book}: 
\be
    p_1 P_\lam = \sum_{\nu \in \mathrm{A}(\lam) } \psi_{\nu/\lam}(q,t) P_\nu\,,
\ee
where $\mathrm{A}(\lam)$ denotes the set of all partitions that can be obtained by adding one box to $\lam$ and the coefficient is given by
\be
 \psi_{\nu/\lam}(q,t) = \frac{p_1(\ve_\vac)P_\lam(\ve_\vac)}
 {P_\nu(\ve_\vac)} \res_{z=\chi_{\nu/\lam}}z^{-1}
 \exp\left(\sum_{k\geq1}\frac{z^{-k}}{k}p_k(x_\lam)\right)\,.
\ee
Using the explicit formulas \eqref{eq:VertexOperatorsx} for the vertex operators representation of $x^\pm(z)$ we can also derive the useful commutation relations
\be
\label{eq:[xpm,e1]}
 [x^\pm(z),p_1] = -(1-q^{\pm1})z^{-1}x^\pm(z)\,,
 \hspace{30pt}
 \left[x^\pm(z),\frac{\partial}{\partial p_1}\right]
 = -(1-t^{\mp1})zx^\pm(z)\,.
\ee

The algebra $\DIM$ is know to admits an $\mathrm{SL}(2,\BZ)$ group of outer automorphisms that permute the generators among each other. The $S$ generator of this group acts as the so called \emph{Miki automorphism} that, in the level-one Fock representation, sends $p_1^\perp$ to $x^+_0$, $x^+_0$ to $p_1$, $p_1$ to $x^-_0$ and $x^-_0$ to $p_1^\perp$ (up to multiplicative constants).
On the other hand, the $T$ generator acts as conjugation by the framing operator $\sfT$ defined in \eqref{eq:framing}. In fact, we have the following.
\begin{lemma}
The framing operator $\sfT$ satisfies the relations \cite[Proposition~1.5]{Garsia2001}
\label{lemma:T}
\be
\label{eq:idTe1T-}
 \sfT^{-1}x^+_{-1}\sfT = (1-t^{-1})\, p_1\,,
\ee
\be
\label{eq:idT-e1T}
 \sfT x^-_{-1}\sfT^{-1} = (1-t)\,  p_1\,,
\ee
\be
 \sfT x^+_{1}\sfT^{-1}
 = -(1-q)\, \dfrac{\partial}{\partial p_1}\,,
\ee
\be
 \sfT^{-1}x^-_{1}\sfT
 = -(1-q^{-1})\,\dfrac{\partial}{\partial p_1}\,.
\ee
\end{lemma}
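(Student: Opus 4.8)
The plan is to verify each of the four identities directly on the Macdonald basis $\{P_\lam\}$, exploiting that $\sfT$ and the zero-modes $x^\pm_0$ are simultaneously diagonal there. The main structural observation is that the degree-raising and degree-lowering modes $x^\pm_{\mp1}$ carry no independent information: the commutation relations \eqref{eq:[xpm,e1]} realize them as commutators of the \emph{diagonal} operators $x^\pm_0$ with the elementary Heisenberg generators $p_1$ and $\partial/\partial p_1$. Reading off the appropriate power of $z$ (using the stated convention $x^\pm(z)=\sum_k z^k x^\pm_{-k}$) gives $[x^\pm_0,p_1]=-(1-q^{\pm1})\,x^\pm_{-1}$ and $[x^\pm_0,\partial/\partial p_1]=-(1-t^{\mp1})\,x^\pm_1$. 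Once the problem is phrased through these commutators, everything collapses to how $p_1$ (adding a box) and $\partial/\partial p_1$ (removing a box) interact with the content-valued eigenvalues $x_\lam$ and $\sfT_\lam$.

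Concretely, for \eqref{eq:idTe1T-} I would write $x^+_{-1}=-\tfrac{1}{1-q}[x^+_0,p_1]$ and act on $P_\lam$. Using $x^+_0 P_\mu=x_\mu P_\mu$ together with the Pieri rule $p_1 P_\lam=\sum_{\nu\in\mathrm{A}(\lam)}\psi_{\nu/\lam}P_\nu$, the commutator produces the eigenvalue gap $x_\nu-x_\lam$, which by \eqref{eq:symbols2} equals $-(1-q)(1-t^{-1})\chi_{\nu/\lam}$, the content of the added box $\nu/\lam$. The factor $(1-q)$ cancels, leaving $x^+_{-1}P_\lam=(1-t^{-1})\sum_{\nu}\chi_{\nu/\lam}\psi_{\nu/\lam}P_\nu$. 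Conjugating by $\sfT$ and using the crucial relation $\sfT_\nu/\sfT_\lam=\chi_{\nu/\lam}$ (adding a box multiplies the framing eigenvalue by that box's content) makes the content factors cancel exactly, so that $\sfT^{-1}x^+_{-1}\sfT\cdot P_\lam=(1-t^{-1})\,p_1 P_\lam$. Since this holds for every $\lam$, identity \eqref{eq:idTe1T-} follows.

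The third identity I would treat by the mirror computation with the second commutator: $x^+_1=-\tfrac{1}{1-t^{-1}}[x^+_0,\partial/\partial p_1]$, where $\partial/\partial p_1\propto p_1^\perp$ removes a single box (immediate from adjointness of $p_1$ to the single-box Pieri rule). The same bookkeeping, now with $x_\lam-x_\nu=(1-q)(1-t^{-1})\chi_{\nu/\lam}$ and $\sfT_\lam/\sfT_\nu=\chi_{\nu/\lam}^{-1}$ for a removed box, yields $\sfT x^+_{1}\sfT^{-1}=-(1-q)\,\partial/\partial p_1$. The two remaining identities, \eqref{eq:idT-e1T} and the last one, then require no further computation: I would obtain them by applying the involution $(q,t)\mapsto(q^{-1},t^{-1})$, under which $P_\lam$ is invariant (a standard symmetry of Macdonald polynomials), $\sfT$ is sent to $\sfT^{-1}$ since $\sfT_\lam\mapsto\sfT_\lam^{-1}$, the currents are swapped $x^+\leftrightarrow x^-$ directly from \eqref{eq:VertexOperatorsx}, and $p_1,\partial/\partial p_1$ are fixed. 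This sends \eqref{eq:idTe1T-} to \eqref{eq:idT-e1T} and the third identity to the last one.

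I do not expect a serious obstacle: the statement is precisely that the $T$-generator of the $\mathrm{SL}(2,\BZ)$ automorphism acts by conjugation by $\sfT$, and the proof is a uniform one-line content cancellation once set up. The only points demanding genuine care are the mode-index and sign conventions when reading commutators off \eqref{eq:[xpm,e1]}, and the verification that $\partial/\partial p_1$ connects $P_\nu$ only to those $P_\lam$ with $\lam\subset\nu$ differing by a single box. I would justify the latter by adjointness to the Pieri rule rather than by computing an explicit removal coefficient, so that the unknown coefficients never appear and simply pass through the $\sfT$-conjugation unchanged; this keeps all four proofs free of combinatorial input beyond the content relation $\sfT_\nu/\sfT_\lam=\chi_{\nu/\lam}$.
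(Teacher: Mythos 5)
Your argument is correct, and every step checks out against the paper's conventions: with $x^\pm(z)=\sum_k z^k x^\pm_{-k}$, relation \eqref{eq:[xpm,e1]} does give $[x^\pm_0,p_1]=-(1-q^{\pm1})\,x^\pm_{-1}$ and $[x^\pm_0,\partial/\partial p_1]=-(1-t^{\mp1})\,x^\pm_1$; the bookkeeping $x_\nu-x_\lam=-(1-q)(1-t^{-1})\chi_{\nu/\lam}$ from \eqref{eq:symbols2} together with $\sfT_\nu/\sfT_\lam=\chi_{\nu/\lam}$ from \eqref{eq:symbols1} makes the conjugation collapse exactly as you describe, and the involution $(q,t)\mapsto(q^{-1},t^{-1})$ (which fixes $P_\lam$, sends $\sfT\mapsto\sfT^{-1}$, and swaps $x^+\leftrightarrow x^-$ by inspection of \eqref{eq:VertexOperatorsx}) legitimately transports the two ``$+$'' identities to the two ``$-$'' ones. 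The comparison point is that the paper gives no proof of this lemma at all: it quotes the relations from \cite[Proposition~1.5]{Garsia2001}, where they are established within the plethystic formalism for the nabla/framing operator. Your route is therefore genuinely different in that it is self-contained inside the paper's own toolkit---diagonal action of $x^\pm_0$ and $\sfT$ on Macdonald functions, the Pieri rule, and adjointness for box removal---and it exposes the identities as pure content-cancellation statements, which is exactly the mechanism that makes $\sfT$ realize the $T$-generator of $\mathrm{SL}(2,\BZ)$; what the citation buys instead is brevity and an anchor to the symmetric-function literature. If you write this up, make two small points explicit: (i) the one-line adjointness argument that $\partial/\partial p_1\propto p_1^\perp$ connects $P_\nu$ only to partitions obtained by removing a single box, since this is what guarantees the commutator produces no other terms; and (ii) the invariance $P_\lam(\bx;q^{-1},t^{-1})=P_\lam(\bx;q,t)$, which is standard (Macdonald, Ch.~VI) but is nowhere stated in the paper, so it needs its own citation.
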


Another important operator in the theory of Macdonald functions, is the delta operator defined in \eqref{eq:DeltaOp} which can be realized through linear combinations of horizontal generators of the algebra.
One of the main properties that it satisfies is the so called \emph{five-term relation} of \cite{Garsia:2018fiv} (see also \cite{Zenkevich:2021aus,Hu:2023dno,Hu2024:pro,Dali:2024mac,Bourgine:2025fwy}).
\begin{theorem}[Garsia--Mellit]
The delta operator $\Delta^{+}(z)$ satisfies the five-term relation
\be
\label{eq:5t}
 \exp\left(-\sum_{k\geq1} \frac{w^k p_k}{k(1-q^k)} \right)
 \Delta^{+}(z)
 \exp\left(\sum_{k\geq1} \frac{w^k p_k}{k(1-q^k)} \right)
 \Delta^{+}(z)^{-1}
 =
 \sfT\,
 \exp\left(\sum_{k\geq1} \frac{(-zw)^k p_k}{k(1-q^k)} \right)
 \sfT^{-1}\,.
\ee
\end{theorem}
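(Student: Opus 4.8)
The plan is to prove the statement as an operator identity on $\Lambda$ by controlling every factor on the Macdonald basis, on which $\Delta^{+}(z)$ and $\sfT$ are diagonal with eigenvalues $\Delta^{+}_\lam(z)=\prod_{\sAbox\in\lam}(1-z\chi_\sAbox)$ and $\sfT_\lam$, while $\Gamma(w):=\exp\bigl(\sum_{k\geq1}\tfrac{w^k p_k}{k(1-q^k)}\bigr)$ is multiplication by an explicit symmetric function. Clearing the two outer factors, the claim is equivalent to $\Delta^{+}(z)\,\Gamma(w)\,\Delta^{+}(z)^{-1}=\Gamma(w)\,\sfT\,\Gamma(-zw)\,\sfT^{-1}$. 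First I would record the Cauchy expansion $\Gamma(w)\cdot1=\sum_\lam b_\lam\,P_\lam\bigl(p_k=\tfrac{w^k}{1-t^k}\bigr)P_\lam$, obtained from \eqref{eq:Cauchy} by specialising one tensor slot, together with the homogeneity $P_\lam(p_k=\tfrac{w^k}{1-t^k})=w^{|\lam|}P_\lam(p_k=\tfrac{1}{1-t^k})$.

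Second, I would treat the vacuum case $f=P_\vac$ as an anchor. Using $\Delta^{+}(z)^{-1}\cdot1=1$ and $\sfT^{\pm1}\cdot1=1$, both sides of the cleared identity collapse to the same element: the right-hand side gives $\Gamma(w)\cdot\sfT\,\Gamma(-zw)\cdot1$, where $\sfT\,\Gamma(-zw)\cdot1=\exp\bigl(-\sum_{k\geq1}\tfrac{(zw)^kp_k}{k(1-q^k)}\bigr)$ by the framing identity \eqref{eq:BHid} after the rescaling $(-zw)^{\hD}$; while the left-hand side gives $\Delta^{+}(z)\,\Gamma(w)\cdot1$, which through the Cauchy expansion and the eigenvalue $\Delta^{+}_\lam(z)$ equals $\exp\bigl(\sum_{k\geq1}\tfrac{(w^k-(zw)^k)p_k}{k(1-q^k)}\bigr)$ precisely because of the specialisation formula $\Delta^{+}_\lam(z)=P_\lam(p_k=\tfrac{1-z^k}{1-t^k})/P_\lam(p_k=\tfrac{1}{1-t^k})$. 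The two series agree, so on the vacuum the five-term relation is exactly the known evaluation of $\Delta^{+}_\lam$; this pins down all normalisations but, since both operators raise degree, does not yet settle the full statement.

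Third, for the operator identity I would pass to matrix elements. Writing $[\Gamma(w)]_{\nu\mu}$ for the coefficient of $P_\nu$ in $\Gamma(w)\,P_\mu$ (a skew/Pieri coefficient computable from \eqref{eq:Cauchy}), the cleared identity becomes, for all $\mu,\nu$,
\be
 \frac{\Delta^{+}_\nu(z)}{\Delta^{+}_\mu(z)}\,[\Gamma(w)]_{\nu\mu}
 =\sfT_\mu^{-1}\sum_\sigma [\Gamma(-zw)]_{\sigma\mu}\,\sfT_\sigma\,[\Gamma(w)]_{\nu\sigma}\,.
\ee
I would establish this using the self-duality of the Macdonald Cauchy kernel, the specialisation formula for $\Delta^{+}_\lam$, and \eqref{eq:BHid} to absorb the $\sfT_\sigma$ weights. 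The main obstacle is exactly the sum over the intermediate partition $\sigma$: this is the genuine ``pentagon'' content of the relation, a Macdonald analogue of a $q$-binomial/branching identity, and it is what the vacuum check alone cannot see. An alternative route substitutes \eqref{eq:Delta_id}, $\Delta^{+}(z)=(-z)^{\hD}\sfT\Delta^{-}(z^{-1})$: a formal computation in which $\Delta^{-}(z^{-1})$ is commuted past $\Gamma(w)$ already reproduces the shape $\sfT\,\Gamma(-zw)\,\sfT^{-1}$, since the degree grading converts $w$ into $-zw$, so the entire nontrivial content is concentrated in the conjugation $\Delta^{-}(z^{-1})\,\Gamma(w)\,\Delta^{-}(z^{-1})^{-1}$, which must supply the remaining factor of $\Gamma(w)$. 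Either way the crux is a single conjugation of the creation operator by a Macdonald-diagonal operator, which I would finally settle either by the $\sigma$-summation above or by a direct normal-ordering in $\DIM$ using the structure factor $g(z)=G^{+}(z)/G^{-}(z)$.
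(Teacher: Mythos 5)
Your reductions are correct as far as they go: clearing the outer exponentials to get $\Delta^{+}(z)\,\Gamma(w)\,\Delta^{+}(z)^{-1}=\Gamma(w)\,\sfT\,\Gamma(-zw)\,\sfT^{-1}$ with $\Gamma(w):=\exp\bigl(\sum_{k\geq1}\frac{w^k p_k}{k(1-q^k)}\bigr)$, the Cauchy expansion of $\Gamma(w)\cdot1$, the vacuum computation via the principal specialisation of $\Delta^{+}_\lam$ together with \eqref{eq:BHid}, and the passage to matrix elements are all sound. But there is a genuine gap, which you flag yourself without closing: your identity
\be
 \frac{\Delta^{+}_\nu(z)}{\Delta^{+}_\mu(z)}\,[\Gamma(w)]_{\nu\mu}
 =\sfT_\mu^{-1}\sum_\sigma [\Gamma(-zw)]_{\sigma\mu}\,\sfT_\sigma\,[\Gamma(w)]_{\nu\sigma}
\ee
is not a lemma that ``follows from the self-duality of the Cauchy kernel, the specialisation formula and \eqref{eq:BHid}''; it is an exact restatement of the theorem. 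Those ingredients suffice precisely for $\mu=\vac$, your anchor, because there the intermediate sum is a vacuum Cauchy sum and \eqref{eq:BHid} --- which is an identity of elements of $\Lambda$ (vacuum vectors), not an operator identity --- lets you perform it in closed form. For general $\mu$ the coefficients $[\Gamma(w)]_{\nu\mu}$ are skew Macdonald polynomials at $p_k=w^k/(1-t^k)$, there is no analogue of \eqref{eq:BHid} for $\sfT\,\Gamma(-zw)\cdot P_\mu$, and the resulting branching identity is the actual content of Garsia--Mellit's result. The alternative route via \eqref{eq:Delta_id} has the same status: saying that the conjugation $\Delta^{-}(z^{-1})\,\Gamma(w)\,\Delta^{-}(z^{-1})^{-1}$ ``must supply the remaining factor $\Gamma(w)$'' is again the theorem, merely rewritten.

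The closing suggestion --- settling the crux ``by a direct normal-ordering in $\DIM$ using the structure factor $g(z)$'' --- would fail as stated. The operator $\Delta^{\pm}(z)$ is defined spectrally in \eqref{eq:DeltaOp}, i.e.\ diagonally on the Macdonald basis; it is not a free-field vertex operator, so there is no normal-ordering rule to apply against $\Gamma(w)$. Expressing $\Delta^{+}(z)$ through the $\DIM$ currents is itself a nontrivial matter (cf.\ the Remark following the theorem in the paper and the reference to Zenkevich there), and even granting such an expression, the exchange with $\Gamma(w)$ is not a routine two-vertex-operator computation. For context: the paper does not prove this statement at all, but imports it from \cite{Garsia:2018fiv}, where the proof is a substantial argument in Macdonald theory. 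What you have is a correct reformulation plus a correct degree-zero check; the theorem itself remains unproven in your proposal.
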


\begin{remark}
The five-term relation for the Fock representation is conjecturally a consequence of a more general five-term relation in the algebra itself. In particular, $\Delta^{+}(z)$ can be given an explicit expression in terms of the generators of the algebra (see \cite{Zenkevich:2021aus}). This might hint that the techniques below, that heavily rely on this identity might be potentially applicable in a more general setup.
\end{remark}

\begin{corollary}\label{lemma:Delta}
From the five-term relation we obtain the identities
\be\label{eq:5t:e1}
 \Delta^{+}(z)\,p_1\,\Delta^{+}(z)^{-1} = p_1-z\dfrac{x_{-1}^{+}}{1-t^{-1}}\,,
\ee
\be\label{eq:5t:e2}
 \Delta^{+}(z)\,q^{-1}x^-_{-1}\,\Delta^{+}(z)^{-1} = q^{-1}x^-_{-1}+z(1-t^{-1})(t/q)p_1\,,
\ee
\be\label{eq:5t:e3}
 \Delta^{+}(z)\,q^{-2}x^-_{-2}\,\Delta^{+}(z)^{-1}
 = q^{-2}x^-_{-2}-\frac{z}{1-q/t} h_2\left((1-t^{-k})(1-(t/q)^{k})p_k\right)
 -z^2(t/q)x^+_{-2}\,,
\ee
\be\label{eq:5t:e4}
 \Delta^{+}(z)^{-1}\, p_1^\perp \,\Delta^{+}(z)
 = p_1^\perp
 + z\frac{x_1^{+}}{1-t}\,,
\ee
\be\label{eq:5t:e5}
 \Delta^{-}(z^{-1})\, p_1 \,\Delta^{-}(z^{-1})^{-1}
 = p_1
 - z^{-1}\frac{x_{-1}^{-}}{1-t}\,.
\ee
\end{corollary}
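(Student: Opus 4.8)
The plan is to single out \eqref{eq:5t:e1} as the fundamental identity, derive it directly from the Garsia--Mellit five-term relation \eqref{eq:5t}, and then bootstrap the other four from it. Writing $A(w)=\sum_{k\geq1}\frac{w^k p_k}{k(1-q^k)}$, the five-term relation reads $e^{-A(w)}\Delta^{+}(z)\,e^{A(w)}\Delta^{+}(z)^{-1}=\sfT\,e^{A(-zw)}\sfT^{-1}$. Since $\Delta^{+}(z)\,e^{A(w)}\Delta^{+}(z)^{-1}=\exp\!\big(\Delta^{+}(z)A(w)\Delta^{+}(z)^{-1}\big)$, I would take logarithms and match the coefficient of $w^1$ on both sides (all Baker--Campbell--Hausdorff corrections are of order $w^2$). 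This gives $\Delta^{+}(z)\,p_1\,\Delta^{+}(z)^{-1}-p_1=-z\,\sfT\,p_1\,\sfT^{-1}$, and the relation \eqref{eq:idTe1T-} of Lemma~\ref{lemma:T}, namely $\sfT p_1\sfT^{-1}=\frac{x^{+}_{-1}}{1-t^{-1}}$, turns this into \eqref{eq:5t:e1}.

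Identity \eqref{eq:5t:e4} I would obtain as the Macdonald-adjoint of \eqref{eq:5t:e1}. Because $\Delta^{+}(z)$ is diagonal on the orthogonal basis $\{P_\lam\}$ it is self-adjoint, $\Delta^{+}(z)^\perp=\Delta^{+}(z)$, while the explicit vertex-operator form \eqref{eq:VertexOperatorsx} together with $p_k^\perp=k\frac{1-q^k}{1-t^k}\partial_{p_k}$ gives $(x^{+}(z))^\perp=x^{+}(tz^{-1})$, hence $(x^{+}_{-1})^\perp=t^{-1}x^{+}_{1}$; applying $(\cdot)^\perp$ to \eqref{eq:5t:e1} and using $\frac{1-t^{-1}}{1-t}=-t^{-1}$ reproduces \eqref{eq:5t:e4}. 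Identity \eqref{eq:5t:e5} I would get from \eqref{eq:5t:e1} by the parameter inversion $(q,t)\mapsto(q^{-1},t^{-1})$, under which $\Delta^{+}(z)\mapsto\Delta^{-}(z)$ and $x^{+}(z)\mapsto x^{-}(z)$ (the first visible from the eigenvalue $\prod_{\sAbox\in\lam}(1-z\chi_\sAbox)\mapsto\prod_{\sAbox\in\lam}(1-z\chi_\sAbox^{-1})$, the second from \eqref{eq:VertexOperatorsx}), followed by $z\mapsto z^{-1}$.

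For the two remaining identities I would descend in the mode index using the commutators \eqref{eq:[xpm,e1]}, which in mode form read $[x^{-}_n,p_1]=-(1-q^{-1})x^{-}_{n-1}$ and $[x^{+}_n,p_1]=-(1-q)x^{+}_{n-1}$, together with the fact that $x^{-}_0$ commutes with $\Delta^{+}(z)$ (both diagonal on $\{P_\lam\}$). Thus $x^{-}_{-1}=\frac{1}{1-q^{-1}}[p_1,x^{-}_0]$, and conjugating by $\Delta^{+}(z)$ while inserting \eqref{eq:5t:e1} reduces \eqref{eq:5t:e2} to the $\DIM$ commutator $[x^{+}_{-1},x^{-}_0]$; extracting the relevant $z^1w^0$-mode from the current relation $[x^{+}(z),x^{-}(w)]$ in \eqref{eq:comm} picks out $-\psi^{-}_{-1}$, and \eqref{eq:VertexOperatorsPsi} gives $\psi^{-}_{-1}=(1-t^{-1})(1-t/q)p_1$, which after simplification yields \eqref{eq:5t:e2}. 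Identity \eqref{eq:5t:e3} follows in the same spirit from $x^{-}_{-2}=\frac{1}{1-q^{-1}}[p_1,x^{-}_{-1}]$: conjugating and inserting \eqref{eq:5t:e1} and \eqref{eq:5t:e2} produces a four-term commutator whose pieces are $[p_1,x^{-}_{-1}]$, $[x^{+}_{-1},p_1]\propto x^{+}_{-2}$, and $[x^{+}_{-1},x^{-}_{-1}]$. The last is the decisive one: the corresponding mode of \eqref{eq:comm} yields $[x^{+}_{-1},x^{-}_{-1}]\propto\psi^{-}_{-2}$, and reading off the $z^2$-coefficient of $\psi^{-}(z)$ from \eqref{eq:VertexOperatorsPsi} one recognizes precisely $\psi^{-}_{-2}=h_2\big((1-t^{-k})(1-(t/q)^k)p_k\big)$, which is the origin of the $h_2$ term in \eqref{eq:5t:e3}.

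The main obstacle I anticipate is the careful bookkeeping of normalizations in the $\DIM$ current commutator \eqref{eq:comm}: extracting $[x^{+}_{-1},x^{-}_0]$ and $[x^{+}_{-1},x^{-}_{-1}]$ from the delta-function right-hand side in this paper's conventions (cf.\ the footnote shifting $x^{-}$ and $\psi^{\pm}$) is where sign and $q,t$-power errors are most likely, and it determines the overall factors in \eqref{eq:5t:e2} and \eqref{eq:5t:e3}. The secondary delicate point is the identification of $\psi^{-}_{-2}$ with the plethystic $h_2$; this is a short check once the coefficients $(1-t^{-k})(1-t^kq^{-k})$ of $\log\psi^{-}(z)$ are expanded against $h_2=\tfrac12(p_1^2+p_2)$, but it is precisely the step that makes \eqref{eq:5t:e3} the heaviest of the five and the one most worth verifying against the stated right-hand side (noting $q/t=qt^{-1}$ to match $\tfrac{1}{1-q/t}$).
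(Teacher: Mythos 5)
Your proposal is correct, and its overall skeleton coincides with the paper's proof: \eqref{eq:5t:e1} is extracted from the degree-one part of the five-term relation \eqref{eq:5t} and closed with Lemma~\ref{lemma:T} (the paper expands in the $h_n$ basis rather than taking logarithms, but this is the same computation at order $w^1$; note the paper's citation of \eqref{eq:idT-e1T} there is a typo for \eqref{eq:idTe1T-}, which is the relation you correctly invoke); \eqref{eq:5t:e4} is the Macdonald adjoint, \eqref{eq:5t:e5} is the $(q,t)\mapsto(q^{-1},t^{-1})$ inversion, and \eqref{eq:5t:e3} is obtained, exactly as you propose, by writing $x^-_{-2}\propto[p_1,x^-_{-1}]$, conjugating inside the commutator, and recognizing $[x^+_{-1},x^-_{-1}]\propto\psi^-_{-2}=h_2\big((1-t^{-k})(1-(t/q)^k)p_k\big)$ from \eqref{eq:comm} and \eqref{eq:VertexOperatorsPsi}. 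The one genuine divergence is \eqref{eq:5t:e2}: the paper writes $x^-_{-1}=(1-t)\,\sfT^{-1}p_1\sfT$, commutes the two diagonal operators $\sfT$ and $\Delta^+(z)$, and applies \eqref{eq:5t:e1} inside, so no current-algebra computation is needed there; you instead write $x^-_{-1}=\frac{1}{1-q^{-1}}[p_1,x^-_0]$ and evaluate $[x^+_{-1},x^-_0]\propto\psi^-_{-1}=(1-t^{-1})(1-t/q)p_1$ from the delta-function relation in \eqref{eq:comm}. Both are valid (I checked your route reproduces $\Delta^+(z)x^-_{-1}\Delta^+(z)^{-1}=x^-_{-1}-z(1-t)p_1$ with the paper's conventions, including the level $c=1$ shift in the delta-function arguments); yours is more uniform, treating \eqref{eq:5t:e2} and \eqref{eq:5t:e3} by one mechanism, while the paper's framing-operator trick confines the normalization-sensitive $[x^+,x^-]$ bookkeeping — which you rightly flag as the main hazard — to the single identity \eqref{eq:5t:e3}.
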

\begin{proof}
The relation \eqref{eq:5t:e1} is obtained by rewriting the five-term relation as
\be
 \Delta^{+}(z)
 h_n\left(\frac{p_k}{1-q}\right)
 \Delta^{+}(z)^{-1}
 = \sum_{i+j=n} (-z)^j
 h_i\left(\frac{p_k}{1-q^k}\right)
 \sfT\,
 h_j\left(\frac{p_k}{1-q^k}\right)
 \sfT^{-1}\,,
\ee
where we moved the first exponential in the l.h.s.\ (i.e.\ $T_{1,0}(w)^{-1}$) to the r.h.s.\ and we expanded to degree $n$ in powers of $w$ both sides.
By fixing $n=1$, and then using \eqref{eq:idT-e1T} to simplify the term $\sfT\, p_1 \sfT^{-1}$, we find \eqref{eq:5t:e1}.
Equation \eqref{eq:5t:e4} follows from \eqref{eq:5t:e1} by taking the adjoint of each term w.r.t.\ the Macdonald inner product.
Similarly, equation \eqref{eq:5t:e5} follows from \eqref{eq:5t:e1} by sending the parameters $q,t$ to their inverses $q^{-1},t^{-1}$.

The relation \eqref{eq:5t:e2} is obtained by observing that if we represent $x^-_{-1}$ as $\sfT^{-1}p_1\sfT$, then we can commute the action of $\sfT$ and $\Delta^{+}(z)$ (since they are both diagonal). Then we can use the previous relation to write
\be
\ba
 \Delta^{+}(z) x^-_{-1} \Delta^{+}(z)^{-1} &= (1-t)\Delta^{+}(z) \sfT^{-1} p_1\sfT \Delta^{+}(z)^{-1} \\
 &= (1-t) \sfT^{-1} \Delta^{+}(z) p_1 \Delta^{+}(z)^{-1}\sfT \\
 &= (1-t) \sfT^{-1} \Big(p_1-z\frac{x_{-1}^{+}}{1-t^{-1}}\Big) \sfT \\
 &= (1-t) \sfT^{-1} p_1 \sfT -z(1-t) \sfT^{-1} \frac{x_{-1}^{+}}{1-t^{-1}} \sfT \\
 &= x^-_{-1} -z(1-t) p_1\,.
\ea
\ee

Finally, we consider the l.h.s.\ of \eqref{eq:5t:e3}.
In order to compute $\Delta^{+}(z) x^-_{-2} \Delta^{+}(z)^{-1}$, we first represent $x^-_{-2}$ as the commutator $[p_1,x^-_{-1}]$ (as in \eqref{eq:[xpm,e1]}) and then we bring conjugation by $\Delta^{+}(z)$ inside the commutator:
\be
\ba
 &\Delta^{+}(z) q^{-2} x^-_{-2} \Delta^{+}(z)^{-1}
 = \frac{q^{-2}}{1-q^{-1}} \Delta^{+}(z) [p_1,x^-_{-1}] \Delta^{+}(z)^{-1}\\
 &= \frac{q^{-2}}{1-q^{-1}} \left[\Delta^{+}(z) p_1 \Delta^{+}(z)^{-1} , \Delta^{+}(z)x^-_{-1} \Delta^{+}(z)^{-1} \right] \\
 &= \frac{q^{-2}}{1-q^{-1}} \left[ p_1-z\frac{x_{-1}^{+}}{1-t^{-1}} , x^-_{-1} -z(1-t) p_1 \right] \\
 &= \frac{q^{-2}}{1-q^{-1}}\left( [p_1,x^-_{-1}] -z(1-t) [p_1,p_1]
 -\frac{z}{1-t^{-1}} [ x_{-1}^{+} , x^-_{-1} ]
 -t z^2 [ x_{-1}^{+} , p_1 ] \right) \\
 &=  q^{-2} x^-_{-2}
 -\frac{z}{1-q/t} h_2\left((1-t^{-k})(1-(t/q)^{k})p_k\right)
 - z^2 (t/q) x_{-2}^{+} \\
\ea
\ee
where we used the commutation relations \eqref{eq:[xpm,e1]} and \eqref{eq:comm}.
\end{proof}

\section{Relation to multivariable Al-Salam--Carlitz polynomials}
\label{app:AlSalamCarlitz}

The Al-Salam--Carlitz polynomials $U^{(a)}_\mu(\bx)$ correspond to a $q,t$-deformation of the classical Hermite
polynomials in the sense that they form an orthogonal basis w.r.t.\ the $q,t$-Gaussian weight function
\be
 w_U(x) = \frac{(qx;q)_\infty(qx/a;q)_\infty}
 {(q;q)_\infty(a;q)_\infty(q/a;q)_\infty}\,.
\ee
Clearly this is related to the weight $w(x)$ in \eqref{eq:GeneralNfmodel} for $\Nf=2$, $\Nbf=0$, where one specializes $u_1=1$, $u_2=1/a$, $\ell=0$ and $r=1$. As a result, it follows that the inhomogeneous functions $\sfW^{2\bar0}_\mu(\bx)$ defined in \eqref{eq:def-WZ} should match with the Al-Salam--Carlitz polynomials after restriction to a finite number of variables and specializations of the parameters $u_{1,2}$. In this section, we will show that indeed the two coincide, by showing that they both provide a basis of eigenfunctions for the same $q$-difference operator.

In \cite{Baker:1997mul}, it is shown that the polynomials $U^{(a)}_\mu(\bx)$ are eigenfunctions of the operator
\be
\label{eq:ASC-op}
 \mathcal{H} = u^{-}_0-(1+a)[E_0,u^{-}_0]+a[E_0,[E_0,u^{-}_0]]\,,
\ee
where
\be
 E_0
 = \sum_{i=1}^N \prod_{j\neq i}\frac{tx_i-x_j}{x_i-x_j} D^q_{x_i}
 = \frac1{1-q}\left(p_1(\bx^{-1}) - u^+_{1}\right)
\ee
and we introduced a generalization of the operators $u^\pm_0$,
\be
 u^\pm_{-k} := \sum_{i=1}^N x_i^k\prod_{j\neq i}\frac{t^{\pm1}x_i-x_j}{x_i-x_j}
 q^{\pm x_i\frac{\partial}{\partial x_i}}\,,
\ee
which are homogeneous of degree $k$ in the $\bx$-variables.
The eigenvalue equation then reads
\be
 \mathcal{H}\cdot U^{(a)}_\mu(\bx) = u^\vee_\mu\, U^{(a)}_\mu(\bx)
\ee
with eigenvalue $u_\mu^\vee$ as in \eqref{eq:symbols3}.

In order to match with the eigenvalue equation in \eqref{eq:eigenop-WZ-}, we first need to ``\emph{bosonize}'' the operator $\mathcal{H}$, i.e.\ we have to take the stable limit $N\to\infty$. In this limit, one has
\be
 \lim_{N\to\infty} u^\pm_{-k}
 = \frac{\delta_{k,0}-Q^{\pm1}x^\pm_{-k}}{1-t^{\pm1}}
\ee 
and similarly, $\lim_{N\to\infty}p_1(\bx^{-1})=p_1^\perp$.
We can now compute the commutators in the r.h.s.\ of \eqref{eq:ASC-op} directly using the $\DIM$ algebra relations. The relevant commutation relations are
\be
\ba
 {}[p_1^\perp, x^-_0] &= (1-q) x^-_1\,, \\
 [x_1^+, x^-_0] &= -q^{-1}(1-q)(1-t)^2 p_1^\perp\,, \\
 [p_1^\perp, x^-_1] &= (1-q) x^-_2\,, \\
 [x_1^+, x^-_1] &= -\frac{(1-q)(1-t)}{(1-q/t)}
 h_2\left((1-t^{-k})(1-(t/q)^k)p_k^\perp\right)\,, \\
 [x^+_1, p_1^\perp] &= (1-q) t^{-1} x^+_2\,.
\ea
\ee
We can then write
\begin{multline}
\label{eq:limH-umu1}
 \lim_{N\to\infty}(\mathcal{H}-u^\vee_\mu)
 = \frac{Q^{-1}}{1-t^{-1}} \Big\{
 (x^\vee_\mu-x^{-}_0)
 + (1+a)
 \left(x^-_1+Q(1-t^{-1})(t/q)p_1^\perp\right)\\
 -a\left(x^{-}_2-\frac{Q}{1-q/t}h_2\left((1-t^{-k})(1-(t/q)^k)p_k^\perp\right)
 -Q^2(t/q)t^{-2}x^{+}_2\right)
 \Big\}
\end{multline}
and
\begin{multline}
\label{eq:limH-umu2}
 \lim_{N\to\infty}(\mathcal{H}-u^\vee_\mu)^\perp
 = \frac{Q^{-1}}{1-t^{-1}} \Big\{
 (x^\vee_\mu-x^{-}_0) + (1+a)\left(q^{-1}x^{-}_{-1}+Q(1-t^{-1})(t/q)p_1\right)\\
 -a\left(q^{-2}x^{-}_{-2}-\frac{Q}{1-q/t}h_2\left((1-t^{-k})(1-(t/q)^k)p_k\right)
 -Q^2(t/q)x^{+}_{-2}\right)
 \Big\}
\end{multline}
where we used that $(x^+_{k})^\perp=t^{k}x^+_{-k}$ and $(x^-_{k})^\perp=q^{-k}x^-_{-k}$.
It is now straightforward to check that both \eqref{eq:limH-umu1} and \eqref{eq:limH-umu2} have one-dimensional kernels corresponding to the functions $\sfW^{2\bar0}_\mu$ and $\sfZ^{2\bar0}_\mu$, respectively, after specializing to $u_1=1$ and $u_2=1/a$.
This shows that the stable limit of the polynomials $U^{(a)}_\mu(\bx)$ coincides with the symmetric functions $\sfW^{2\bar0}_\mu(\bp)$, since they both provide eigenbasis for the same operator, and they satisfy the same normalization condition
\be
 \langle P_\mu,U^{(a)}_\mu\rangle_{q,t} = b_\mu^{-1} = \langle P_\mu,\sfW^{2\bar0}_\mu\rangle_{q,t}\,.
\ee
Moreover, the eigenoperator \eqref{eq:limH-umu2} for $\mu=\vac$, coincides with the recursion operator $\hat{\sfA}^{2\bar0}$ in \eqref{eq:D2}.

We can even directly compare the $q$-exponential operator formula of \cite{Baker:1997mul} with the gauge transformation operator $\hG_{2\bar{0}}$ in \eqref{eq:G20}. Following the notation used in \cite{Baker:1997mul}, we have the function
\be
 \rho_a(x;q) = (qx;q)_\infty(qx/a;q)_\infty
\ee
which we can use to write
\be
 \lim_{N\to\infty} \prod_{i=1}^N \frac{1}{\rho_a(x_i;q)}
 = \exp\left(\sum_{k\geq1}\frac{(1+a^k)p_k}{k(1-q^k)}\right)
 = \Delta^{+}(Q)^{-1}\cdot\hG_{2\bar{0}}\cdot\Delta^{+}(Q)
\ee
According to \cite[(3.9)]{Baker:1997mul}, the $q$-exponential operator, i.e.\ the finite-variable version of the operator $(\hG_{2\bar{0}}^\perp)^{-1}$, is given by
\be
 \prod_{i=1}^N \rho_a(D_i;q)
 = \exp\left(-\sum_{k\geq1}\frac{(1+a^k)}{k(1-q^k)}\sum_{i=1}^N D_i^k\right)
 \xrightarrow{N\to\infty} (\hG_{2\bar{0}}^\perp)^{-1}
\ee
where $D_i$ are the (A-type) $q$-Dunkl operators. Comparing these two formulas, we obtain
\be
 \lim_{N\to\infty} f(D_1,\dots,D_N)
 = \Delta^{+}(Q)^{-1}\cdot f^\perp \cdot\Delta^{+}(Q)
\ee
for any symmetric function $f\in\Lambda$.\footnote{For $f=p_1$, this is the operator in \eqref{eq:5t:e4}.} Defining the hypergeometric function
\be
 {}_0\mathcal{F}_0(\bx,\by) := \sum_\lam b_\lam \frac{P_\lam(p_k=\frac1{1-t^k})}
 {P_\lam(p_k=\frac{1-Q^k}{1-t^k})} P_\lam(\bx) P_\lam(\by)
 = \left.\Delta^{+}(Q)^{-1}\right|_{(\bx)}\cdot\exp\left(\sum_{k\geq1}\frac{1-t^k}{1-q^k}\frac{p_k(\bx)p_k(\by)}{k}\right)
\ee
where $\left.\Delta^{+}(Q)\right|_{(\bx)}$ is the delta operator \eqref{eq:DeltaOp} acting on the $\bx$-variables only,
we then obtain
\be
\ba
 \lim_{N\to\infty}\left.f(D_1,\dots,D_N)\right|_{(\bx)}\cdot {}_0\mathcal{F}_0(\bx,\by)
 &= \left.\Delta^{+}(Q)^{-1}\cdot f^\perp\right|_{(\bx)} \cdot
 \exp\left(\sum_{k\geq1}\frac{1-t^k}{1-q^k}\frac{p_k(\bx)p_k(\by)}{k}\right) \\
 &= \left.\Delta^{+}(Q)^{-1}\right|_{(\bx)}\cdot f(\by) \cdot
 \exp\left(\sum_{k\geq1}\frac{1-t^k}{1-q^k}\frac{p_k(\bx)p_k(\by)}{k}\right) \\
 &= f(\by) \, {}_0\mathcal{F}_0(\bx,\by)
\ea
\ee
which is the stable limit of \cite[(3.8)]{Baker:1997mul}.

It is now evident that for any other model we can define the associated hypergeometric function
\be
 \mathcal{F}^w(\bx,\by) :=
 \left.(\hat{\sfC}^w)^{-1}\right|_{(\bx)}\cdot\exp\left(\sum_{k\geq1}\frac{1-t^k}{1-q^k}\frac{p_k(\bx)p_k(\by)}{k}\right)
\ee
and then superintegrability can be equivalently restated as the identity
\be
 \ev{ \mathcal{F}^w(\bx,\by) }^w = \exp\left(\sum_{k\geq1}\frac{1-t^k}{1-q^k}\frac{\varphi^w_k\, p_k(\by)}{k}\right)\,.
\ee

\bibliographystyle{utphys}
\bibliography{references}

\end{document}